\keywords{Concurrency models, higher dimensional automata, ST-structures, non-interleaving, expressiveness}
\title{Sculptures in Concurrency}
\author[U.\ Fahrenberg]{Uli Fahrenberg\rsuper{a}}%
\address{\lsuper{a}{\'E}cole Polytechnique, Palaiseau, France}
\author[C.\ Johansen]{Christian Johansen\rsuper{b}}%
\address{\lsuper{b}Norwegian University of Science and Technology, Gj\o{}vik, Norway}
\author[C.A.\ Trotter]{Christopher A. Trotter\rsuper{c}}%
\address{\lsuper{c}Institute of Informatics, University of Oslo, Oslo, Norway}
\author[K.\ Ziemia{\'n}ski]{Krzysztof Ziemia{\'n}ski\rsuper{d}}%
\address{\lsuper{d}Faculty of Mathematics, Informatics and Mechanics, University of Warsaw, Warsaw, 
Poland}
\begin{document}

\setcounter{footnote}0

\begin{abstract}
  We give a formalization of Pratt's intuitive sculpting process for
  higher-dimensional automata (HDA). 
Intuitively, an HDA is a sculpture if it can be embedded in (\ie~sculpted from) a single higher dimensional cell (hypercube).
A first important result of this paper is that not all HDA can be sculpted, exemplified through several natural acyclic HDA, one being the famous ``broken box'' example of van Glabbeek. Moreover, we show that even the natural operation of unfolding is completely unrelated to sculpting, e.g., there are sculptures whose unfoldings cannot be sculpted.
We investigate the expressiveness of sculptures, as a proper subclass of HDA, by showing them to be equivalent to regular ST-structures (an event-based counterpart of HDA) and to (regular) Chu spaces over 3 (in their concurrent interpretation given by Pratt).
We believe that our results shed new light on the intuitions behind sculpting as a method of modeling concurrent behavior, showing the precise reaches of its expressiveness. 
Besides expressiveness, we also develop an algorithm to decide whether an HDA can be sculpted.
More importantly, we show that sculptures are equivalent to Euclidean cubical complexes (being the geometrical counterpart of our combinatorial definition), which include the popular PV models used for deadlock detection. This exposes a close connection between
  geometric and combinatorial models for concurrency which may be of
  use for both areas.
\end{abstract}

\maketitle

\section{Introduction}

In approaches to non-interleaving concurrency, more than one event may
happen simultaneously.  There is a plethora of formalisms for modeling
and analyzing such concurrent systems, \eg~Petri
nets~\cite{book/Petri62}, event
structures~\cite{DBLP:journals/tcs/NielsenPW81}, configuration
structures~\cite{DBLP:journals/tcs/GlabbeekP09, GlabbeekP95config},
or more recent variations such as dynamic event
structures~\cite{DBLP:conf/forte/ArbachKPN15} or
ST-structures~\cite{Johansen16STstruct, P12turing}.
They all share the idea of differentiating between concurrent and
interleaving executions; \ie~in CCS notation~\cite{book/Milner89},
$a\,|\,b$ is not the same as $ a. b+ b. a$.

In~\cite{DBLP:journals/tcs/Glabbeek06}, van~Glabbeek shows that (up to
history-preserving bisimilarity) \emph{higher-dimensional automata}
(HDA), introduced by Pratt and van~Glabbeek
in~\cite{DBLP:conf/popl/Pratt91, Glabbeek91-hda}, encompass all other
commonly used models for concurrency.
However, their generality make HDA  
quite difficult to work with, and so the quest for useful and general
models for concurrency continues.

In~\cite{Pratt00Sculptures}, Pratt introduces \emph{sculpting} as a
process to manage the complexity of HDA. 
Intuitively, sculpting takes one single hypercube, having enough
concurrency (\ie~enough events), and removes cells until the desired
concurrent behavior is obtained.  This is
orthogonal to \emph{composition}, where a system is built by putting
together smaller systems, which in HDA is done by gluing cubes.  Pratt
finishes the introduction of~\cite{Pratt00Sculptures} saying that
``sculpture on its own suffices [\dots] for the abstract modeling of
concurrent behavior.''

\begin{figure}[t]
  \centering
  \includegraphics[width=.5\linewidth, trim=0 4ex 0 4ex, clip]{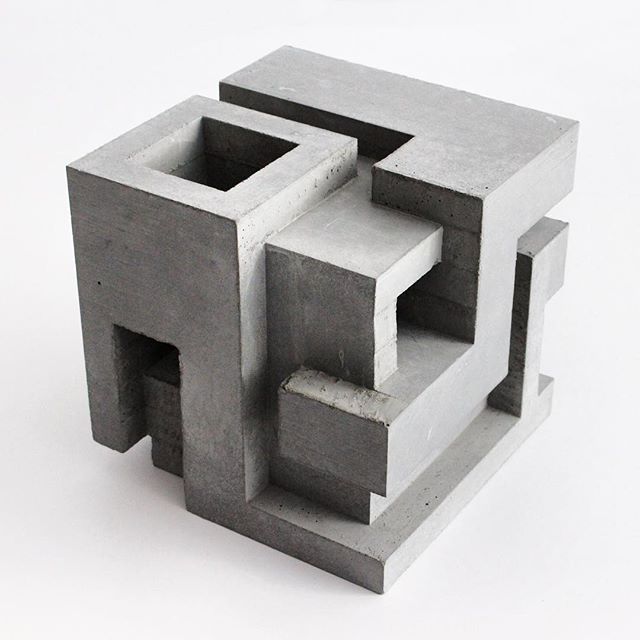}
  \caption{%
    \label{fig:sculpt}
    A geometric sculpture: David Umemoto, \textit{Cubic Geometry ix-vi}.
  }
\end{figure}

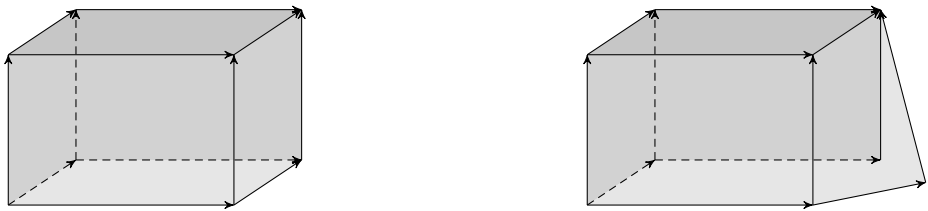
\begin{figure}[b]
  \centering
  \begin{tikzpicture}[>=stealth', x=3cm, y=2cm]
    \begin{scope}
      \coordinate (000) at (0,0);
      \coordinate (001) at (1,0);
      \coordinate (010) at (.3,.3);
      \coordinate (100) at (0,1);
      \coordinate (011) at (1.3,.3);
      \coordinate (101) at (1,1);
      \coordinate (110) at (.3,1.3);
      \coordinate (111) at (1.3,1.3);
      \fill[gray!20] (000) -- (001) -- (011) -- (010) -- (000);
      \fill[gray!35] (000) -- (010) -- (011) -- (111) -- (101) --
      (100) -- (000);
      \fill[gray!45] (100) -- (101) -- (111) -- (110) -- (100);
      \path (000) edge (001);
      \path (000) edge[densely dashed] (010);
      \path (000) edge (100);
      \path (001) edge (011);
      \path (001) edge (101);
      \path (010) edge[densely dashed] (110);
      \path (010) edge[densely dashed] (011);
      \path (100) edge (110);
      \path (100) edge (101);
      \path (011) edge (111);
      \path (101) edge (111);
      \path (110) edge (111);
    \end{scope}
    \begin{scope}[xshift=20em]
      \coordinate (000) at (0,0);
      \coordinate (001) at (1,0);
      \coordinate (010) at (.3,.3);
      \coordinate (100) at (0,1);
      \coordinate (011a) at (1.5,.15);
      \coordinate (011b) at (1.3,.3);
      \coordinate (101) at (1,1);
      \coordinate (110) at (.3,1.3);
      \coordinate (111) at (1.3,1.3);
      \fill[gray!20] (000) -- (001) -- (011a) -- (111) -- (011b) --
      (010) -- (000);
      \fill[gray!35] (000) -- (010) -- (011b) -- (111) -- (101) --
      (100) -- (000);
      \fill[gray!45] (100) -- (101) -- (111) -- (110) -- (100);
      \path (000) edge (001);
      \path (000) edge[densely dashed] (010);
      \path (000) edge (100);
      \path (001) edge (011a);
      \path (001) edge (101);
      \path (010) edge[densely dashed] (110);
      \path (010) edge[densely dashed] (011b);
      \path (100) edge (110);
      \path (100) edge (101);
      \path (011a) edge (111);
      \path (011b) edge (111);
      \path (101) edge (111);
      \path (110) edge (111);
    \end{scope}
  \end{tikzpicture}
  \caption{%
    \label{fig:boxes}
    A combinatorial sculpture, the upside-down open box, or
    ``Fahrenberg's matchbox''~\cite{DBLP:conf/icalp/DubutGG15} (left),
    and its unfolding (right), the ``broken box'' which cannot be
    sculpted (this was the example of van Glabbeek \cite[Fig.~11]{DBLP:journals/tcs/Glabbeek06}, though not named as we do).}
\end{figure}

In this paper we make precise the intuition of
Pratt~\cite{Pratt00Sculptures} and give a definition of sculptures.
We show that there is a close correspondence between sculptures, Chu
spaces over $\three$~\cite{pratt95chu}, and ST-structures.  We develop
an algorithm to decide whether an HDA can be sculpted and show in
Theorem~\ref{th:conc} several natural examples of acyclic HDA that
\emph{cannot} be sculpted.  We will carefully introduce these concepts
later, but spend some time here to motivate our developments.

Combinatorial sculpting as described above is not to be confused with
\emph{geometric} sculpting, which consists of taking a geometric cube
of some dimension and chiseling away hypercubes which one does not
want to be part of the structure.  Figure~\ref{fig:sculpt} shows a
geometric sculpture; for a combinatorial sculpture see
Figure~\ref{fig:boxes}.

Geometric sculpting has been used by Fajstrup~\etal\
in~\cite{Fajstrup06, FajstrupGR98} and other papers to model and
analyze so-called PV programs: processes which interact by locking and
releasing shared resources.  In the simplest case of linear processes
without choice or iteration this defines a hypercube with
\emph{forbidden hyperrectangles}, which cannot be accessed due
to resources' access limits.  See Figure~\ref{fig:swiss} for an
example.

Technically, geometric sculptures are \emph{Euclidean cubical
  complexes}; rewriting a proof in~\cite{Ziemianski17} we show that
such complexes are precisely (combinatorial) sculptures.  In other words, an HDA
is Euclidean iff it can be sculpted, so that the geometric models for
concurrency~\cite{Fajstrup06, FajstrupGR98} are closely related to the
combinatorial ones~\cite{DBLP:conf/popl/Pratt91, Glabbeek91-hda},
through the notion of sculptures.  Much work has been done in the
\emph{geometric} analysis of Euclidean
HDA~\cite{FajstrupRGH04FundCatI, GoubaultH07FundCatII, FajstrupGR98,
  MeshulamR17, RaussenZ14, Ziemianski17}; through our equivalences
these results are made available for the combinatorial models.

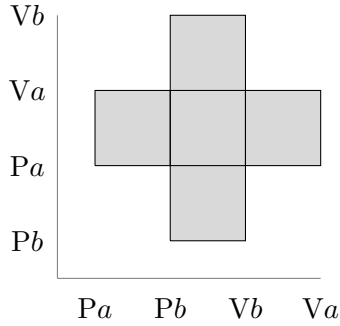
\begin{figure}[t]
    \centering
  \begin{minipage}[b]{.5\linewidth}
    \centering
  \begin{tikzpicture}[-]
    \path (.5,.5) edge[gray] (4,.5);
    \path (.5,.5) edge[gray] (.5,4);
    \draw[fill=gray!30] (1,2) -- (4,2) -- (4,3) -- (1,3) -- (1,2);
    \draw[fill=gray!30] (2,1) -- (2,4) -- (3,4) -- (3,1) -- (2,1);
    \draw (2,2) -- (3,2) -- (3,3) -- (2,3) -- (2,2);
    \node at (1,.1) {P$a$};
    \node at (2,.1) {P$b$};
    \node at (3,.1) {V$b$};
    \node at (4,.1) {V$a$};
    \node at (.1,1) {P$b$};
    \node at (.1,2) {P$a$};
    \node at (.1,3) {V$a$};
    \node at (.1,4) {V$b$};
  \end{tikzpicture}
\end{minipage}
  \caption{%
    \label{fig:swiss}
    Two PV processes sharing two mutexes.
    The forbidden area is grayed out.%
  }
\end{figure}

The notion of \emph{unfolding} 
is commonly used to turn a complicated
model into a simpler, but potentially infinite one.
It may thus be expected that even if an HDA cannot be sculpted, then at
least its
unfolding can, as illustrated by the two examples in
Figure~\ref{fig:unfoldings}.
\begin{figure}[b]
  \centering
  \vspace*{-2ex}
  \begin{tikzpicture}[>=stealth']
    \begin{scope}
      \node[state, initial] (0) at (0,0) {};
      \node[state] (1) at (1,0) {};
      \path (0) edge[out=30, in=150] node[above] {$a$} (1);
      \path (1) edge[out=210, in=-30] node[below] {$b$} (0);
    \end{scope}
    \begin{scope}[xshift=13em]
      \node[state, initial] (0) at (0,0) {};
      \foreach \x in {1,2,3}
      \node[state] (\x) at (\x,0) {};
      \node (n) at (4,0) {$\dots$};
      \foreach \x/\y in {0/1,2/3}
      \path (\x) edge node[above] {$a$} (\y);
      \foreach \x/\y in {1/2}
      \path (\x) edge node[above] {$b$} (\y);
      \path (3) edge (n);
    \end{scope}
    \begin{scope}[yshift=-12ex, xshift=-1em]
      \node[state, initial] (0) at (0,0) {};
      \node[state] (1) at (.8,.6) {};
      \node[state] (2) at (1.6,0) {};
      \path (0) edge node[above, pos=.3] {$b$} (1);
      \path (1) edge node[above, pos=.7] {$c$} (2);
      \path (0) edge node[below] {$a$} (2);
    \end{scope}
    \begin{scope}[xshift=13em, yshift=-12ex]
      \node[state, initial] (0) at (0,0) {};
      \node[state] (1) at (.8,.6) {};
      \node[state] (2a) at (1.6,.6) {};
      \node[state] (2b) at (1.6,0) {};
      \path (0) edge node[above, pos=.3] {$b$} (1);
      \path (1) edge node[above, pos=.7] {$c$} (2a);
      \path (0) edge node[below] {$a$} (2b);
    \end{scope}
  \end{tikzpicture}
  \caption{%
    \label{fig:unfoldings}
    Two simple HDA which cannot be sculpted (left) and their
    unfoldings (right) which can.  (The top-right sculpture is
    infinite.)}
\end{figure}
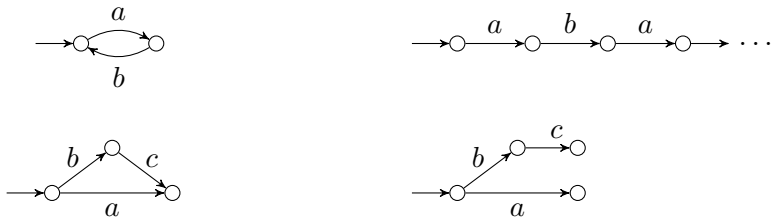
However, this is \emph{not} always the case, as witnessed by the
example in Figure~\ref{fig:speedAngelDemon} which shows an HDA
which cannot be sculpted and which is its own unfolding.
This concurrent system, introduced in~\cite{Johansen16STstruct}, cannot
be modeled as an ST-structure, but \emph{can} be modeled as an
ST-structure with
\emph{cancellation}~\cite[Sec.~5]{Johansen16STstruct}.

\begin{figure}[tbp]
  \centering
\begin{minipage}[b]{.4\linewidth}
  \centering
  \begin{tikzpicture}[x=1.5cm, y=1.2cm]
    \coordinate (00') at (0,0);
    \coordinate (10a') at (1.2,-.3);
    \coordinate (20a') at (2.4,-.7);
    \coordinate (10b') at (.9,.3);
    \coordinate (20b') at (2.1,.8);
    \coordinate (01') at (0,1);
    \coordinate (11') at (1,1.1);
    \coordinate (21a') at (2.3,.5);
    \coordinate (21b') at (2.3,1.6);

    \fill[gray!30] (00') -- (10a') -- (20a') -- (21a') -- (11') --
    (01') -- (00');
    \fill[gray!30] (00') -- (10b') -- (20b') -- (21b') -- (11') --
    (01') -- (00');

    \node[state, initial] (00) at (00') {};
    \foreach \a in {10a,20a,10b,20b,01,11,21a,21b}
    \node[state] (\a) at (\a') {};

    \path (00) edge node[below] {$d$} (10a);
    \path (10a) edge node[below] {$b$} (20a);
    \path (00) edge node[above] {$d$} (10b);
    \path (10b) edge node[below] {$c$} (20b);
    \path (01) edge node[above] {$d$} (11);
    \path (11) edge node[above] {$b$} (21a);
    \path (11) edge node[above] {$c$} (21b);
    \path (00) edge node[left] {$a$} (01);
    \path (10a) edge node[right, pos=.3] {$a$} (11);
    \path (10b) edge node[left] {$a$} (11);
    \path (20a) edge node[right] {$a$} (21a);
    \path (20b) edge node[right] {$a$} (21b);
  \end{tikzpicture}
\end{minipage}
  \caption{%
    \label{fig:speedAngelDemon}
    The speed game of angelic vs.\ demonic
    choice~\cite{Johansen16STstruct}.}
\end{figure}
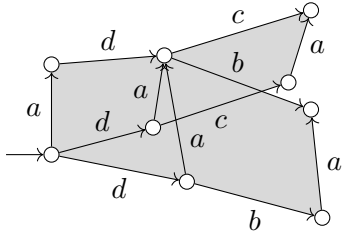

Even more concerning is the fact that there are HDA which can be
sculpted, but their unfoldings cannot; in fact, Figure~\ref{fig:boxes}
exposes one such example.  This shows that for HDA, unfolding does not
always return a simpler model.

In the geometric setting, this means that there are Euclidean cubical
complexes whose unfoldings are not Euclidean.  Since Goubault and
Jensen's seminal paper~\cite{DBLP:conf/concur/GoubaultJ92},
\emph{directed topology} has been developed in order to analyze
concurrent systems as geometric objects~\cite{Grandis09book,
  Fajstrup06, DBLP:books/sp/FajstrupGHMR16}.  Directed topology has
been developed largely in analogy to algebraic topology, but the
analogy sometimes breaks.
The mismatch we discover here, between Euclidean complexes and
unfoldings, shows such a broken analogy.  Unfoldings of HDA are
directed analogues of \emph{universal covering spaces} in
algebraic topology~\cite{Glabbeek91-hda, Fahrenberg05CatHDA,
  Uli05PhD}.
There are several other problems with this notion,
and finding better definitions of directed coverings is active ongoing
research~\cite{DBLP:conf/fossacs/Dubut19, DBLP:conf/calco/FahrenbergL15}.

Another motivation for Pratt's~\cite{Pratt00Sculptures} is that HDA
have no explicit notion of \emph{events}.  From the work
in~\cite{Johansen16STstruct} on ST-structures, %
introduced as event-based counterparts of HDA, we know that it is not
always possible to identify the events in an HDA.  The example in
Figure~\ref{fig:conflict} shows the \emph{(strong) asymmetric conflict}
from~\cite{DBLP:journals/tcs/GlabbeekP09, Pratt03trans_cancel,
  Johansen16STstruct}, with two events $a$, $b$ such that occurrence
of $a$ disables $b$. This can be modeled as a general event structure,
but not as a pure event structure, hence also not as a configuration
structure~\cite{DBLP:journals/tcs/GlabbeekP09}.  It can also be
modeled as an ST-structure, but when using HDA, one faces the problem
that HDA transition labels do not carry events.  The right part of
Figure~\ref{fig:conflict} shows two different ways of sculpting the
corresponding structure from an HDA, one in which the two $a$-labeled
transitions denote the same event and one in which they do not;
a priori there is no way to tell which HDA is the ``right'' model.
This also shows that the same HDA may be sculpted in several different
ways.

\begin{figure}[bp]
  \centering
  \begin{tikzpicture}[>=stealth']
    \begin{scope}[x=1.1cm]
      \node (00) at (0,0) {$\emptyset$};
      \node (10) at (1,0) {$\{ a\}$};
      \node (01) at (0,1) {$\{ b\}$};
      \node (11) at (1,1) {$\{ b, a\}$};
      \path (00) -- node[anchor=center] {$\vdash$} (10);
      \path (00) -- node[anchor=center, sloped] {$\vdash$} (01);
      \path (01) -- node[anchor=center] {$\vdash$} (11);
    \end{scope}
    \begin{scope}[xshift=10em, x=1.5cm, y=.8cm, yshift=-.3cm]
      \node (00) at (0,0) {$( \emptyset, \emptyset)$};
      \node (t0) at (1,0) {$( a, \emptyset)$};
      \node (10) at (2,0) {$( a, a)$};
      \node (0t) at (0,1) {$( b, \emptyset)$};
      \node (01) at (0,2) {$( b, b)$};
      \node (t1) at (.95,2) {$( ba, b)$};
      \node (11) at (2,2) {$( ba, ba)$};
      \foreach \a/\b in {00/t0,t0/10,00/0t,0t/01,01/t1,t1/11}
      \path (\a) edge (\b);
    \end{scope}
    \begin{scope}[xshift=26em, yshift=-7ex]
      \node[state, initial] (00) at (0,0) {};
      \foreach \x/\y in {0/1,1/0,1/1}
      \node[state] (\x\y) at (\x,\y) {};
      \draw[-, dotted] (00) -- (10) -- (11) -- (01) -- (00);
      \path (00) edge node[below] {$a$} (10);
      \path (00) edge node[left] {$b$} (01);
      \path (01) edge node[above] {$a$} (11);
    \end{scope}
    \begin{scope}[xshift=25em, yshift=7ex, x=1.5cm, y=1.1cm]
      \node[state, initial] (000) at (0,0) {};
      \node[state] (001) at (1,0) {};
      \node[state] (010) at (.4,.3) {};
      \node[state] (100) at (0,1) {};
      \node[state] (011) at (1.4,.3) {};
      \node[state] (101) at (1,1) {};
      \node[state] (110) at (.4,1.3) {};
      \node[state] (111) at (1.4,1.3) {};
      \draw[-, dotted] (000) -- (100) -- (110) -- (010) -- (000);
      \draw[-, dotted] (001) -- (101) -- (111) -- (011) -- (001);
      \draw[-, dotted] (000) -- (001);
      \draw[-, dotted] (100) -- (101);
      \draw[-, dotted] (110) -- (111);
      \draw[-, dotted] (010) -- (011);
      \path (000) edge node[below] {$a$} (001);
      \path (000) edge node[above, pos=.3] {$b$} (010);
      \path (010) edge node[right] {$a$} (110);
    \end{scope}
  \end{tikzpicture}
  \caption{%
    \label{fig:conflict}
    Asymmetric conflict as an (impure) event structure (left), an
    ST-structure (center), and two different interpretations as HDA
    (right).}
\end{figure}
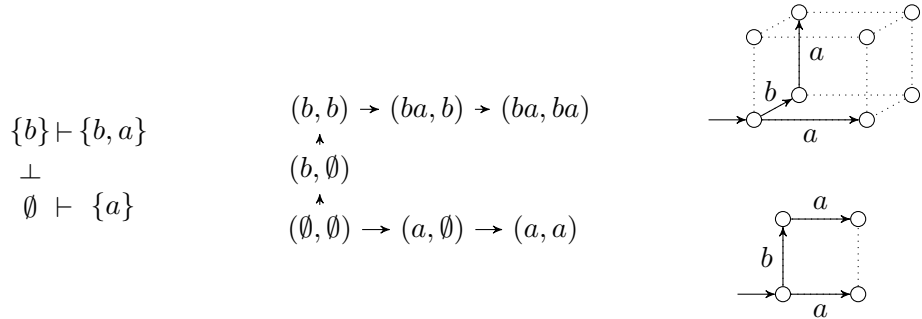

\newpage

\paragraph*{Structure of the paper}

We start in Section~\ref{sec_background} by recalling the definitions
of HDA, ST-structures, and Chu spaces.
In Section~\ref{sec_sculptures} we introduce sculptures and show that
they are isomorphic to \emph{regular} ST-structures.  The triple
equivalence
\begin{equation*}
  \text{regular ST-structures --- regular Chu spaces --- sculptures}
\end{equation*}
embodies Pratt's \emph{event-state
  duality}~\cite{Pratt92concur}.
Regularity is a geometric closure condition introduced for
ST-structures in~\cite{Johansen16STstruct} which ensures that for any
ST-configuration, also all its faces are part of the structure, and
they are all distinct.  If regularity is dropped, then one has to pass
to \emph{partial HDA}~\cite{DBLP:conf/calco/FahrenbergL15} on the
geometric side, and then the above equivalence becomes one between
ST-structures and sculptures from partial HDA.  For clarity of
exposition we do not pursue this here, but also in that case, there
will be acyclic partial HDA which cannot be sculpted.

Section~\ref{sec_decision} contains our main contribution, an
algorithm to decide whether a given HDA $Q$ can be sculpted.  The
algorithm essentially works by covering $Q$ with the ST-structure
$\hintost(Q)$ which is built out of all paths in $Q$, and then trying
to find a quotient of $\hintost(Q)$ which is isomorphic to $Q$.  We
show that such a quotient exists iff $Q$ can be sculpted.

\begin{figure}[btp]
  \centering
  \begin{tikzpicture}
    \begin{scope}[x=1.5cm, y=1.5cm]
      \node[state, initial] (00) at (0,0) {};
      \node[state] (10) at (1,0) {};
      \node[state] (01) at (0,1) {};
      \node[state] (11) at (1,1) {};
      \path (00) edge node[below] {$q_1$} (10);
      \path (00) edge node[left] {$q_2$} (01);
      \path (01) edge node[above] {$q_3$} (11);
      \path (10) edge node[right] {$q_4$} (11);
    \end{scope}
    \begin{scope}[xshift=10em, y=.9cm, x=2cm]
      \node (00) at (0,0) {$(\emptyset, \emptyset)$};
      \node (t0) at (1,0) {$( q_1, \emptyset)$};
      \node (10) at (2,0) {$( q_1, q_1)$};
      \node (1t) at (2.6,1) {$( q_1 q_4, q_1)$};
      \node (11a) at (3.2,2) {$( q_1 q_4, q_1 q_4)$};
      \node (0t) at (0,1) {$( q_2, \emptyset)$};
      \node (01) at (0,2) {$( q_2, q_2)$};
      \node (t1) at (.92,2) {$( q_2 q_3, q_2)$};
      \node (11b) at (2,2) {$( q_2 q_3, q_2 q_3)$};
      \foreach \a/\b in {00/t0,t0/10,10/1t,1t/11a,00/0t,0t/01,01/t1,t1/11b}
      \path (\a) edge (\b);
    \end{scope}
  \end{tikzpicture}
  \caption{%
    \label{fig:opensquare}
    A simple HDA and its path-based ST-structure covering.}
\end{figure}
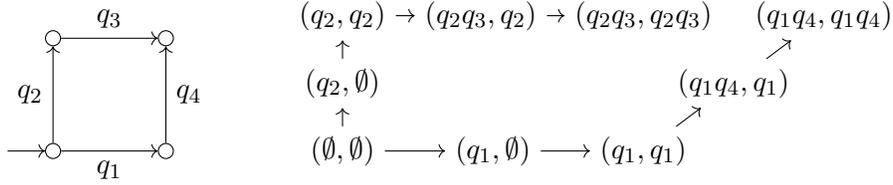

Figure~\ref{fig:opensquare} shows a simple example: the empty square,
a one-dimensional HDA with two interleaving transitions.  The covering
$\hintost(H)$ splits the upper-right corner, and the algorithm finds
an equivalence on the four events which recovers (an ST-structure
isomorphic to) $H$: in this case we equate $q_1\sim q_3$ and
$q_2\sim q_4$, %
which corresponds to the standard way of identifying events in HDA as
opposite sides of a filled-in square when it exists.

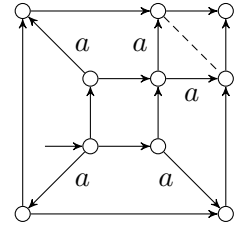
\begin{figure}[bp]
  \centering
  \begin{tikzpicture}[>=stealth', x=.9cm, y=.9cm]
    \node[state, initial] (9) at (2,1) {};
    \node[state] (5) at (2,2) {};
    \node[state] (10) at (3,1) {};
    \node[state] (6) at (3,2) {};
    \path (9) edge (5);
    \path (5) edge (6);
    \path (9) edge (10);
    \path (10) edge (6);

    \node[state] (1) at (1,3) {};
    \node[state] (2) at (3,3) {};
    \node[state] (3) at (4,3) {};
    \node[state] (7) at (4,2) {};
    \node[state] (11) at (1,0) {};
    \node[state] (12) at (4,0) {};

    \path (11) edge (1);
    \path (1) edge (2);
    \path (2) edge (3);
    \path (7) edge (3);
    \path (11) edge (12);
    \path (12) edge (7);

    \path (5) edge node[right] {\;$a$} (1);
    \path (6) edge node[left] {$a$} (2);
    \path (6) edge node[below] {$a$} (7);
    \path (9) edge node[right] {\;$a$} (11);
    \path (10) edge node[left] {$a$\;} (12);

    \path (2) edge[-, densely dashed] (7);
  \end{tikzpicture}
  \caption{%
    \label{fig:nosculpt}
    A one-dimensional acyclic HDA which cannot be sculpted.}
\end{figure}

Another example is shown in Figure~\ref{fig:nosculpt}.  This
one-dimensional acyclic HDA cannot be sculpted, and the algorithm
detects this by noting that (1) all the $a$-labeled transitions indeed
need to be the same event, but then (2) the two states connected
with a dashed line need to be identified, so that the ST-structure
covering cannot be isomorphic to the original HDA model.

In Section~\ref{se:euclid} we make the connection between the
combinatorial and geometric models and show that HDA can be sculpted
precisely if they are Euclidean.  This necessitates a few notions from
directed topology which can be found in appendix.

\begin{figure}[tp]
  \centering
  \begin{tikzpicture}[x=3.1cm, y=1.4cm]
    \node[rectangle] (rChu) at (-.2,0) {\normalsize reg.\ Chu spaces over
      $\three$};
    \node[rectangle] (Chu) at (1.2,0) {\normalsize Chu spaces over $\three$};
    \node[rectangle] (rST) at (-.1,-1) {\normalsize regular ST-structures};
    \node[rectangle] (ST) at (1.1,-1) {\normalsize ST-structures\vphantom p};
    \node[rectangle] (Euc) at (-1.1,-2) {\normalsize Euclidean complexes};
    \node[rectangle] (Scu) at (.2,-2) {\normalsize Sculptures};
    \node[rectangle] (aHDA) at (1.25,-2) {\normalsize acyclic HDA};
    \node[rectangle] (HDA) at (2,-2) {\normalsize HDA\vphantom p};

    \path[right hook-stealth'] (rChu) edge (Chu);
    \path[right hook-stealth'] (rST) edge (ST);
    \path[<->] (Euc) edge node[below] {Thm.~\ref{th:scu-euc}} (Scu);
    \path[right hook-stealth'] (Scu) edge node[below] {Thm.~\ref{th:conc}} (aHDA);
    \path[right hook-stealth'] (aHDA) edge (HDA);

    \path (rChu) -- (rST) coordinate[midway] (auxrcs);
    \path[<->] (rChu.south-|auxrcs) edge node [right]
    {\cite{Johansen16STstruct}} (rST.north-|auxrcs);

    \path (Chu) -- (ST) coordinate[midway] (auxcs);
    \path[<->] (Chu.south-|auxcs) edge node [left]
    {\cite{Johansen16STstruct}} (ST.north-|auxcs);

    \path[<->] (Scu|-rST.south) edge node[left] {Thm.~\ref{prop_stSculptst}} (Scu);

    \path (Chu.340) edge[out=-60, in=60, -, densely dashed]
    (aHDA.25);
  \end{tikzpicture}
  \caption{%
    \label{fig:contribs}
    Contributions of this paper.  (All inclusions are strict.)}
\end{figure}
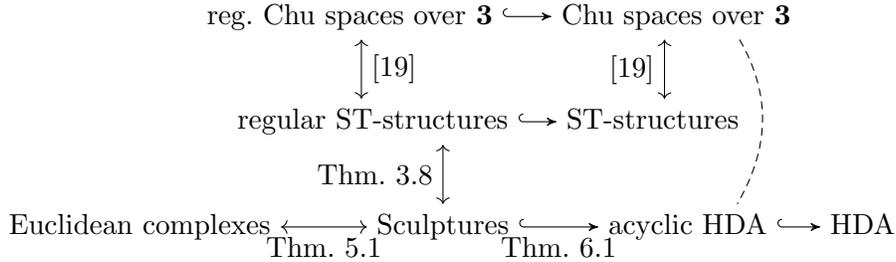

Figure~\ref{fig:contribs} sums up the relations between the different
models which we expose in this paper.  (The dashed line indicates the
common belief that Chu spaces over $\three$ and acyclic HDA are
equivalent, which we prove not to be the case.)

\section{HDA, ST-Structures, and Chu Spaces}\label{sec_background}

HDA are automata in which independence of events is indicated by
higher-dimensional structure. HDA consist of states, transitions,
and cubes of different dimensions which represent events running
concurrently.

\paragraph*{Precubical sets}

A \emph{precubical set} is a graded set $Q= \bigcup_{ n\in \Nat} Q_n$,
with $Q_n\cap Q_m= \emptyset$ for all $n\ne m$, together with mappings
$s_{ k, n}, t_{ k, n}:Q_n\to Q_{ n- 1}$, $k= 1,\dots, n$,
satisfying the following \emph{precubical identities},
for $\alpha, \beta\in\{ s, t\}$,
\begin{equation}
  \label{eq:precubical}
  \alpha_{ k, n- 1} \beta_{ \ell, n}= \beta_{ \ell- 1, n- 1}
  \alpha_{ k, n} \qquad( k< \ell)
\end{equation}

\begin{figure}[bp]
\vspace{-2ex}
  \centering
  \begin{tikzpicture}[>=stealth']
    \path[fill=black!10] (0,0) to (2,0) to (2,2) to (0,2) to (0,0);
    \node[state] (00) at (0,0) {};
    \node[state] (10) at (2,0) {};
    \node[state] (01) at (0,2) {};
    \node[state] (11) at (2,2) {};
    \path (00) edge (01);
    \path (00) edge (10);
    \path (01) edge (11);
    \path (10) edge (11);
    \node at (1,1) {$q$};
    \node at (-.4,1.05) {$s_1 q$};
    \node at (2.4,1.05) {$t_1 q$};
    \node at (1,.25) {$\vphantom{t}s_2 q$};
    \node at (1,1.75) {$t_2 q$};
    \node at (-.2,-.35) {$\vphantom{t} s_1 s_2 q= s_1 s_1 q$};
    \node at (-.2,2.35) {$s_1 t_2 q= t_1 s_1 q$};
    \node at (2.2,-.35) {$t_1 s_2 q= s_1 t_1 q$};
    \node at (2.2,2.35) {$t_1 t_2 q= t_1 t_1 q$};
  \end{tikzpicture}
  \caption{%
    \label{fi:2cubefaces-full}
    A $2$-cell $q$ with its four faces $s_1 q$, $t_1 q$, $s_2 q$,
    $t_2 q$ and four corners.}
\end{figure}
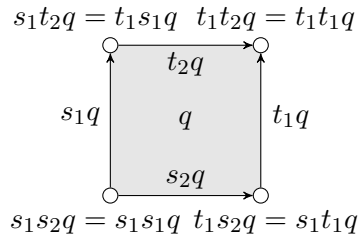
Elements of $Q_n$ are called \emph{$n$-cells} (or simply cells), and
for $q\in Q_n$, $n= \dim q$ is its \emph{dimension}.  The mappings
$s_{ k, n}$ and $t_{ k, n}$ are called \emph{face maps}, and we will
usually omit the extra subscript $n$ and simply write $s_k$ and $t_k$.
Intuitively, each $n$-cell $q\in Q_n$ has $n$ \emph{lower faces}
$s_1 q,\dotsc, s_n q$ and $n$ \emph{upper faces}
$t_1 q,\dotsc, t_n q$, whereas the precubical identity expresses the fact
that $( n- 1)$-faces of an $n$-cell meet in common $( n- 2)$-faces;
see Figure~\ref{fi:2cubefaces-full} for an example.

\textit{Morphisms} $f: Q\to R$ of precubical sets are graded functions
$f=\{ f_n: Q_n\to R_n\}_{ n\in \Nat}$ which commute with the face
maps: $\alpha_k\circ f_n= f_{ n- 1}\circ \alpha_k$ for all
$n\in \Nat$, $k\in\{ 1,\dots, n\}$, and $\alpha\in\{ s, t\}$.  This
defines a category $\pCub$ of precubical sets.
A precubical morphism 
is an \emph{embedding} if it is
injective; in that case we write $f: Q\hookrightarrow R$.  $Q$ and $R$
are isomorphic, denoted $Q\isomorphic R$, if there is a bijective
morphism $Q\to R$.

If two cells 
$q,q'\in Q$ 
are in a face relation 
$q= \alpha^1_{ i_1}\dotsm \alpha^n_{ i_n} q'$,
for
$\alpha^1,\dotsc, \alpha^n\in\{ s, t\}$, then this sequence can be
rewritten in a unique way, using the precubical
identities~\eqref{eq:precubical}, so that the indices
$i_1<\dotsm< i_n$, see~\cite{GrandisM02site}.  $Q$ is said to be
\emph{non-selflinked} if up to this rewriting, there is at most one
face relation between any of its cells, that is, it holds for all
$q, q'\in Q$ that there exists at most one index sequence
$i_1<\dotsc< i_n$ such that
$q= \alpha^1_{ i_1}\dotsm \alpha^n_{ i_n} q'$ for
$\alpha^1,\dotsc, \alpha^n\in\{ s, t\}$.

In other words, $Q$ is non-selflinked iff any $q\in Q$ is
\emph{embedded} in $Q$, hence iff all $q$'s iterated faces are
genuinely different.
This conveys a geometric intuition of regularity and is frequently
assumed~\cite{Fajstrup05, Fajstrup06}, also in algebraic topology
\mbox{\cite[Def.~IV.21.1]{Bredon93top}}.  It means that for all cells
in $Q$, each of their faces (and faces of faces etc.) are present in
$Q$ as distinct cells.

\paragraph*{Higher-dimensional automata}

A precubical set $Q$ is \emph{finite} if $Q$ is finite as a set.  This
means that $Q_n$ is finite for each $n\in \Nat$ and that $Q$ is
\emph{finite-dimensional}: there exists $N\in \Nat$ such that
$Q_n= \emptyset$ for all $n> N$ (equivalently, $\dim q\le N$ for all
$q\in Q$).  In that case, the smallest such $N$ is called the
dimension of $Q$ and denoted $\dim Q= \max\{ \dim q\mid q\in Q\}$.
A \emph{higher-dimensional automaton} (HDA) is a finite %
precubical set $Q$ with a designated initial cell $\initialCell\in Q_0$.
Morphisms $f: Q\to Q'$ of HDA are precubical morphisms that
fix the initial cell, \ie~have $f( \initialCell)= \initialCell'$.
We often call cells from $Q_0$ and $Q_1$ respectively \emph{states} and \emph{transitions}.

Note that we only deal with \emph{unlabeled} HDA here, \ie~HDA without
labellings on transitions and/or higher cells.  We are interested here in the
\emph{events}, not in their labeling.

A \emph{step} in an HDA, 
with $q_{n}\in Q_{n}$, $q_{n-1}\in Q_{n-1}$, and $1\leq i\leq n$,  
is either
\begin{equation*}
  q_{n-1}\transition{s_{i}}q_{n}\text{ with }s_{i}(q_{n})=q_{n-1}\text{\ \ \ or\ \ \ }
  q_{n}\transition{t_{i}}q_{n-1}\text{ with }t_{i}(q_{n})=q_{n-1}\,.
\end{equation*}
A \emph{path}
$\pi\defequal
q^{0}\transition{\alpha^{1}}q^{1}\transition{\alpha^{2}}q^{2}\transition{\alpha^{3}}\dots$
is a sequence of steps $q^{j-1}\transition{\alpha^{j}}q^{j}$, with
$\alpha^{j}\in\{s,t\}$.
The first cell is denoted $\startPath{\pi}$ and the ending cell in a
finite path is $\finishPath{\pi}$.
The string $\alpha^1\dots \alpha^n$ consisting of letters $s$, $t$ is \emph{the type} of the path $\pi$.

Two paths are \emph{elementary homotopic}~\cite{DBLP:journals/tcs/Glabbeek06}, %
denoted $\pi\homotopicHDA\pi'$, if one can be obtained
from the other by replacing, for $q \in Q$ and $i < j$, either 
(1) a segment
$\transition{s_i}q\transition{s_j}$ by
$\transition{s_{j-1}}q'\transition{s_i}$, 
(2) a segment
$\transition{t_j}q\transition{t_i}$ by
$\transition{t_i}q'\transition{t_{j-1}}$, 
(3) a segment
$\transition{s_i}q\transition{t_j}$ by
$\transition{t_{j-1}}q'\transition{s_i}$, or 
(4) a segment
$\transition{s_j}q\transition{t_i}$ by
$\transition{t_{i}}q'\transition{s_{j-1}}$.  \emph{Homotopy} is the
reflexive and transitive closure of the above and is denoted the
same. Two homotopic paths thus share their respective start and end
cells.

A cell $q'$ in an HDA $Q$ is \emph{reachable} from another cell $q$ if
there exists a path $\pi$ with $\startPath \pi= q$ and
$\finishPath \pi= q'$.  $Q$ is said to be \emph{connected} if any cell
is reachable from the initial state $\initialCell$.  $Q$ is \emph{acyclic}
if there are no two different cells $q, q'$ in $Q$ such that $q'$ is
reachable from $q$ and $q$ is reachable from $q'$.

If an HDA is not connected, then it contains cells which are not
reachable during any computation.
We will hence assume all HDA to be connected.

\paragraph*{Universal event labeling}
Let $Q$ be a precubical set and define $\eventEquivBulk$ to be the
equivalence relation on $Q_1$ spanned by
$\{( s_i q, t_i q)\mid q\in Q_2, i\in\{ 1, 2\}\}$.  Let
$\universalEvents{Q}=\quotientofwrt{Q_1}{\mathord{\eventEquivBulk}}$ be the set of equivalence
classes; this is called the set of \emph{universal labels} of $Q$. The universal label of a transition $q_1$ will be denoted by $\lambda(q_1)$.

For every precubical morphism $f:Q\to R$ and transitions $e,e'\in Q_1$, $e\eventEquivBulk e'$ implies $f(e)\eventEquivBulk f(e')$. As a consequence, $f$ induces a map between the sets of universal labels fitting into the diagram:
  \begin{equation*}
    \xymatrix{%
      Q
      	\ar[r]^-{f}
      	\ar[d]_{\lambda} &
      R
      	\ar[d]^{\lambda} \\
      \universalEvents{Q} \ar[r]^-{\universalEvents{f}} &
      \universalEvents{R}
    }
  \end{equation*}
This makes $\universalEvents{}$ a functor from the category of precubical sets $\pCub$ into the category of sets.

$Q$ is said to be \emph{inherently self-concurrent} if there is
$q\in Q_2$ for which $s_1 q\eventEquivBulk s_2 q$ or (equivalently)
$t_1 q\eventEquivBulk t_2 q$.
In that case, $\universalEvents{Q}$ does not identify
events, as there are cells in which more than one occurrence of an
event is active.  We say that $Q$ 
is \emph{consistent} 
if it is not
inherently self-concurrent.

\begin{exa}\label{exa_NOTconsistent}
The examples (\ref{exa_NOTconsistent_1}) and (\ref{exa_NOTconsistent_2}) are not consistent, though the first one is selflinked, whereas the second one is non-selflinked. Example (\ref{exa_NOTconsistent_3}) is consistent and selflinked.
\begin{enumerate}
\item\label{exa_NOTconsistent_1} Consider the HDA with three cells $\{Q_{0}=\{q_{0}\},Q_{1}=\{q_{1}\},Q_{2}=\{q_{2}\}\}$ where all the four maps of the square point to the same transition $\alpha_{i}(q_{2})=q_{1}$, for $\alpha\in\{s,t\}$ and $i\in\{1,2\}$ and the two maps of the transition point to the same state $\alpha_{1}(q_{1})=q_{0}$, which is also the initial state.
For visual help we draw states as circles, transitions as squares, and 2-cells as hexagons.
\tikzset{zerocell/.style={circle, draw, minimum size=0.6cm, inner sep=0pt}}
\tikzset{onecell/.style={draw, minimum size=0.6cm, inner sep=0pt}}
\tikzset{twocell/.style={draw,regular polygon, regular polygon sides=6, minimum size=0.7cm, inner sep=0pt}}
\begin{figure}[h]
\begin{tikzpicture}[>=stealth']
	\node[zerocell] (0) at (6,0) {$q_0$};
	\node[onecell] (1) at (3,0) {$q_1$};
	\node[twocell] (2) at (0,0) {$q_2$};
	\path[bend left=20] (2) edge (1);
	\path[bend left=60] (2) edge (1);
	\path[bend right=20] (2) edge (1);
	\path[bend right=60] (2) edge (1);
	\path[bend left=20] (1) edge (0);
	\path[bend right=20] (1) edge (0);
	\node at (1.5,0.55) {\scriptsize $s_1$};
	\node at (1.5,-0.55) {\scriptsize $t_1$};
	\node at (4.5,0.55) {\scriptsize $s_1$};
	\node at (4.5,-0.55) {\scriptsize $t_1$};
	\node at (1.5,1.15) {\scriptsize $s_2$};
	\node at (1.5,-1.15) {\scriptsize $t_2$};
\end{tikzpicture}
\end{figure}

\item\label{exa_NOTconsistent_2} Consider the HDA formed of three  squares $q_{2},q'_{2},q''_2\in Q_{2}$ that are adjacent, \ie~$t_{1}(q_{2})=s_{1}(q'_{2})$ and $t_{1}(q'_{2})=s_{1}(q''_{2})$, but the first square shares with the third one a transition, namely $s_{1}(q_{2})=s_{2}(q''_{2})=e$.
\begin{figure}[h]
\begin{tikzpicture}[>=stealth']
    \path[fill=black!10] (0,0) to (6,0) to (6,2) to (0,2) to (0,0);
    \node[state] (00) at (0,0) {};
    \node[state] (10) at (2,0) {};
    \node[state] (01) at (0,2) {};
    \node[state] (11) at (2,2) {};
    \node[state] (20) at (4,0) {};
    \node[state] (21) at (4,2) {};
    \node[state] (30) at (6,0) {};
    \node[state] (31) at (6,2) {};
    \path [very thick] (00) edge (01);
    \path (00) edge (10);
    \path (01) edge (11);
    \path (10) edge (11);
    \path (10) edge (20);
    \path (11) edge (21);
    \path (20) edge (21);
    \path [very thick] (20) edge (30);
    \path (21) edge (31);
    \path (30) edge (31);
    \node at (1,1) {$q_2$};
    \node at (3,1) {$q'_2$};
    \node at (5,1) {$q''_2$};
    \node [below] at (5,0) {$e$};
    \node [left] at (0,1) {$e$};
  \end{tikzpicture}
\end{figure}

\item\label{exa_NOTconsistent_3} Consider the HDA formed of a single 2-cell $q_2$ having two of its vertices identified, \ie~$t_1(s_1(q_2))=v=t_1(s_2(q_2))$.
\begin{figure}[h]
\begin{tikzpicture}[>=stealth']
    \path[fill=black!10] (0,0) to (2,0) to (2,2) to (0,2) to (0,0);
    \node[state] (00) at (0,0) {};
    \node[state] (10) at (2,0) {};
    \node[state] (01) at (0,2) {};
    \node[state] (11) at (2,2) {};
    \path (00) edge (10);
    \path (00) edge (01);
    \path (01) edge (11);
    \path (10) edge (11);
    \node [below right] at (2,0) {$v$};
    \node [above left] at (0,2) {$v$};
    \node at (1,1) {$q_2$};

    \node[twocell] (q2) at (6,1) {$q_2$};
    \node[onecell] (a) at (4,1) {$a$};
    \node[onecell] (b) at (8,1) {$b$};
    \node[onecell] (d) at (6,3) {$d$};
    \node[onecell] (c) at (6,-1) {$c$};
    \node[zerocell] (x) at (4,-1) {$x$};
    \node[zerocell] (y) at (8,3) {$y$};
    \node[zerocell] (v) at (4,3) {$v$};

	\path (q2) edge (a) edge (b) edge (c) edge (d);
	\path (a) edge (x) edge (v);    
	\path (c) edge (x) edge (v);    
	\path (b) edge (y) edge (v);    
	\path (d) edge (y) edge (v);    

	\node[above=-2pt] at (5.5,1) {\scriptsize $s_1$};
	\node[above=-2pt] at (6.5,1) {\scriptsize $t_1$};
	\node[above=-2pt] at (5.5,3) {\scriptsize $s_1$};
	\node[above=-2pt] at (6.5,3) {\scriptsize $t_1$};
	\node[below=-2pt] at (5.5,-1) {\scriptsize $s_1$};
	\node at (5.4,-0.2) {\scriptsize $t_1$};
	\node[left=-2pt] at (4,1.8) {\scriptsize $t_1$};
	\node[left=-2pt] at (4,0.2) {\scriptsize $s_1$};
	\node[right=-2pt] at (6,1.6) {\scriptsize $t_2$};
	\node[right=-2pt] at (6,0.4) {\scriptsize $s_2$};
	\node[right=-2pt] at (8,1.8) {\scriptsize $t_1$};
	\node at (7.2,1.6) {\scriptsize $s_1$};
\end{tikzpicture}
\end{figure}
\end{enumerate}
\end{exa}

Let ${\eventless} \subseteq \universalEvents{Q}\times\universalEvents{Q}$ be the transitive closure of the relation 
\[
\{(\lambda(s_2(q)),\lambda(s_1(q)))\mid q\in Q_2\}.
\]
We say that $Q$ is 
\emph{ordered} 
if for every $a,b\in\universalEvents{Q}$ the conditions $a\eventless b$ and $b\eventless a$ cannot hold simultaneously or, equivalently, $\eventless$ is antisymmetric and $a\not\eventless a$ for all $a\in\universalEvents{Q}$.
For a precubical morphism $f:Q\to R$, the induced map $\universalEvents{f}:\universalEvents{Q}\to \universalEvents{R}$ preserves the relation $\eventless$. This makes $\universalEvents{}$ a functor into the category of sets with a transitive relation and relation-preserving maps.

If $Q$ is not consistent, then it is not ordered. 
Indeed, if $a=\lambda(s_1(q))= \lambda(s_2(q))$ for some $q\in Q_2$, then $a\eventless a$, which  
excludes ordered.

For every square $q$ we can assign a pair of universal labels $(\lambda(\alpha_2(q)),\lambda(\beta_1(q)))$, which does not depend on the choice $\alpha,\beta\in \{s,t\}$. This generalizes for higher-dimensional cubes: for $q\in Q_n$ and $i\in \{1,\dotsc,n\}$ let 
\[
	\lambda_i(q)=\lambda(s_1(s_2(\dots(s_{i-1}(s_{i+1}(\dots(s_n(q))\dots)). 
\]
Again, we can replace some of $s$'s with $t$'s and get the same result. Denote $\lambda(q)=(\lambda_1(q),\dotsc,\lambda_n(q))$.

\begin{lem}\label{le:multilabels}
	Some properties:
	\begin{enumerate}
	\item
	For $q\in Q_n$, $\alpha\in\{s,t\}$,
	\[
		\lambda(\alpha_i(q))=(\lambda_1(q),\dotsc,\lambda_{i-1}(q),\lambda_{i+1}(q),\dotsc,\lambda_n(q)).
	\]
	\item
	$\lambda_1(q)\eventless \lambda_2(q)\eventless \dots \eventless \lambda_n(q)$.
	\item
	$Q$ 
is consistent
if{}f for every $n$ and $q\in Q_n$, all $\lambda_i(q)$ are different.
	\item
	If $Q$ is 
ordered, 
then $\lambda_i(q)\eventless \lambda_j(q)$ implies $i< j$.
	\end{enumerate}
\end{lem}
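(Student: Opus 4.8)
The plan is to establish part~(1) first, since the remaining three parts all reduce to it together with the definitions of $\eventless$, consistency, and orderedness. Part~(1) is purely a bookkeeping statement about how the multi-labelling interacts with the face maps, and I expect it to be the only part requiring genuine manipulation of the precubical identities~\eqref{eq:precubical}; everything else is then essentially formal.

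For part~(1), fix $q\in Q_n$ and a face $\alpha_i(q)\in Q_{n-1}$, and compute $\lambda_j(\alpha_i(q))$ for each $j\in\{1,\dots,n-1\}$ straight from the definition, i.e.\ as the universal label of the $1$-cell obtained by applying to $\alpha_i(q)$ all lower face maps except the $j$-th. Using~\eqref{eq:precubical} to commute $\alpha_i$ past the remaining face maps, together with the already-noted fact that replacing any $s$ by the corresponding $t$ leaves the universal label unchanged, this composite rewrites (up to universal label) as the projection of $q$ onto a single coordinate direction. The only content is index tracking: deleting coordinate $i$ keeps coordinates $1,\dots,i-1$ in place and shifts $i+1,\dots,n$ down by one, which yields exactly $\lambda_j(\alpha_i(q))=\lambda_j(q)$ for $j<i$ and $\lambda_j(\alpha_i(q))=\lambda_{j+1}(q)$ for $j\ge i$. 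This is the claimed deletion of the $i$-th component, and it is the step I expect to be the most delicate, solely because of the bookkeeping.

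With part~(1) in hand, part~(2) follows by reduction to $2$-cells. For consecutive indices $i,i+1$, I would repeatedly take faces of $q$, deleting the coordinates $i+2,\dots,n$ from the top and $1,\dots,i-1$ from the bottom, to obtain a $2$-cell $q''$ which, by iterating part~(1), has $\lambda_1(q'')=\lambda_i(q)$ and $\lambda_2(q'')=\lambda_{i+1}(q)$. Since $\lambda_1(q'')=\lambda(s_2 q'')$ and $\lambda_2(q'')=\lambda(s_1 q'')$, the generating pair of $\eventless$ attached to $q''$ is precisely $(\lambda_i(q),\lambda_{i+1}(q))$, so $\lambda_i(q)\eventless\lambda_{i+1}(q)$; transitivity of $\eventless$ then gives the whole chain.

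For part~(3), recall that $Q$ is inherently self-concurrent exactly when some $q'\in Q_2$ has $s_1 q'\eventEquivBulk s_2 q'$, i.e.\ $\lambda_1(q')=\lambda_2(q')$. If all $\lambda_i(q)$ are distinct for every cell, then in particular no such $2$-cell exists, so $Q$ is consistent. Conversely, if $\lambda_i(q)=\lambda_j(q)$ for some $i<j$ and some $q\in Q_n$, I would use part~(1) repeatedly to cut $q$ down to a $2$-cell $q''$ with multi-label $(\lambda_i(q),\lambda_j(q))$ (deleting the coordinates above $j$, then those below $i$, then the interior ones between $i$ and $j$, each deletion being legitimate since part~(1) lets me remove any single coordinate); then $\lambda_1(q'')=\lambda_2(q'')$ exhibits inherent self-concurrency, so $Q$ is not consistent. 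Finally, part~(4) is immediate: if $Q$ is ordered then it is consistent, so by part~(3) all $\lambda_i(q)$ are distinct, ruling out $i=j$; and if we had $i>j$, part~(2) would give $\lambda_j(q)\eventless\lambda_i(q)$, which together with the hypothesis $\lambda_i(q)\eventless\lambda_j(q)$ and antisymmetry of $\eventless$ would force $\lambda_i(q)=\lambda_j(q)$, contradicting distinctness. Hence $i<j$.
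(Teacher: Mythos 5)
Your proposal is correct and follows essentially the same route as the paper's proof: part~(1) by commuting $\alpha_i$ past the string of $s$-maps via the precubical identities (your index bookkeeping is right, including the boundary case $j=i$), parts~(2) and~(3) by cutting $q$ down to a $2$-cell whose multi-label is the relevant pair of events, and part~(4) from part~(2) together with the definition of ordered. One small repair: in part~(4) the subcase $i=j$ is not excluded by distinctness of the $\lambda_i(q)$ (which says nothing about equal indices) but directly by irreflexivity --- the hypothesis would read $\lambda_i(q)\eventless\lambda_i(q)$, which the definition of ordered forbids via $a\not\eventless a$, and this is exactly how the paper disposes of that case.
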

\begin{proof}
\begin{enumerate}
\item
	For $j\geq i$ we have
	\begin{multline*}
		\lambda_{i}(\alpha_j(q))=
		\lambda(s_1 s_2\dots s_{i-1} s_{i+1}\dots s_{n-2} s_{n-1}\alpha_j(q))
		=\lambda(s_1 s_2\dots s_{i-1} s_{i+1}\dots s_{n-2} \alpha_j s_{n}(q))\\
		=\lambda(s_1 s_2\dots s_{i-1} s_{i+1}\dots \alpha_j s_{n-1}  s_{n}(q))
		=\dots 
		=\lambda(s_1 s_2\dots s_{i-1} s_{i+1}\dots s_{j-1}\alpha_js_{j+1}\dots s_{n-1}  s_{n}(q))\\
		=\lambda(s_1 s_2\dots s_{i-1} s_{i+1}\dots s_{j-1} s_j s_{j+1}\dots s_{n-1}  s_{n}(q))=\lambda_i(q)
	\end{multline*}
	For $j<i$,
	\begin{multline*}
		\lambda_{i}(\alpha_j(q))=
		\lambda(s_1 s_2\dots s_{i-1} s_{i+1}\dots s_{n-2} s_{n-1}\alpha_j(q))
		=\dots
		=\lambda(s_1 s_2\dots s_{i-1}\alpha_j s_{i+2}\dots s_{n-1} s_{n}(q))\\
		=\lambda(s_1 s_2\dots \alpha_j s_{i} s_{i+2}\dots s_{n-1} s_{n}(q))
		=\dots 
		=\lambda(s_1 s_2\dots s_{j-1}\alpha_j s_{j+1}\dots s_{i} s_{i+2}\dots s_{n-1}  s_{n}(q))\\
		=\lambda(s_1 s_2\dots s_{i} s_{i+2}\dots s_{n-1}  s_{n}(q))=\lambda_{i+1}(q).
	\end{multline*}
	\item
	Fix $q\in Q_n$ and $i\in \{1,\dots,n-1\}$ and let $z=s_1s_2\dots s_{i-1}s_{i+2}\dots s_{n}(q)$. 
	Using (1) we obtain that $\lambda_1(z)=\lambda_i(q)$, $\lambda_2(z)=\lambda_{i+1}(q)$; therefore, $\lambda_i(q)\eventless\lambda_{i+1}(q)$.
	\item
	If $Q$ is not consistent, then there exists $q\in Q_2$ such that
	\[
		\lambda_1(q)=\lambda(s_2(q))=\lambda(s_1(q))=\lambda_2(q).\]
	If $\lambda_i(q)=\lambda_j(q)$ for some $q\in Q_n$, $i<j$, then
	\begin{multline*}
	\lambda(s_2(s_1\dots s_{i-1}s_{i+1}\dots s_{j-1}s_{j+1}\dots s_n(q)))
	=\lambda(s_1 \dots s_{j-1}s_{j+1}\dots s_n(q))
		=\lambda_j(q)\\
		=\lambda_i(q)
		=\lambda(s_1\dots s_{i-1}s_{i+1}\dots  s_n(q))=\lambda(s_1(s_1\dots s_{i-1}s_{i+1}\dots s_{j-1}s_{j+1}\dots s_n(q))).
	\end{multline*}
	\item
		If $\lambda_i(q)\eventless \lambda_j(q)$ for $j\leq i$, then either $\lambda_j(q)\eventless \lambda_i(q)$ (if $j<i$) or $\lambda_i(q)\eventless \lambda_i(q)$ (if $i=j$). In both cases $Q$ cannot be ordered.\qedhere
		\end{enumerate}
\end{proof}

\begin{exa}\label{exa_NOTordered}
The following HDA is not ordered (due to a ``wrong'' numbering of the face maps), but we can swap the directions of one of the squares to obtain an ordered HDA.
Consider the following HDA $Q$:
\begin{itemize}
 \item  $Q_2 = \{A, B, C\}$,
 \item  $Q_1 = \{a, b, c, d, e, f, g, h, i\}$
 \item  $Q_0 = \{0, 1, 2, 3, 4, 5, 6, 7\}$, $0$ being the initial state,
\end{itemize}
with the following face maps:
\begin{itemize}
 \item  $s_2 (A) = s_1 (B) = a$,
 \item  $s_2 (B) = s_1 (C) = b$,
 \item  $s_2 (C) = s_1 (A) = c$,
 \item  $s_1 (a) = s_1 (b) = s_1 (c) = 0$
 \item  the other face maps are not of importance for the example.
\end{itemize}

\tikzset{zerocell/.style={circle, draw, minimum size=0.6cm, inner sep=0pt}}
\tikzset{onecell/.style={draw, minimum size=0.6cm, inner sep=0pt}}
\tikzset{twocell/.style={draw,regular polygon, regular polygon sides=6, minimum size=0.7cm, inner sep=0pt}}
\begin{figure}[h]
\begin{tikzpicture}[>=stealth',every node/.style={minimum size=0.8cm}]
	\node[zerocell] (0) at (0,0) {$0$};
	\node[zerocell] (1) at (-4,0) {$1$};
	\node[zerocell] (2) at (-2,3.32) {$2$};
	\node[zerocell] (3) at (2,3.32) {$3$};
	\node[zerocell] (4) at (4,0) {$4$};
	\node[zerocell] (5) at (2,-3.32) {$5$};
	\node[zerocell] (6) at (-2,-3.32) {$6$};
	\node[onecell] (a) at (-2,0) {$a$};
	\node[onecell] (c) at (1,1.66) {$c$};
	\node[onecell] (b) at (1,-1.66) {$b$};
	\node[onecell] (d) at (-3,1.66) {$d$};
	\node[onecell] (e) at (0,3.32) {$e$};
	\node[onecell] (f) at (3,1.66) {$f$};
	\node[onecell] (g) at (3,-1.66) {$g$};
	\node[onecell] (h) at (0,-3.32) {$h$};
	\node[onecell] (i) at (-3,-1.66) {$i$};
	\node[twocell] (A) at (-1, 1.66) {$A$};
	\node[twocell] (B) at (-1, -1.66) {$B$};
	\node[twocell] (C) at (2, 0) {$C$};
	\path (a) edge (0) edge (1);
	\path (b) edge (0) edge (5);
	\path (c) edge (0) edge (3);
	\path (d) edge (1) edge (2);
	\path (e) edge (2) edge (3);
	\path (f) edge (3) edge (4);
	\path (g) edge (4) edge (5);
	\path (h) edge (5) edge (6);
	\path (i) edge (6) edge (1);
	\path (A) edge (a) edge (c) edge (d) edge (e);
	\path (B) edge (a) edge (b) edge (h) edge (i);
	\path (C) edge (b) edge (c) edge (f) edge (g);
	\node[above=-7pt] at (-1,0) {\scriptsize $s_1$};
	\node[above=-7pt] at (0,1.66) {\scriptsize $s_1$};
	\node[above=-7pt] at (1,3.32) {\scriptsize $s_1$};
	\node[above=-7pt] at (0,-1.66) {\scriptsize $s_2$};
	\node[above=-7pt] at (1,-3.32) {\scriptsize $s_1$};
	\node[above=-7pt] at (-3,0) {\scriptsize $t_1$};
	\node[above=-7pt] at (-2,1.66) {\scriptsize $t_1$};
	\node[above=-7pt] at (-1,3.32) {\scriptsize $t_1$};
	\node[above=-7pt] at (-2,-1.66) {\scriptsize $t_2$};
	\node[above=-7pt] at (-1,-3.32) {\scriptsize $t_1$};

	\node[right=-5pt] at (0.5,0.83) {\scriptsize $s_1$};
	\node[right=-5pt] at (-1.5,0.83) {\scriptsize $s_2$};
	\node[right=-5pt] at (-3.5,0.83) {\scriptsize $s_1$};
	\node[right=-5pt] at (1.5,-0.83) {\scriptsize $s_1$};
	\node[right=-5pt] at (2.5,-2.49) {\scriptsize $s_1$};
	\node[right=-5pt] at (1.5,2.49) {\scriptsize $t_1$};
	\node[right=-5pt] at (-0.5,2.49) {\scriptsize $t_2$};
	\node[right=-5pt] at (-2.5,2.49) {\scriptsize $t_1$};
	\node[right=-5pt] at (2.5,0.83) {\scriptsize $t_1$};
	\node[right=-5pt] at (3.5,-0.83) {\scriptsize $t_1$};

	\node[left=-5pt] at (0.5,-0.83) {\scriptsize $s_1$};
	\node[left=-5pt] at (-1.5,-0.83) {\scriptsize $s_1$};
	\node[left=-5pt] at (-3.5,-0.83) {\scriptsize $s_1$};
	\node[left=-5pt] at (1.5,0.83) {\scriptsize $s_2$};
	\node[left=-5pt] at (2.5,2.49) {\scriptsize $s_1$};
	\node[left=-5pt] at (1.5,-2.49) {\scriptsize $t_1$};
	\node[left=-5pt] at (-0.5,-2.49) {\scriptsize $t_1$};
	\node[left=-5pt] at (-2.5,-2.49) {\scriptsize $t_1$};
	\node[left=-5pt] at (2.5,-0.83) {\scriptsize $t_2$};
	\node[left=-5pt] at (3.5,0.83) {\scriptsize $t_1$};

    \path[fill=black!10] (0-10,0) to (1-10,1.66) to (3-10,1.66) to (4-10,0) to (3-10,-1.66) to (1-10,-1.66);
    \node[state] (q0) at (2-10,0) {};
    \node[state] (q1) at (0-10,0) {};
    \node[state] (q2) at (1-10,1.66) {};
    \node[state] (q3) at (3-10,1.66) {};
    \node[state] (q4) at (4-10,0) {};
    \node[state] (q5) at (3-10,-1.66) {};
    \node[state] (q6) at (1-10,-1.66) {};
    \path (q0) edge (q1) edge (q3) edge (q5);
    \path (q1) edge (q2) edge (q6);
    \path (q3) edge (q2) edge (q4);
    \path (q5) edge (q6) edge (q4);
    \node [above=-7pt] at (1-10,0) {$a$};
    \node [right=-7pt] at (2.5-10,0.83) {$c$};
    \node [left=-7pt] at (2.5-10,-0.83) {$b$};
    \node at (1.5-10,0.83) {$A$};
    \node at (1.5-10,-0.83) {$B$};
    \node at (3-10,0) {$C$};

\end{tikzpicture}
\end{figure}

This is not ordered because antisymmetry is broken by the following: 
$\lambda(a)=\lambda_1(A)\eventless\lambda_2(A)=\lambda(c)$ and 
$\lambda(b)=\lambda_1(B)\eventless\lambda_2(B)=\lambda(a)$ and 
$\lambda(c)=\lambda_1(C)\eventless\lambda_2(C)=\lambda(b)$.
Note that $Q$ is almost the union of the three start faces of a 3-cell, except that the face maps of $A$ are in the ``wrong'' order, \ie~in a cube we would have the following:
\begin{itemize}
 \item  $s_1 (A) = s_1 (B) = a$,
 \item  $s_2 (B) = s_1 (C) = b$,
 \item  $s_2 (C) = s_2 (A) = c$.
\end{itemize}
However, we can create a slightly different HDA (called symmetric variant) that would be ordered, by only changing the maps of one of the offending squares, \ie~take the HDA from above only with $A'$ instead of $A$, such that $s_1( A')= a$ and $s_2( A')= c$. We could have, alternatively, reordered the maps of $B$ or $C$ to obtain other ordered symmetric variants.
\end{exa}

In the rest of the paper we consider only ordered HDAs.  As we show
later, in Proposition~\ref{prop_orderedNeededForSculpting}, only
ordered HDAs can be sculpted.  The result below shows that
this is not a restriction, as any consistent precubical set can be
ordered by re-arranging its face maps.

\begin{prop}
For every consistent precubical set $Q$ there exists 
an ordered
precubical set $Q'$ that is a symmetric variant of $Q$.
\end{prop}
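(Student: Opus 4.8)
The plan is to construct $Q'$ by \emph{re-numbering the axes of each cube} according to a single global linear order on events, and then to verify that this re-numbering (i) produces a genuine precubical set, (ii) leaves the event set $\universalEvents{Q}$ and the labelling $\lambda$ untouched, so that $Q'$ really is a symmetric variant of $Q$, and (iii) forces the resulting precedence relation $\eventless$ to be contained in the chosen linear order, whence it is antisymmetric and irreflexive.

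Concretely, I would first fix an arbitrary strict linear order $\prec$ on the set $\universalEvents{Q}$ of universal labels (a linear order exists on any set). Since $Q$ is consistent, Lemma~\ref{le:multilabels}(3) guarantees that for every $q\in Q_n$ the labels $\lambda_1(q),\dots,\lambda_n(q)$ are pairwise distinct; hence for each label $a$ occurring among them there is a unique index $i$ with $\lambda_i(q)=a$, and I may write $s_{[a]}(q)\defequal s_i(q)$ and $t_{[a]}(q)\defequal t_i(q)$ for the face maps ``indexed by the event $a$''. Now define $Q'$ to have the same underlying graded set, the same initial cell, and, writing $a_1\prec\dots\prec a_n$ for the $\prec$-sorted list of the labels of $q\in Q_n$, the face maps $s'_k(q)\defequal s_{[a_k]}(q)$ and $t'_k(q)\defequal t_{[a_k]}(q)$. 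Thus the new $k$-th axis of every cube is its axis carrying the $k$-th smallest event.

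The technical heart --- and the step I expect to be the main obstacle --- is checking that $Q'$ satisfies the precubical identities~\eqref{eq:precubical}. The key lemma is that the event-indexed face maps \emph{commute}: for distinct labels $a,b$ of $q$ one has $\alpha_{[a]}\beta_{[b]}(q)=\beta_{[b]}\alpha_{[a]}(q)$ for all $\alpha,\beta\in\{s,t\}$. This follows by choosing the representing indices $i,j$ with $\lambda_i(q)=a$, $\lambda_j(q)=b$, say $i<j$, and combining the original identity~\eqref{eq:precubical} with Lemma~\ref{le:multilabels}(1), which records exactly how the index of a label shifts when one passes to a face (unchanged below the deleted axis, decreased by one above it). Granting this commutation, the precubical identity for $s',t'$ becomes bookkeeping: in $\alpha'_k\beta'_\ell(q)$ with $k<\ell$, deleting the $\ell$-th smallest event first does not disturb the ranking of the first $\ell-1$ events, so $\alpha'_k\beta'_\ell(q)=\alpha_{[a_k]}\beta_{[a_\ell]}(q)$, and symmetrically $\beta'_{\ell-1}\alpha'_k(q)=\beta_{[a_\ell]}\alpha_{[a_k]}(q)$; the two sides agree by commutation.

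Finally I would observe that the re-numbering does not change the spanning set $\{(s_i q,t_i q)\mid q\in Q_2,\, i\}$ of $\eventEquivBulk$ --- permuting axes only permutes the pairs $(s_{[a]}q,t_{[a]}q)$ among themselves --- so $\universalEvents{Q'}=\universalEvents{Q}$ and $\lambda$ is unchanged; this is precisely what makes $Q'$ a symmetric variant of $Q$ in the sense of Example~\ref{exa_NOTordered}. A direct computation then gives $\lambda'_i(q)=a_i$, i.e. the multilabels of every cube of $Q'$ are $\prec$-increasing, so each generator $(\lambda(s'_2 q),\lambda(s'_1 q))=(\lambda'_1(q),\lambda'_2(q))$ of $\eventless$ lies in $\prec$. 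Since $\prec$ is transitive, the transitive closure $\eventless$ is contained in $\prec$ and is therefore irreflexive and antisymmetric; hence $Q'$ is ordered.
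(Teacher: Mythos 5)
Your proof is correct and is essentially the paper's own construction: your event-indexed face maps $s_{[a_k]}(q)$, $t_{[a_k]}(q)$ coincide with the paper's $s_{\sigma(q)(k)}(q)$, $t_{\sigma(q)(k)}(q)$ for the sorting permutation $\sigma(q)$, and your commutation lemma for event-indexed faces is a label-theoretic restatement of the paper's index-shift bookkeeping (its Lemma~\ref{lem:B} combined with Lemma~\ref{lem:AltPreCubRel}). The difference is purely presentational; if anything, you make explicit two steps the paper's appendix leaves implicit, namely that $\universalEvents{Q'}=\universalEvents{Q}$ with unchanged $\lambda$, and that the generators of $\eventless$ land inside the chosen linear order $\prec$, which yields orderedness.
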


By a ``symmetric variant'' we mean that $Q$ and $Q'$ are isomorphic when regarded as \emph{symmetric} precubical sets \cite{GrandisM02site}. In particular, there is a bijection between the set of paths on $Q$ and the set of paths on $Q'$ (see the proof details in Appendix~\ref{sec_ordered_pcs}).

\paragraph*{ST-structures}\label{subsec_STstructures}

An \emph{ST-configuration} over a finite set $E$ of \emph{events} is a
pair $( S, T)$ of sets $T\subseteq S\subseteq E$.  An
\emph{ST-structure} is a pair $\ST=( E, \sts)$ consisting of a finite
set $E$ of events and a set $\sts$ of ST-configurations over
$E$.

Intuitively, in an ST-configuration $( S, T)$ the set $S$ contains
events which have \emph{started} and $T$ contains events which have
\emph{terminated}.  Hence the condition $T\subseteq S$: only events
which have already started can terminate.  The events in
$S\setminus T$ are running \emph{concurrently}, and we call
$\num{ S\setminus T}$ the \emph{concurrency degree} of $( S, T)$.

The notion of having events which are currently running, \ie~started
but not terminated, is a key aspect captured by ST-structures and also
by HDA through their higher dimensional cells.  Other event-based
formalisms such as configuration structures~\cite{GlabbeekP95config,
  DBLP:journals/tcs/GlabbeekP09} or event
structures~\cite{DBLP:journals/tcs/NielsenPW81, Winskel86} cannot
express this.

A \emph{step} between two ST-configurations is either
\begin{description}
\item[s-step] $(S,T)\transitions{e}(S',T')$ with $T= T'$, $e\notin S$
  and $S'= S\cup\{ e\}$, or
\item[t-step] $(S,T)\transitiont{e}(S',T')$ with $S= S'$, $e\notin T$,
  and $T'= T\cup\{ e\}$.
\end{description}
When the type is unimportant we write $\smash{\transition{e}}$.
A \emph{path} of an ST-structure, denoted $\pi$, is a sequence of steps, where the end of one is the beginning of the next, \ie
\[
\pi\defequal(S,T)\transition{e}(S',T')\transition{e'}(S'',T'')\dots
\]
A path is \emph{rooted} if it starts in $(\emptyset,\emptyset)$.
An ST-structure $\ST=( E, \sts)$ is said to be
\begin{enumerate}
  \renewcommand{\labelenumi}{\theenumi}
  \renewcommand{\theenumi}{(\Alph{enumi})}%
\item \emph{rooted} if $( \emptyset, \emptyset)\in \sts$;
\item \emph{connected} if for any $( S, T)\in \sts$ there exists a
  rooted path ending in $(S,T)$;
\item \emph{closed under single events} if, for all $( S, T)\in \sts$
  and all $e\in S\setminus T$, also $( S, T\cup\{ e\})\in \sts$ and
  $( S\setminus\{ e\}, T)\in \sts$.
\end{enumerate}
$\ST$ is \emph{regular} if it satisfies all three conditions above.
Figure~\ref{fig:conflict}(center) shows a regular ST-structure.

ST-structures were introduced in~\cite{Johansen16STstruct} as an
event-based counterpart of HDA that are also a natural extension of
configuration structures and event structures.

The notions of rootedness and connectedness for ST-structures are
similar to connectedness for HDA.  The notion of being closed under
single events mirrors the fact that cells in HDA have all their faces,
and (by non-selflinkedness) these are all distinct.
Thus regularity is assumed in some of the results below.

A \emph{morphism} of ST-structures $( E, \sts)\to( E', \sts')$ is a
partial function $f: E\parto E'$ of events which preserves
ST-configurations (\ie~for all $( S, T)\in \sts$ we have $f( S, T):=( f( S), f( T))\in \sts'$)
and is \emph{locally total and injective} (\ie~for all $( S, T)\in \sts$, the restriction
$f\rest S: S\to E'$ is a total and injective function).
This defines a
category $\categoryST$ of ST-structures.
Two ST-structures are isomorphic, denoted $\ST \isomorphic \ST'$,
if there exists a bijective morphism between them.

\begin{defi}
  Let\/ $\ST=( E, \sts)$ be an ST-structure and
  $\mathord\sim\subseteq E\times E$ an equivalence relation.  The
  \emph{quotient} of\/ $\ST$ under $\sim$ is the ST-structure\/
  $\quotientofwrt{\ST}{\sim}=( \quotientofwrt{E}{\sim},
  \quotientofwrt{\sts}{\sim})$, with\/
  $\quotientofwrt{\sts}{\sim}=\{( \quotientofwrt{S}{\sim},
  \quotientofwrt{T}{\sim})\mid( S, T)\in \sts\}$.
\end{defi}

It is clear that $\quotientofwrt{\ST}{\sim}$ is again an ST-structure.
To ease notation we will sometimes denote
$( \quotientofwrt{S}{\sim}, \quotientofwrt{T}{\sim})= \quotientofwrt{(
  S, T)}{\sim}$.  The quotient map
$\quotientMap: \ST\to \quotientofwrt{\ST}{\sim}: e\mapsto[ e]_\sim$ is
generally not an ST-morphism, failing local injectivity.

\begin{defi}
  An equivalence relation $\mathord\sim\subseteq E\times E$ on an
  ST-structure $\ST=( E, \sts)$ is \emph{collapsing} if there is
  $( S, T)\in \sts$ and $e, e'\in S$ with $e\ne e'$ and $e\sim e'$.
  Otherwise, $\sim$ is \emph{non-collapsing}.
\end{defi}

\begin{lem}
  \label{le:quotientST}
  $\mathord\sim\subseteq E\times E$ is non-collapsing iff the quotient
  map
  $\quotientMap: \ST\to \quotientofwrt{\ST}{\sim}: e\mapsto[ e]_\sim$
  is an ST-morphism.
\end{lem}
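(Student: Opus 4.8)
The plan is to unfold the three defining conditions of an ST-morphism for the quotient map $\quotientMap\colon e\mapsto[e]_\sim$ and to observe that two of them hold automatically, so that the whole statement reduces to a restatement of local injectivity.

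First I would note that $\quotientMap$ is a \emph{total} function $E\to\quotientofwrt{E}{\sim}$, hence for every $(S,T)\in\sts$ the restriction $\quotientMap\rest S$ is total; local totality is therefore never an issue. Next, preservation of ST-configurations is built into the definition of the quotient: for $(S,T)\in\sts$ we have $\quotientMap(S,T)=(\quotientofwrt{S}{\sim},\quotientofwrt{T}{\sim})=\quotientofwrt{(S,T)}{\sim}\in\quotientofwrt{\sts}{\sim}$ by the very definition of $\quotientofwrt{\sts}{\sim}$. So the configuration-preservation condition and the totality half of the ``locally total and injective'' condition always hold, regardless of whether $\sim$ collapses.

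This leaves exactly local injectivity. I would show that $\quotientMap\rest S$ fails to be injective for some $(S,T)\in\sts$ iff $\sim$ is collapsing. Indeed, $\quotientMap\rest S$ is non-injective precisely when there are $e,e'\in S$ with $e\ne e'$ but $\quotientMap(e)=\quotientMap(e')$, that is $[e]_\sim=[e']_\sim$, that is $e\sim e'$; and the existence of such $(S,T)$, $e$, $e'$ is by definition the assertion that $\sim$ is collapsing. Contrapositively, $\sim$ is non-collapsing iff $\quotientMap\rest S$ is injective for every $(S,T)\in\sts$, which, combined with the two conditions established above, is exactly the assertion that $\quotientMap$ is an ST-morphism.

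There is no genuine obstacle here: the lemma is a direct bookkeeping of the definitions, and it is essentially forced once one remembers that the quotient ST-structure was defined precisely so that $\quotientMap$ preserves configurations. The only point demanding care is that local injectivity must be tested against every configuration $(S,T)\in\sts$ separately, rather than against $E$ as a whole, so that it matches the configuration-indexed definition of \emph{collapsing} exactly.
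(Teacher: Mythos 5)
Your proposal is correct and follows essentially the same route as the paper's proof, which also reduces the lemma to the equivalence between non-collapsing and local injectivity of $\quotientMap$. Your version is in fact slightly more complete, since you explicitly verify configuration preservation and local totality, which the paper leaves implicit.
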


\begin{proof}%
  If $\sim$ is collapsing, then $\quotientMap$ is not locally
  injective.  For the other direction, assume that $\quotientMap$ is
  not locally injective, then there is $( S, T)\in \sts$ and
  $e, e'\in S$ with $e\ne e'$ and
  $\quotientMap( e)= \quotientMap( e')$, thus $e\sim e'$: ergo $\sim$
  is collapsing.
\end{proof}

We want to compare sculptures and ST-structures. To this end, we
introduce a small but important modification to ST-structures which
consists in \emph{ordering} the events.

Let $E$ be a totally ordered set of events with ordering $<$.  An
\emph{ordered ST-structure} is an ST-structure on $E$, and morphisms
$f:( E, \sts)\to( E', \sts')$ of ordered ST-structures respect the
ordering, \ie~if $f( e_1)$ and $f( e_2)$ are defined and $e_1< e_2$, then
$f( e_1)< f( e_2)$.  This defines a category of ordered ST-structures.

\paragraph*{Chu spaces}
\label{subsec_Chu_andST}

The model of Chu spaces has been developed by Gupta and
Pratt~\cite{gupta94phd_chu, pratt95chu} in order to study the
event-state duality~\cite{Pratt92concur}.  A \textit{Chu space} over a
finite set $K$ is a triple $\chuPrat=(E,r,X)$ with $E$ and $X$ sets
and $r:E\times X\rightarrow K$ a function called the \emph{matrix} of
the Chu space.

Chu spaces can be viewed in various equivalent
ways~\cite[Chap.~5]{gupta94phd_chu}.  For our setting, we take the
view of $E$ as the set of events and $X$ as the set of configurations.
The structure $K$ is representing the possible values the events may
take, e.g.: $K= \boldsymbol2= \{0,1\}$ is the classical case of an event
being either not started ($0$) or terminated ($1$), hence Chu spaces
over $\boldsymbol2$ correspond to configuration
structures~\cite{GlabbeekP95config, DBLP:journals/tcs/GlabbeekP09} where an
order of $0<1$ is used to define the steps in the system, \ie~steps
between states must respect the increasing order when lifted pointwise
from $K$ to $X$.

ST-structures capture the ``during'' aspect in the event-based
setting, extending configuration structures with this
notion. Therefore we need another structure
$K= \three= \{0, \executing , 1\}$ with the order
$0 < \executing < 1$, introducing the value \executing\ to stand for
\textit{during}, or \textit{in transition}.  Note
that~\cite{gupta94phd_chu} studies Chu spaces over $\boldsymbol2$,
whereas Pratt proposed to study Chu spaces over $\three$ and other
structures in~\cite{Pratt03trans_cancel}.

A Chu space is \emph{extensional}~\cite{gupta94phd_chu} if it holds
for every $x\ne x'$ that there exists $e\in E$ such that
$r( e, x)\ne r( e, x')$.  We assume extensionality.  Using currying,
we can view a Chu space $( E, r, X)$ over $K$ as a structure
$X\subseteq K^{E}$ (this needs extensionality).  Consequently, we will
often write $x( e)$ instead of $r( e, x)$ below.
A Chu space is \emph{separable} \cite{Pratt02duality} (called T0 in
\cite{gupta94phd_chu}) if no two events are the same, that is, for all
$e\ne e'$ there is $x\in X$ such that $r(e,x)\ne r(e',x)$.

\begin{defi}[translations between ST and Chu]\label{def_STtoChu}
  For an ST-structure $\ST=( E, \sts)$ construct $\chu{E}{X}^{\ST}$
  the associated Chu space over $\three$ with $E$ the set of events
  from $\ST$, and $X\subseteq \three^{E}$ containing for each
  ST-configuration $(S,T)\in \sts$ the state $x^{(S,T)}\in X$ formed
  by assigning to each $e\in E$:
  \begin{itemize}
  \item $e \rightarrow 0$ if $e\not\in S$ and $e\not\in T$;
  \item $e \rightarrow \executing$ if $e\in S$ and $e\not\in T$;
  \item $e \rightarrow 1$ if $e\in S$ and $e\in T$.\footnote{The
      case  $e\notin S\wedge e\in T$ is dismissed by the
      requirement $T\subseteq S$ of ST-configurations.}
  \end{itemize}
  Call this mapping $\stintochu(S,T)$ when applied to an
  ST-configuration and $\stintochu(\ST)$ when applied to an
  ST-structure.
  The other way, we translate an extensional Chu space $(E, X)$
  into an ST-structure over $E$ with one ST-configuration $(S,T)^{x}$
  for each state $x\in X$ using the inverse of the above mapping.  We
  use $\chuintost(x)$ for the ST-configuration obtained from the event
  listing $x$.
\end{defi}

\begin{thmC}[{\cite[Sec.~3.4]{Johansen16STstruct}}]
  \label{prop_STstructChu3}
  For any ST-structure $\ST$,
  $\chuintost(\stintochu(\ST)) \isomorphic \ST$.  For any
  (extensional) Chu space $\chuPrat$ over $\three$,
  $\stintochu( \chuintost( \chuPrat)\isomorphic \chuPrat$.
\end{thmC}

\section{Sculptures}\label{sec_sculptures}

Inspired by the Chu notation for states, we define a \emph{bulk} in two equivalent ways, both of which can be seen as the complete $\chuPrat$ over $\three$.

\begin{defi}\label{def_canonical}
  Let $d\in \Nat$.  The $d$-dimensional bulk $\bulk d$ is the
  precubical set defined as follows.  For $n= 0,\dotsc, d$, let
  \begin{equation*}
    \bulk{d}_n=\big\{( x_1,\dotsc, x_d)\in \{ 0, \executing, 1\}^d\bigmid \num{\{
      i\mid x_i= \executing\}}= n\big\}
  \end{equation*}
  be the set of tuples with precisely $n$ occurrences of $\executing$.
  For $n= 1,\dotsc, d$ and $k= 1,\dotsc, n$, define face maps
  $s_k, t_k: \bulk{d}_n\to \bulk{d}_{ n- 1}$ as follows: for
  $x=( x_1,\dotsc, x_d)\in \bulk{d}_n$ with
  $x_{ i_1}=\dotsm= x_{ i_n}= \executing$, let
  $s_k x=( x_1,\dotsc, 0_{ i_k},\dotsc, x_d)$ and
  $t_k x=( x_1,\dotsc, 1_{ i_k},\dotsc, x_d)$ be the tuples with the
  $k$-th occurrence of $\executing$ set to $0$ or $1$, respectively.
\end{defi}

The \emph{initial state} $\initbulk{d}$ of the bulk $\bulk{d}$ is the
cell $( 0,\dotsc, 0)$.  This turns bulks into HDA.

Let $\bulkST{d}=( \evlist E, \sts)$ be the \emph{complete} ordered
ST-structure on $\evlist E=( 1,\dotsc, d)$, with
$\sts=\{( S, T)\mid T\subseteq S\subseteq \evlist E\}$.
There is a bijection between $\bulkST{d}$ and the bulk $\bulk d$ which maps a
configuration $( S, T)$ to the cell $( x_1,\dotsc, x_d)$ given by
\begin{equation*}
  x_i=
  \begin{cases}
    0 &\text{if } i\notin S\,, \\
    \executing &\text{if } i\in S\setminus T\,, \\
    1 &\text{if } i\in T\,,
  \end{cases}
\end{equation*}
\cf~Def.~\ref{def_STtoChu}, and using the inverse of the above when mapping the other way.  
This bijection induces face maps in \bulkST{d} as follows:
for
  $x=( x_1,\dotsc, x_d)\in \bulk{d}_n$ with
  $x_{ i_1}=\dotsm= x_{ i_n}= \executing$ and 
  $s_k x=( x_1,\dotsc, 0_{ i_k},\dotsc, x_d)$, i.e., with the
  $k$-th occurrence of $\executing$ set to $0$, then for the respective ST-configuration $(S,T)^{x}$ the map is $s_k((S,T)^{x})=(S,T)^{( x_1,\dotsc, 0_{ i_k},\dotsc, x_d)}=(S^{x}\setminus \{i_k\} , T^{x})$.
Conversely, \bulkST{d} can be equipped with face maps as $s_k((S,T))=(S\setminus \{i_k\},T)$ and $t_k((S,T))=(S,T\cup\{i_k\})$ 
with $i_k\in \evlist E$ being the $k^{th}$ event in the subset listing $\evlist E \restrictedToSet{S\setminus T}$.
We will use these two notions of bulk interchangeably and denote this bijection as $\bulkST{d} \bijection \bulk d$.

Let $d\le d'$ and $b:\{ 1,\dotsc, d\}\to\{ 1,\dotsc, d'\}$ a strictly
increasing function.  This defines an embedding, also denoted $b:
\bulk d\hookrightarrow \bulk{ d'}$, mapping any cell $( t_1,\dotsc,
t_d)$ to $( u_1,\dotsc, u_{ d'})$ given by
\begin{equation*}
  u_i=
  \begin{cases}
    t_j &\text{if } i= b( j)\,, \\
    0 &\text{if } i\notin \im( b)\,.
  \end{cases}
\end{equation*}
Every HDA morphism $\bulk d\to \bulk{ d'}$ is of this form (but not
every precubical morphism because these do not need to preserve the initial state), and there are no morphisms $\bulk d\to
\bulk{ d'}$ for $d> d'$.

\begin{lem}\label{lemma_universalEvents_in_Bulk}
Fix a bulk $\bulk{d}$ and take two transitions $(t_{1},\dots,t_{d})$ and $(u_{1},\dots,u_{d})$, and let $k$ be the unique index s.t.\ $t_{k}=\executing$, and the same for $u_{l}=\executing$. Then the two transitions represent the same event, \ie~$(t_{1},\dots,t_{d}) \eventEquivBulk (u_{1},\dots,u_{d})$, iff $k=l$.
Therefore, the set of universal events is $\universalEvents{\bulk{d}}=\{1,\dots,d\}$.
Moreover, the order $\eventless$ on $\universalEvents{\bulk{d}}$ agrees with the natural order on $\{1,\dots,d\}$.
\end{lem}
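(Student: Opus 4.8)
The plan is to analyze the $2$-cells of $\bulk{d}$ explicitly, since both $\eventEquivBulk$ and $\eventless$ are generated from the faces of $2$-cells. Recall from Definition~\ref{def_canonical} that a transition is a tuple with exactly one occurrence of $\executing$, and a $2$-cell is a tuple with exactly two occurrences, say at positions $i<j$ (so that $i$ is the first and $j$ the second occurrence of $\executing$). For such a $2$-cell $q$ the faces $s_1 q$ and $t_1 q$ both carry $\executing$ at position $j$, differing only by $0$ versus $1$ at position $i$, while $s_2 q$ and $t_2 q$ both carry $\executing$ at position $i$, differing only at position $j$. Hence every generating pair $(s_m q, t_m q)$ of $\eventEquivBulk$ relates two transitions with the \emph{same} $\executing$-position.

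For the ``only if'' direction I would simply observe that, since each generator preserves the $\executing$-position and $\eventEquivBulk$ is the reflexive-symmetric-transitive closure of these generators, $(t_1,\dotsc,t_d)\eventEquivBulk(u_1,\dotsc,u_d)$ forces $k=l$. For the converse, fix a position $k$ and take any two transitions $v,v'$ whose $\executing$ sits at $k$; these are determined by their $\{0,1\}$-values at the remaining $d-1$ coordinates, so it suffices to connect any two that differ in a single coordinate $i\neq k$. Given such $v$ (with $v_i=0$) and $v'$ (with $v_i=1$), I would exhibit the $2$-cell $q$ obtained by placing $\executing$ at both $k$ and $i$ and keeping all other coordinates as in $v$. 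Depending on whether $i<k$ or $i>k$, position $i$ is the first or the second occurrence of $\executing$ in $q$, so $\{v,v'\}=\{s_1 q,t_1 q\}$ or $\{v,v'\}=\{s_2 q,t_2 q\}$ respectively; in either case $v\eventEquivBulk v'$, and transitivity yields the claim. This identifies the equivalence classes of $\eventEquivBulk$ with the $\executing$-positions, whence $\universalEvents{\bulk{d}}=\{1,\dotsc,d\}$.

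For the order, I would compute from the faces above that, for a $2$-cell $q$ with $\executing$ at positions $i<j$, one has $\lambda(s_2 q)=i$ and $\lambda(s_1 q)=j$. Thus the generating relation of $\eventless$ consists exactly of the pairs $(i,j)$ with $i<j$, every such pair being realized by the $2$-cell with $\executing$ at $i,j$ and $0$ elsewhere. Since $<$ is already transitive its transitive closure is again $<$, and conversely any $\eventless$-chain strictly increases its entries; therefore $\eventless$ coincides with the natural order on $\{1,\dotsc,d\}$.

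The only delicate point is the bookkeeping in the converse direction, namely correctly tracking whether the coordinate $i$ is flipped by $s_1/t_1$ or by $s_2/t_2$ according as $i$ precedes or follows $k$; but this is routine once the faces of a $2$-cell have been written out as above.
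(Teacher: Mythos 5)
Your proof is correct and follows essentially the same route as the paper's: both arguments show that the generating pairs $(s_m q, t_m q)$ of $\eventEquivBulk$ preserve the position of $\executing$, connect any two transitions sharing that position by flipping one $\{0,1\}$-coordinate at a time through a suitable $2$-cell (your chain via transitivity is the paper's induction on the number of differing coordinates), and read off the order from $\lambda(s_2 q)\eventless\lambda(s_1 q)$. Your explicit $i<k$ versus $i>k$ bookkeeping and your remark that every pair $i<j$ is realized by a $2$-cell are slightly more detailed than the paper's wording, but not a different argument.
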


\begin{proof}
	For a transition $t=(t_1,\dots,t_d)$ in $\bulk{d}$ let $dir(t)$ be the unique index such that $t_{dir(t)}=\executing$. Let $y=(y_1,\dots,y_d)$ be an arbitrary 2-cube and its two unique indices $k<l$ such that $y_k=y_l=\executing$. We have $dir(s_1(y))=dir(t_1(y))=l$ and $dir(s_2(y))=dir(t_2(y))=k$. Therefore, $t\eventEquivBulk u$ implies $dir(t)=dir(u)$.
	
	We will show that $t\eventEquivBulk u$ for any two transitions $t,u$ with $dir(t)=dir(u)=k$. Induction with respect to the number $m$ of indices $i$ such that $t_i\neq u_i$. If $m=0$, then $t=u$ and there is nothing to prove. If $m>0$, then choose any index $j$ such that $t_j\neq u_j$; the square
	\[
		(t_1,\dots, t_{j-1}, \executing, t_{j+1},\dots, t_d)
	\]
	assures that $t\eventEquivBulk (t_1,\dots, t_{j-1}, u_j, t_{j+1},\dots, t_d)=t'$, and $t'\eventEquivBulk u$ by the inductive hypothesis. As a consequence, $dir: \universalEvents{\bulk{d}}\to \{1,\dots,d\}$ is a bijection.
	
	For a square $y$ with $y_k=y_l=\executing$,  $k<l$, we have $dir(s_2(y))=k<l=dir(s_1(y))$ and $\lambda(s_2(y)) \eventless \lambda(s_1(y))$. %
Therefore, the order $\eventless$ on $\universalEvents{\bulk{d}}$ agrees with the order on $\{1,\dotsc,d\}$.
\end{proof}

\begin{defi}\label{def_sculptures}
  A \emph{sculpture}, denoted $\sculpture{Q}{\embedMorphism}{\bulk{d}}$, is an HDA $Q$ together with a bulk $\bulk{d}$ and
  an HDA embedding $\embedMorphism: Q\hookrightarrow \bulk{d}$. 
A \emph{morphism} of sculptures
  $\sculpture{ Q}{\embedMorphism}{ \bulk{ d}}$,
  $\sculpture{ Q'}{\embedMorphism'}{ \bulk{ d'}}$ 
is a pair of HDA
  morphisms $f: Q\to Q'$, $b: \bulk{ d}\to \bulk{ d'}$ such that the
  square
  \begin{equation*}
    \xymatrix{%
      Q \ar[d]_{\embedMorphism} \ar[r]_f & Q' \ar[d]^{\embedMorphism'}
      \\ \bulk{d} \ar[r]^b & \bulk{d'}
    }
  \end{equation*}
  commutes, \ie~$b\circ \embedMorphism= \embedMorphism'\circ f$.

  We say that an HDA $Q$ is \emph{sculptable} if there exists a sculpture $\sculpture{Q}{\embedMorphism}{ \bulk{d}}$.
\end{defi}

For a morphism $( f, b)$ as above, we must have $d'\ge d$ and $b$
injective, hence also $f$ is injective.  Two sculptures are
isomorphic, denoted $\isomorphic$, when $f$ and $b$ are isomorphisms
(implying $d= d'$ and $b= \id$).

For the special case of $Q= Q'$ above, we see that any sculpture
$\sculpture{Q}{\embedMorphism}{\bulk{d}}$  
can be
\emph{over-embedded} into a sculpture
$\sculpture{Q}{b\circ \embedMorphism}{\bulk{d'}}$ 
for $d'> d$.  Conversely, any sculpture
$\sculpture{Q}{\embedMorphism}{\bulk{d}}$ admits a \emph{minimal}
bulk $\bulk{d_\textup{min}}$ for which
$\sculpture{ Q}{\embedMorphism' }{\sculpture{\bulk{d_\textup{min}}}{b'}{ \bulk{d}}}$ with $b'\circ\embedMorphism'=\embedMorphism$,
\ie~such that there is no factorization of the embedding of $Q$ through $\bulk{d'}$ for any
$d'< d_\textup{min}$.  We call such a minimal embedding
\emph{simplistic}.

\begin{rem}
  One precubical set can be seen as sculpted 
  from two different-dim\-ens\-ional bulks, in both cases being a
  simplistic sculpture, \ie~it all depends on the embedding morphism
  (\cf~Figure~\ref{fig:conflict}). Because of this we cannot determine
  from an HDA alone in which sculpture it enters (if any).

  Working with unfoldings is not particularly good either. The
  interleaving square from Figure~\ref{fig:opensquare} (left) can be
  sculpted from $\bulk2$, but its unfolding may be sculpted
  simplistically from
  $\bulk3$ or $\bulk4$; we cannot decide which.

  All the sculptures in Figs.~\ref{fig:conflict}
  and~\ref{fig:opensquare} are simplistic.
\end{rem}

\begin{prop}\label{prop_orderedNeededForSculpting}
	If an HDA $Q$ is not 
ordered, 
then it is not sculptable.
\end{prop}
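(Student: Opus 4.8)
The plan is to argue by contraposition, leveraging the functoriality of the universal-event construction together with the fact that the bulk is itself ordered. So I would begin by assuming that $Q$ is sculptable, fixing a HDA embedding $\embedMorphism: Q\hookrightarrow \bulk{d}$ for some $d$. The first ingredient I would record is that $\bulk{d}$ is ordered: by Lemma~\ref{lemma_universalEvents_in_Bulk} the relation $\eventless$ on $\universalEvents{\bulk{d}}=\{1,\dots,d\}$ coincides with the natural order on $\{1,\dots,d\}$, which is a strict total order and hence both antisymmetric and irreflexive.

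Next I would invoke the functoriality of $\universalEvents{}$ established just after the definition of $\eventless$: since $\embedMorphism$ is a precubical morphism, the induced map $g=\universalEvents{\embedMorphism}:\universalEvents{Q}\to\universalEvents{\bulk{d}}$ \emph{preserves} $\eventless$, i.e.\ $a\eventless b$ implies $g(a)\eventless g(b)$. (Note that I would not need $g$ to be injective, only order-preserving, so the embedding hypothesis is used only insofar as it yields a morphism.) Then, supposing toward a contradiction that $Q$ is \emph{not} ordered, the definition gives events $a,b\in\universalEvents{Q}$ with $a\eventless b$ and $b\eventless a$ holding simultaneously. Applying order-preservation twice yields $g(a)\eventless g(b)$ and $g(b)\eventless g(a)$ in $\bulk{d}$.

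Finally I would close the argument by contradicting the orderedness of $\bulk{d}$: the strict total order on $\{1,\dots,d\}$ cannot satisfy both $g(a)\eventless g(b)$ and $g(b)\eventless g(a)$, since if $g(a)=g(b)$ this violates irreflexivity and if $g(a)\neq g(b)$ it violates antisymmetry. Hence $Q$ must have been ordered, which is the contrapositive of the claim.

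The proof is short because the substantive content is already packaged in Lemma~\ref{lemma_universalEvents_in_Bulk} (orderedness of $\bulk{d}$) and in the functoriality of $\universalEvents{}$. The only delicate point is to make sure the preservation property is applied to the transitive closure $\eventless$ and not merely to its generating relation $\{(\lambda(s_2(q)),\lambda(s_1(q)))\mid q\in Q_2\}$, and to observe that irreflexivity of the bulk's order already subsumes the degenerate case $a=b$ (equivalently, an event $a$ with $a\eventless a$). I do not anticipate any genuine obstacle beyond this bookkeeping.
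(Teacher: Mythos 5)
Your proposal is correct and follows essentially the same route as the paper: both reduce to the functoriality of $\universalEvents{}$ (the induced map $\universalEvents{\embedMorphism}$ preserves $\eventless$) together with the fact, from Lemma~\ref{lemma_universalEvents_in_Bulk}, that $\eventless$ on $\universalEvents{\bulk{d}}$ is the natural strict order on $\{1,\dots,d\}$. The only cosmetic difference is that the paper uses transitivity to collapse the failure of orderedness to a single element $a$ with $a\eventless a$ and derives the contradiction from irreflexivity alone, whereas you carry the pair $a,b$ through and split cases at the end --- a trivial variation.
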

\begin{proof}
	Any precubical morphism $\embedMorphism:Q\to\bulk{d}$ induces a map
	\[
		\universalEvents{Q}
			\xrightarrow{\universalEvents{\embedMorphism}}
		\universalEvents{\bulk{d}}
			\simeq
		\{1<2<\dots<d\}.
	\]
	If $Q$ is not 
ordered, 
then there exists $a\in \universalEvents{Q}$ such that $a\eventless a$, which implies $\universalEvents{\embedMorphism}(a)\eventless \universalEvents{\embedMorphism}(a)$, which is a contradiction.
\end{proof}

We show that sculptures and regular ordered ST-structures are
in bijective correspondence
while also respecting the computation steps. This result also
resolves the open problem noticed in
\cite[Sec.~3.3]{Johansen16STstruct} that there is no adjunction between
ST-structures and general HDA.

Recall that an ST-structure is regular if it is rooted, connected, and
closed under single events.  Through the
observation from Section~\ref{subsec_Chu_andST} the results in this
section extend to (regular) Chu spaces over $\three$ as well.

\begin{defi}[from ordered regular ST-structures to
  sculptures]\label{def_stintosculptures}
  We define a mapping $\stintosculpture$ that for any regular ordered 
  ST-structure $S$ on events $\evlist E=\{ e_1,\dotsc, e_d\}$,
  generates an HDA, as well as a bulk and an embedding, thus a
  sculpture, $\stintosculpture( S)$, as follows.  By the 
bijection
  between the complete ST-structure $\bulkST{d}$ on events
  $\evlist E$ and $\bulk d$, there is an embedding $S\hookrightarrow 
\bulkST{d} 
\bijection
\bulk d$, where $\hookrightarrow$ simply maps $e_i\in\evlist E$ to $i\in \bulkST{d}$.
  $\stintosculpture( S)$ is given by the composed embedding.
\end{defi}

\begin{defi}[from sculptures to ordered regular ST-structures]
  \label{def_sculptures_to_ST}
  Define a mapping $\sculpintost$ which to a sculpture
  $\sculpture{Q}{\embedMorphism}{\bulk{d}}$ associates the
  ST-structure $\sculpintost(\sculpture{Q}{\embedMorphism}{\bulk{d}})$ as follows.  By the
bijection
between $\bulk d$ and the complete ST-structure $\bulkST{d}$
  on events $\{ 1,\dotsc, d\}$, there is an embedding
  $Q \stackrel{\embedMorphism}{\hookrightarrow} \bulk d
\bijection
\bulkST{d}$.
  $\sculpintost(\sculpture{Q}{\embedMorphism}{\bulk{d}})$ is given by the composed embedding.
\end{defi}

It is clear that
$\sculpintost(\sculpture{Q}{\embedMorphism}{\bulk{d}})$ is rooted, connected and closed under
single events, \ie~regular.

The following result shows a one-to-one correspondence between regular ordered
ST-structures and sculptures; the proof is clear by composition of the
mappings
above.

\begin{thm}
  \label{prop_stSculptst}
  \label{prop_stSculptst-2}
  For any regular ordered ST-structure $\ST$,
  $\sculpintost(\stintosculpture(\ST))\isomorphic \ST$.  For any
  sculpture $\sculpture{Q}{\embedMorphism}{\bulk{d}}$,
  $\stintosculpture(\sculpintost(\sculpture{Q}{\embedMorphism}{\bulk{d}}))\isomorphic
  \sculpture{Q}{\embedMorphism}{\bulk{d}}$.
\end{thm}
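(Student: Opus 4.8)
The plan is to exploit the fact that both $\stintosculpture$ and $\sculpintost$ are built from one and the same canonical isomorphism $\Phi_d : \bulk d \isomorphic \bulkST d$ supplied by Theorem~\ref{prop_STstructChu3}; on cells $\Phi_d$ acts as $\chuintost$ and $\Phi_d^{-1}$ as $\stintochu$. Indeed, $\stintosculpture$ turns a sub-ST-structure $\ST\subseteq\bulkST d$ into the sub-HDA $\Phi_d^{-1}(\ST)\subseteq\bulk d$, while $\sculpintost$ carries the image $\embedMorphism(Q)\subseteq\bulk d$ to the sub-ST-structure $\Phi_d(\embedMorphism(Q))\subseteq\bulkST d$; so each round trip collapses to a composite of $\Phi_d$ with its inverse. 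First I would fix this notation and record that $\sculpintost(\sculpture{Q}{\embedMorphism}{\bulk d})$ is an ST-structure on exactly the $d$ events $\{1,\dots,d\}$ of $\bulk d$, so that the dimension of the ambient bulk is preserved under both mappings.

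For the first identity, let $\ST=(E,\sts)$ be a regular ordered ST-structure with $E=\{e_1<\dots<e_d\}$. By definition $\stintosculpture(\ST)$ is the sculpture whose HDA is the image $\Phi_d^{-1}(\ST)\subseteq\bulk d$ (using the order-bijection $e_i\mapsto i$ to identify $\ST$ with a sub-ST-structure of $\bulkST d$) and whose embedding $\embedMorphism$ is the inclusion into $\bulk d$. Applying $\sculpintost$ post-composes $\embedMorphism$ with $\Phi_d$, yielding the ST-structure $\Phi_d(\Phi_d^{-1}(\ST))=\ST$, read back on events $\{1,\dots,d\}$. Up to the order-isomorphism $i\leftrightarrow e_i$ — under which ordered ST-structures are identified anyway — this is $\ST$, giving $\sculpintost(\stintosculpture(\ST))\isomorphic\ST$.

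For the second identity, start from a sculpture $\sculpture{Q}{\embedMorphism}{\bulk d}$. Then $\sculpintost(\sculpture{Q}{\embedMorphism}{\bulk d})$ is the ST-structure $\Phi_d(\embedMorphism(Q))$ on $d$ events, and $\stintosculpture$ of it re-embeds this ST-structure through $\Phi_d^{-1}$ into $\bulk d$, producing the sculpture whose HDA is $\Phi_d^{-1}(\Phi_d(\embedMorphism(Q)))=\embedMorphism(Q)$ with inclusion as embedding. This is isomorphic to the original sculpture via the pair $(f,b)=(\embedMorphism,\id_{\bulk d})$: the square commutes by construction, $b$ is the identity, and $f$ is a bijection onto $\embedMorphism(Q)$ because $\embedMorphism$ is injective. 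Hence $\stintosculpture(\sculpintost(\sculpture{Q}{\embedMorphism}{\bulk d}))\isomorphic\sculpture{Q}{\embedMorphism}{\bulk d}$.

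The arguments are essentially diagram chases, so I do not expect a genuinely hard step. The only points demanding care are bookkeeping rather than mathematics: that $\sculpintost$ always returns an ST-structure on the full $d$-element event set of $\bulk d$, so that the two mappings use bulks of equal dimension and the composite of $\Phi_d$ with its inverse is the identity on the nose; and that the event relabeling $e_i\leftrightarrow i$ introduced by the order-bijection in $\stintosculpture$ is exactly the one absorbed when we pass to isomorphism classes of ordered ST-structures. Once these are pinned down, regularity of the resulting ST-structure (rootedness, connectedness, closure under single events) is inherited verbatim from the remark following Definition~\ref{def_sculptures_to_ST}, and nothing further is needed.
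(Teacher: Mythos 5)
Your proposal is correct and takes essentially the same route as the paper: the paper's entire proof is the remark that the claim ``is clear by composition of the isomorphisms above,'' i.e.\ that $\stintosculpture$ and $\sculpintost$ are both restrictions of the single canonical isomorphism $\bulk{d}\isomorphic\bulkST{d}$ and its inverse, so each round trip is the identity up to the order-preserving relabeling of events. Your write-up simply makes explicit the bookkeeping (the map $\Phi_d$, the inclusion embeddings, and the sculpture isomorphism $(f,b)=(\embedMorphism,\id)$) that the paper leaves implicit.
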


We can also understand \sculpintost\ as labeling every cell of the
sculpture with an ST-configuration, or equivalently (because of
Theorem~\ref{prop_STstructChu3}) with a Chu state.

\begin{lem}\label{lem_HsFunctorial}
  The mapping $\stintosculpture$ is functorial, in the sense that an
  ordered ST-morphism $f: \ST_1\to \ST_2$ is translated into an HDA
  morphism
  $\stintosculpture( f): \stintosculpture( \ST_1)\to \stintosculpture(
  \ST_2)$, given by $\stintosculpture( f)({(S,T)})=\stintosculpture({f(S,T)})$.
If $f$
  is total and injective, then $\stintosculpture( f)$ is also a
  sculpture morphism.
\end{lem}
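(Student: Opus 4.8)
The plan is to verify directly that $\stintosculpture(f)$, defined on cells by $(S,T)\mapsto(f(S),f(T))$, is a well-defined graded map that commutes with the face maps and fixes the initial cell, and then to promote it to a sculpture morphism under the stronger hypothesis. Throughout I would use the explicit description of the face maps of the sculpture $\stintosculpture(\ST)$ obtained by transporting those of $\bulk d$ through $\chuintost$: writing $a_1<\dots<a_n$ for the events of $S\setminus T$ listed in increasing order, one has $s_k(S,T)=(S\setminus\{a_k\},T)$ and $t_k(S,T)=(S,T\cup\{a_k\})$, both of which lie in $\sts$ by closure under single events.

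First I would check that $\stintosculpture(f)$ is graded and lands in $\sts_2$. Configuration-preservation gives $(f(S),f(T))\in\sts_2$, and local injectivity of $f$ on $S$ yields $f(S)\setminus f(T)=f(S\setminus T)$ with $\num{f(S)\setminus f(T)}=\num{S\setminus T}$, so the dimension is preserved. Next I would verify the two face identities. Because $f$ is locally total, injective, and order-preserving on $S$, the list $f(a_1)<\dots<f(a_n)$ is exactly the increasingly ordered enumeration of $f(S)\setminus f(T)$; hence the $k$-th running event of the target cell is $f(a_k)$. Combining this with $f(S\setminus\{a_k\})=f(S)\setminus\{f(a_k)\}$ (injectivity) and $f(T\cup\{a_k\})=f(T)\cup\{f(a_k)\}$ gives $\stintosculpture(f)(s_k(S,T))=s_k(\stintosculpture(f)(S,T))$ and likewise for $t_k$. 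Since $f(\emptyset)=\emptyset$ the initial cell is fixed, so $\stintosculpture(f)$ is a HDA morphism; functoriality, i.e.\ $\stintosculpture(\id)=\id$ and $\stintosculpture(g\circ f)=\stintosculpture(g)\circ\stintosculpture(f)$, then follows immediately from $g(f(S))=(g\circ f)(S)$.

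For the last claim I would, assuming $f$ total and injective, read $f$ as a strictly increasing function $\{1,\dots,d_1\}\to\{1,\dots,d_2\}$ and let $b:\bulk{d_1}\hookrightarrow\bulk{d_2}$ be the induced bulk embedding (which exists and is a HDA morphism by the description of morphisms of bulks recalled earlier). It then remains to check that the square $b\circ\embedMorphism_1=\embedMorphism_2\circ\stintosculpture(f)$ commutes, which is a coordinatewise comparison of the two tuples in $\bulk{d_2}$: at a position in the image of $f$ the value is dictated by membership of the corresponding event in $S$, in $S\setminus T$, or in $T$, and this is transported faithfully because $f$ is injective, while at a position outside the image of $f$ both sides equal $0$.

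The main obstacle is the face-map commutation, and within it the crucial use of order-preservation: the face maps of the bulk, and hence of any sculpture, are indexed by the position of the running event in the event order, so a map that merely preserved configurations but scrambled that order would send $s_k$ to some $s_{\sigma(k)}$ and fail to be precubical. It is precisely the ordered-ST-morphism condition that forces $f$ to carry the $k$-th running event to the $k$-th running event, and local injectivity that keeps the number of running events (the dimension) constant; these two conditions are exactly what the argument needs, and nothing more.
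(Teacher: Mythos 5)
Your proof is correct and takes essentially the same route as the paper's, which dismisses the first part as trivial and, for the second, merely observes that a total, injective (order-preserving) $f$ defines the bulk morphism $b:\bulk{d_1}\to\bulk{d_2}$ making the square commute. You simply fill in the routine verifications the paper omits — the explicit face maps $s_k(S,T)=(S\setminus\{a_k\},T)$, $t_k(S,T)=(S,T\cup\{a_k\})$ of a sculpture and the coordinatewise check of $b\circ\embedMorphism_1=\embedMorphism_2\circ\stintosculpture(f)$ — and correctly identify order-preservation as the hypothesis that keeps the face-map indices aligned.
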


\begin{proof}%
The first part of the lemma is trivial.

For the second part we denote $S_{1}=(E_{1},\sts_{1})$ and $S_{2}=(E_{2},\sts_{2})$. 
The morphism $b:\bulk{d_{1}} \isomorphic \bulkST{\num{E_{1}}} \to \bulkST{\num{E_{2}}} \isomorphic \bulk{d_{2}}$ is defined by the map $f: E_{1} \to E_{2}$, which makes the sculptures morphism diagram commute.
  \begin{equation*}
    \raisebox{\depth}{\xymatrix{%
      \stintosculpture( \ST_1) \ar[d]_{\embedMorphism_{1}}
      \ar[r]_{\stintosculpture( f)} & \stintosculpture( \ST_2)
      \ar[d]^{\embedMorphism_{2}}
      \\ \bulk{d_{1}} \ar[r]^b & \bulk{d_{2}} 
    }} \qedhere
\end{equation*}
\end{proof}

\section{Decidability for the Class of Sculptures}\label{sec_decision}

We proceed to develop an algorithm to decide whether a given HDA can
be sculpted.  
At first one could simply search for embedding into bulks of any dimension limited by the number of edges in the HDA.
But a naive calculation reveals this to be more than doubly exponential in the number of edges.%
\footnote{For an HDA $Q$ with $|Q_1| = n$ it is enough to check for embeddings into the single bulk of the largest dimension $n$, because any sculpture can be over-embedded. There are $|\bulk n|^{|Q|}$ maps to check, which is larger than focusing on maps between transitions only, i.e., larger than $|\bulk n _{1}|^{n}=(n*2^{(n-1)})^{n}$. This should also be multiplied with the amount of time it takes to check whether an individual map is an embedding, i.e., checking injectivity, cubical laws, face maps preservation for all higher cells, etc.
}
In this section we work out a more algorithmic approach which is also more efficient.
First we define a way of translating HDA into ST-structures without the need of a bulk and an embedding.
Instead we give an inductive construction that works with rooted paths.

\begin{defi}
	A path having the following form, for $v_{i}\in Q_{0}$ and $e_{j}\in Q_{1}$, 
	\begin{equation}\label{eq:SequentialPath}
		v_0 \xrightarrow{s} e_1 \xrightarrow{t} v_1 \xrightarrow{s} e_2 \xrightarrow{t} v_2 \xrightarrow{s} \dots \xrightarrow{s} e_n \xrightarrow{t} v_n	
	\end{equation}
	will be called sequential.	
	An HDA $Q$ has \emph{non-repeating events} if for every sequential path
	all universal labels $\lambda(e_1), \lambda(e_2),\dotsc,\lambda(e_n)$ are different.
\end{defi}

\begin{prop}
	If $Q$ has repeating events, then it cannot be sculpted.
\end{prop}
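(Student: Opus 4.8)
The plan is to prove the statement by contradiction, the whole weight of which rests on a single combinatorial fact: \emph{every bulk itself has non-repeating events.} So the first thing I would do is establish this fact about bulks, and then transfer it back to $Q$ along the embedding, using the functoriality of $\universalEvents{}$ and the fact that morphisms preserve the type of a path.

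First I would examine an arbitrary sequential path in a bulk,
\[
  x^0 \xrightarrow{s} f_1 \xrightarrow{t} x^1 \xrightarrow{s} \dots \xrightarrow{s} f_n \xrightarrow{t} x^n,
\]
observing that each state $x^i\in\bulk{d}_0$ lies in $\{0,1\}^d$ (states carry no $\executing$-coordinate) while each transition $f_i\in\bulk{d}_1$ has exactly one coordinate equal to $\executing$, at a position I call $\mathrm{dir}(f_i)$. By Definition~\ref{def_canonical} the faces $x^{i-1}=s_1(f_i)$ and $x^{i}=t_1(f_i)$ agree with $f_i$ off position $\mathrm{dir}(f_i)$ and satisfy $x^{i-1}_{\mathrm{dir}(f_i)}=0$ and $x^{i}_{\mathrm{dir}(f_i)}=1$. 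Hence every step of a sequential path flips a single coordinate from $0$ to $1$, so $x^0\le x^1\le\dots\le x^n$ is strictly increasing in the product order.

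The key step I would then draw out is that the directions $\mathrm{dir}(f_1),\dots,\mathrm{dir}(f_n)$ are pairwise distinct: once coordinate $k$ has been flipped to $1$ at some step it stays $1$, since the coordinates never decrease, so no later step can take $k$ as its flipped coordinate. Invoking Lemma~\ref{lemma_universalEvents_in_Bulk}, which identifies $\lambda(f_i)$ with $\mathrm{dir}(f_i)$, the labels $\lambda(f_1),\dots,\lambda(f_n)$ are therefore all different, and so every bulk has non-repeating events.

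Finally I would transfer this to $Q$. Suppose $Q$ were sculptable via an embedding $\embedMorphism\colon Q\hookrightarrow\bulk{d}$, yet had a sequential path $v_0\xrightarrow{s}e_1\xrightarrow{t}\dots\xrightarrow{t}v_n$ with $\lambda(e_i)=\lambda(e_j)$ for some $i\ne j$. Since $\embedMorphism$ is a precubical morphism it commutes with the face maps and preserves dimensions, so it carries this path to one and the same sequential path $\embedMorphism(v_0)\xrightarrow{s}\embedMorphism(e_1)\xrightarrow{t}\dots\xrightarrow{t}\embedMorphism(v_n)$ in $\bulk{d}$. Functoriality of $\universalEvents{}$ gives $\lambda(\embedMorphism(e_k))=\universalEvents{\embedMorphism}(\lambda(e_k))$ for each $k$, whence $\lambda(\embedMorphism(e_i))=\lambda(\embedMorphism(e_j))$, contradicting that bulks have non-repeating events. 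I expect the monotonicity argument of the middle two paragraphs to be the only genuine content; the transfer is then immediate. The one bookkeeping point I would make explicit is that $\embedMorphism$ sends the given path of $Q$ to a \emph{single} path of $\bulk{d}$, so that the distinctness of directions applies to $\embedMorphism(e_i)$ and $\embedMorphism(e_j)$ simultaneously.
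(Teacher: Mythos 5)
Your proposal is correct and takes essentially the same route as the paper's proof, which likewise pushes a repeating sequential path through the embedding via functoriality of $\universalEvents{}$ and contradicts the fact that bulks have non-repeating events. The only difference is that you explicitly establish that fact about bulks (via the monotonicity of coordinates and the $\mathrm{dir}$ characterization from Lemma~\ref{lemma_universalEvents_in_Bulk}), which the paper asserts without proof.
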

\begin{proof}
	If a path $\pi$ in $Q$ repeats events, then its image $em(\pi)$ in $\bulk{d}$ also repeats events, by the functoriality of $\universalEvents{}$;
but bulks have non-repeating events.
\end{proof}

\begin{prop}
	If $Q$ has non-repeating events, then it 
is consistent.
\end{prop}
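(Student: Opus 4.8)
The plan is to prove the contrapositive: assuming $Q$ is \emph{not} consistent, i.e.\ inherently self-concurrent, I would exhibit a single sequential path along which two transitions carry the same universal label, so that $Q$ has repeating events. By definition, inherent self-concurrency gives a square $q\in Q_2$ with $s_1 q\eventEquivBulk s_2 q$, hence $\lambda(s_1 q)=\lambda(s_2 q)$. This square is the only ingredient needed; the whole argument lives on its boundary.

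Next I would produce the relevant sequential path from the boundary of $q$, using its bottom and right faces $s_2 q$ and $t_1 q$. The point is that these two transitions actually compose into a consecutive path. Applying the precubical identity~\eqref{eq:precubical} with $k=1<\ell=2$, $\alpha=t$ and $\beta=s$, one obtains $t_1 s_2 q = s_1 t_1 q$, which says precisely that the target of $s_2 q$ equals the source of $t_1 q$. Hence
\[
  s_1 s_2 q \xrightarrow{s} s_2 q \xrightarrow{t} t_1 s_2 q = s_1 t_1 q \xrightarrow{s} t_1 q \xrightarrow{t} t_1 t_1 q
\]
is a genuine sequential path of $Q$, traversing the two transitions $e_1=s_2 q$ and $e_2=t_1 q$.

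Finally I would chain the label equalities. Since the pair $(s_1 q, t_1 q)$ (the case $i=1$) is one of the generators of $\eventEquivBulk$, we have $s_1 q\eventEquivBulk t_1 q$, so $\lambda(t_1 q)=\lambda(s_1 q)$; combined with the hypothesis this yields $\lambda(e_1)=\lambda(s_2 q)=\lambda(s_1 q)=\lambda(t_1 q)=\lambda(e_2)$. Thus the sequential path above repeats a universal label, so $Q$ has repeating events, contradicting non-repeating events and proving consistency. The only genuinely non-trivial step is the second one: verifying that the two chosen faces compose into a sequential path, which rests on selecting the correct instance of the precubical identity. Everything else is an immediate unwinding of the definition of $\eventEquivBulk$, in particular the fact that opposite faces $s_i q$ and $t_i q$ of a square always share a universal label.
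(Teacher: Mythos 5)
Your proof is correct and is exactly the elementary argument the paper leaves implicit (its own proof reads simply ``Easy''): the precubical identity instance $t_1 s_2 q = s_1 t_1 q$ does make the bottom and right faces of the self-concurrent square compose into a sequential path, and the generator $(s_1 q, t_1 q)$ of $\eventEquivBulk$ correctly transfers the hypothesis $\lambda(s_1 q)=\lambda(s_2 q)$ to the two traversed edges $e_1=s_2 q$ and $e_2=t_1 q$. As a small bonus, your argument even survives the degenerate case $s_2 q = t_1 q$, since a literally repeated edge still repeats its universal label on the sequential path.
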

\begin{proof}
	Easy.
\end{proof}

\noindent\begin{minipage}[]{.75\textwidth}
\begin{exa}
An HDA that has repeating events is the full square, $q$, to the right which has the upper-left and lower-right corners identified into $q_{0}$. 
We find the sequential path
$I \xrightarrow{s} s_{2}(q) \xrightarrow{t} q_{0} \xrightarrow{s} t_{2}(q) \xrightarrow{t} q'_{0}$ on which the same label $\lambda_{1}(q)$ appears twice.
This example is acyclic; otherwise, the non-repeating property implies acyclicity.
\end{exa}
\end{minipage}%
\begin{minipage}[]{.25\textwidth}
  \centering
  \begin{tikzpicture}[>=stealth']
    \path[fill=black!10] (0,0) to (2,0) to (2,2) to (0,2) to (0,0);
    \node[state] (00) at (0,0) {};
    \node[state] (10) at (2,0) {};
    \node[state] (01) at (0,2) {};
    \node[state] (11) at (2,2) {};
    \path (00) edge (01);
    \path (00) edge (10);
    \path (01) edge (11);
    \path (10) edge (11);
    \node at (1,1) {$q$};
    \node at (-.2,-.35) {$I$};
    \node at (-.2,2.35) {$q_{0}$};
    \node at (2.2,-.35) {$q_{0}$};
    \node at (2.2,2.35) {$q'_{0}$};
    \node at (1,.25) {$\vphantom{t}s_2 q$};
    \node at (1,1.75) {$t_2 q$};
  \end{tikzpicture}
\end{minipage}

\begin{defi}[from HDA to ST-structures through paths]\label{def_hdaTOst}
	Define a map $\hintost:\allHDA\rightarrow\allST$ which builds an
  ST-structure $\hintost(Q)=(\universalEvents{Q}, \sts)$ in the following way. For every path $\pi=\pi'\xrightarrow{\alpha} q$ we assign an ST-configuration $\hintost(\pi)=(S_\pi, T_\pi)$ in the following way.
  \begin{enumerate}
  \item For the minimal rooted path we associate
    $\hintost(\initialCell)=(\emptyset,\emptyset)$.
  \item If $\alpha=s_i$, then we put
    $\hintost(\pi)=\hintost(\pi')\cup (\{\lambda_i(q)\},
    \emptyset)=(S_\pi\cup\{\lambda_i(q)\},T_\pi)$, \ie~we start the
    event $\lambda_i(q)$.
  \item If $\alpha=t_i$, then we put
    $\hintost(\pi)=\hintost(\pi')\cup (\emptyset,
    \{\lambda_i(\finishPath{\pi'})\})$, \ie~we terminate the event
    $\lambda_i(\finishPath{\pi'})$.
  \end{enumerate}
  Finally, $\sts$ is the set of all these ST-configurations, \ie 
  \[
  		\hintost(Q)=\bigcup_{\pi\in \Path(Q)_*} \hintost(\pi)
  \]
where $\Path(Q)_*$ denotes the set of all rooted paths of $Q$.
\end{defi}

The construction is similar to an \emph{unfolding}~\cite{DBLP:conf/calco/FahrenbergL15}; 
see~\cite[Def.~3.39]{Johansen16STstruct} for a related construction.

The next lemmas are used to establish that for every path $\pi$ the pair $(S_\pi, T_\pi)$ is indeed an ST-configuration.

\begin{lem}\label{lem:STForSequntialPaths}
	If $\pi$ is a sequential path {\normalfont(\ref{eq:SequentialPath})}, then
	\[
		S_\pi=T_\pi=\{\lambda(e_1),\dotsc,\lambda(e_n)\}.
	\]
\end{lem}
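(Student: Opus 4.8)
The plan is a straightforward induction on $n$, the number of transitions occurring in the sequential path. The guiding observation is that a sequential path starts \emph{and} immediately terminates each transition $e_k$ before it moves on to the next one, so that every event which is started is also terminated along the path; this is exactly what will force $S_\pi = T_\pi$. Note also that since $\hintost$ is only defined on rooted paths, we may assume $v_0 = \initialCell$.

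For the base case $n = 0$ the path is the trivial rooted path consisting of $v_0 = \initialCell$ alone, and by Definition~\ref{def_hdaTOst} we have $\hintost(\initialCell) = (\emptyset, \emptyset)$, so $S_\pi = T_\pi = \emptyset$, matching the empty set of labels. For the inductive step, write $\pi = \pi'' \xrightarrow{t} v_n$ where $\pi'' = \pi' \xrightarrow{s} e_n$ and $\pi'$ is the sequential prefix ending in $v_{n-1}$ with $n-1$ transitions, to which the induction hypothesis applies, giving $S_{\pi'} = T_{\pi'} = \{\lambda(e_1), \dotsc, \lambda(e_{n-1})\}$. First I would record that, because $e_n \in Q_1$ is one-dimensional, the only available face-map index is $1$: the $s$-step reads $s_1(e_n) = v_{n-1}$ and the $t$-step reads $t_1(e_n) = v_n$, and moreover $\lambda_1(e_n) = \lambda(e_n)$ directly from the definition of $\lambda_i$ on a $1$-cell. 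Applying clause~(2) of Definition~\ref{def_hdaTOst} to the $s$-step adds $\lambda_1(e_n) = \lambda(e_n)$ to the started set, yielding $(S_{\pi'} \cup \{\lambda(e_n)\},\, T_{\pi'})$; then applying clause~(3) to the $t$-step, whose preceding path $\pi''$ ends in $\finishPath{\pi''} = e_n$, adds $\lambda_1(\finishPath{\pi''}) = \lambda(e_n)$ to the terminated set. Hence $S_\pi = T_\pi = \{\lambda(e_1), \dotsc, \lambda(e_n)\}$, completing the induction.

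There is no genuine obstacle here; the only point demanding care is the bookkeeping of face-map indices. One must check that each step of a sequential path necessarily uses index $1$ (as the $e_k$ are $1$-cells), and that in the $t$-step it is precisely the $1$-cell $\finishPath{\pi''} = e_n$ whose label is terminated, so that clauses~(2) and~(3) of Definition~\ref{def_hdaTOst} add one and the same label $\lambda(e_n)$ to $S$ and to $T$ respectively. This synchronous start/terminate is what keeps the two sets equal at every stage of the induction.
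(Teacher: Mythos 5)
Your proof is correct and follows exactly the route the paper intends: its proof is simply ``Obvious induction,'' and your induction on $n$, with the observation that each $1$-cell $e_k$ forces index $1$ so that clauses~(2) and~(3) of Definition~\ref{def_hdaTOst} add the same label $\lambda(e_n)$ to $S$ and then to $T$, is precisely the bookkeeping being left to the reader. Your explicit remark that $v_0=\initialCell$ (since $\hintost$ is built along rooted paths) is a sound reading of how the lemma is used later.
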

\begin{proof}
	Obvious induction.
\end{proof}

\begin{lem}\label{lem:HomotopicPaths}
	Homotopic paths have the same associated ST-configurations (\ie~if $\pi\sim\varrho$, then $S_\pi=S_\varrho$, $T_\pi=T_\varrho$).
\end{lem}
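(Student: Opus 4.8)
The plan is to use that homotopy of paths is, by definition, the reflexive and transitive closure of the four elementary moves, so that it suffices to prove the claim when $\pi$ and $\varrho$ differ by a \emph{single} elementary homotopy (reflexivity and transitivity of the statement being immediate once the single-step case is known). For such a pair, $\pi$ and $\varrho$ share a common prefix ending at the cell $a$ where the modified two-step segment begins, and a common suffix starting at the cell $b$ where it ends; only the segment from $a$ to $b$ differs. Since $\hintost$ is defined incrementally (Definition~\ref{def_hdaTOst})---each step enlarges the accumulated pair $(S,T)$ by exactly one element, determined solely by the current cell and the face index---the pairs accumulated at $a$ coincide for $\pi$ and $\varrho$, so it is enough to show that the two versions of the $a$-to-$b$ segment produce the \emph{same} pair at $b$; applying the identical suffix step by step then keeps the pairs equal to the end.

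First I would, for each of the four local moves, fix the top-dimensional cell involved and read off $a$ and $b$ as iterated faces of it, checking via the precubical identities~\eqref{eq:precubical} that both versions of the segment really share their endpoints. For instance, in case~(1) the segment $\transition{s_i}q\transition{s_j}$ comes from a cell $R$ with $q=s_j(R)$ and $a=s_i s_j(R)$, while $\transition{s_{j-1}}q'\transition{s_i}$ comes from the same $R$ with $q'=s_i(R)$ and $a=s_{j-1}s_i(R)$; the identity $s_i s_j=s_{j-1}s_i$ for $i<j$ shows the two starting cells agree, and $b=R$ in both. In case~(3) the segment $\transition{s_i}q\transition{t_j}$ has $a=s_i(q)$ and $b=t_j(q)$, and is replaced by $\transition{t_{j-1}}q'\transition{s_i}$ with $q'=t_{j-1}(a)=s_i(b)$, using $s_i t_j=t_{j-1}s_i$; cases~(2) and~(4) are the $t/t$ and $s/t$ duals.

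Then I would compute the two contributions to $(S,T)$ and compare them using Lemma~\ref{le:multilabels}(1), which records that passing to a face $\alpha_k$ deletes the $k$-th entry of the label tuple and shifts the later ones down by one. In case~(1) the first path adds $\lambda_i(q)=\lambda_i(R)$ and then $\lambda_j(R)$ to $S$, while the second adds $\lambda_{j-1}(q')=\lambda_j(R)$ and then $\lambda_i(R)$; both enlarge $S$ by $\{\lambda_i(R),\lambda_j(R)\}$ and leave $T$ unchanged. In case~(3) the first path starts $\lambda_i(q)$ and terminates $\lambda_j(q)$, whereas the second terminates $\lambda_{j-1}(a)=\lambda_j(q)$ and starts $\lambda_i(b)=\lambda_i(q)$; again the same element joins $S$ and the same element joins $T$. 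The conceptual content is thus routine, and the only real obstacle is the index bookkeeping: correctly tracking the off-by-one shifts produced by Lemma~\ref{le:multilabels}(1) under the hypothesis $i<j$, which is exactly what makes the two swapped labels coincide. Establishing equality of the pairs at $b$ in all four cases completes the reduction and hence the proof.
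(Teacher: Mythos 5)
Your proposal is correct and matches the paper's own proof in essence: both reduce to a single elementary homotopy (the paper normalizes the changed segment to the end of the path, you keep a common suffix and note that each step's contribution depends only on the current cell and index, which is equivalent), then verify two representative cases by combining the precubical identities with Lemma~\ref{le:multilabels}(1) and dismiss the remaining cases as analogous. Your explicit endpoint checks and the suffix argument make the reduction slightly more detailed than the paper's one-line "it is enough to consider\dots the final segments," but the computations and case structure are the same.
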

\begin{proof}
	It is enough to consider the case when $\pi$ and $\varrho$ are elementary homotopic and that the homotopy changes the final segments of these paths. Thus
    \[
        \pi = \sigma \xrightarrow{\alpha_i} q \xrightarrow{\beta_j} r,\qquad \varrho = \sigma \xrightarrow{\beta_k} q' \xrightarrow{\alpha_l} r,
    \]
    where one of the following cases holds (denote $s=\finishPath{\sigma}$).
    \begin{itemize}
    \item %
        $\alpha=\beta=s$, $i<j$, $k=j-1$, $l=i$. Then $T_\pi=T_\sigma=T_\varrho$ and
        \begin{align*}
            S_\pi&= S_\sigma \cup \{\lambda_i(q)\} \cup \{\lambda_j(r)\} = S_\sigma \cup \{\lambda_i(s_j(r))\} \cup \{\lambda_j(r)\} = S_\sigma \cup \{\lambda_i(r),\lambda_j(r)\}\\
            S_\varrho&= S_\sigma \cup \{\lambda_{j-1}(q')\} \cup \{\lambda_i(r)\} = S_\sigma \cup \{\lambda_{j-1}(s_i(r))\} \cup \{\lambda_i(r)\} = S_\sigma \cup \{\lambda_i(r),\lambda_j(r)\}.
        \end{align*}
    \item %
        $\alpha=s$, $\beta=t$, $i>k=j$, $l=i-1$. Then 
        \begin{align*}
            S_\pi&=S_\sigma\cup \{\lambda_i(q)\}=S_\sigma\cup \{\lambda_{i-1}(t_j(q))\} = S_\sigma\cup \{\lambda_l(r)\} = S_\varrho,\\
            T_\pi&=T_\sigma\cup \{\lambda_j(q)\}=T_\sigma\cup \{\lambda_{j}(s_i(q))\} = T_\sigma\cup\{\lambda_k(s)\} = T_\varrho.
        \end{align*}
    \item %
        $\alpha=s$, $\beta=t$, $i=l<j$, $k=j-1$.
    \item %
        $\alpha=\beta=t$, $k=j<i$, $l=i-1$.
    \end{itemize}
    Calculations in the last two cases are similar. 
\end{proof}

\begin{lemC}[{\cite[Lem.4.38]{Uli05PhD}}]\label{lem:HomotopicPathsType}
	Every rooted path is homotopic to a path of the type $(st)^k
        s^n$, for 
any $n,k\geq 0$.
\end{lemC}
\begin{proof}
	Assume that $\pi$ has a segment of type $sst$, namely
	\[
		s_i(s_j(r))\xrightarrow{s_i} s_j(r) \xrightarrow{s_j} r \xrightarrow{t_l} t_l(r).
	\]
	If $j=l$, then we replace it by a homotopic segment
	\begin{itemize}
	\item
		$s_i(s_j(r))=s_{j-1}(s_i(r))\xrightarrow{s_{j-1}} s_i(r) \xrightarrow{s_i} r \xrightarrow{t_l} t_l(r)$ if $i<j$
	\item
		$s_i(s_j(r))=s_{j}s_{i+1}(r)\xrightarrow{s_{j}} s_{i+1}(r) \xrightarrow{s_{i+1}} r \xrightarrow{t_l} t_l(r)$ if $i\geq j$	
	\end{itemize}
	to assure that $j\neq l$. Next, we replace it by
	\begin{enumerate}
	\item
		$s_i(s_j(r))\xrightarrow{s_i} s_j(r) \xrightarrow{t_{l-1}} t_{l-1}(s_j(r))=s_j(t_l(r)) \xrightarrow{s_j} t_l(r)$ if $j<l$,
	\item
		$s_i(s_j(r))\xrightarrow{s_i} s_j(r) \xrightarrow{t_{l}} t_{l}(s_j(r))=s_{j-1}(t_l(r)) \xrightarrow{s_{j-1}} t_l(r)$ if $j>l$,
	\end{enumerate}
	and obtain a homotopic segment of type $sts$. We repeat this procedure as long as there is a type $sst$ subpath (which is finitely many times). Eventually  we obtain a path homotopic to $\pi$ having the required type.
\end{proof}

\begin{lem}
	Fix $n\ge 1$ and $i\in \{1,\dotsc,n\}$. Every path of type $s^n$, 
starting in a vertex, %
is homotopic to a path of type $s_1 s_2 \dots s_{n-2}s_{n-1} s_i$.
\end{lem}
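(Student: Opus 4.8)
The plan is to encode each type-$s^{n}$ path by the sequence of indices of its face maps and then to reduce every such sequence to a single normal form by repeatedly applying the first elementary homotopy. Let $\pi$ be a path of type $s^{n}$ and fix its final $n$-cell $q=\finishPath{\pi}$. Reading forward as $q_{0}\xrightarrow{s_{b_{1}}}q_{1}\xrightarrow{s_{b_{2}}}\dots\xrightarrow{s_{b_{n}}}q_{n}=q$, each step satisfies $s_{b_{k}}(q_{k})=q_{k-1}$, and since $q_{k}$ has dimension $k$ we get $1\le b_{k}\le k$; conversely every sequence $(b_{1},\dots,b_{n})$ with $1\le b_{k}\le k$ realizes a genuine type-$s^{n}$ path ending in $q$ (its cells are iterated lower faces of $q$, and its start is the bottom corner $s_{1}s_{1}\dots s_{1}(q)$, independent of the sequence). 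I would first record that both the target sequence $(1,2,\dots,n-1,i)$ and the constant sequence $(1,1,\dots,1)$ satisfy these constraints, hence are themselves realized by type-$s^{n}$ paths ending in $q$.

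Next I would translate move (1) into this encoding. A consecutive segment $\xrightarrow{s_{b_{k}}}q_{k}\xrightarrow{s_{b_{k+1}}}$ with $b_{k}<b_{k+1}$ is exactly the pattern $\xrightarrow{s_{i}}q\xrightarrow{s_{j}}$ with $i<j$, which move (1) rewrites to $\xrightarrow{s_{j-1}}q'\xrightarrow{s_{i}}$; on indices this replaces the pair $(b_{k},b_{k+1})$ by $(b_{k+1}-1,b_{k})$, the rewrite being justified by the precubical identity $s_{i}s_{j}=s_{j-1}s_{i}$ for $i<j$. Both steps remain $s$-steps, so the type stays $s^{n}$, the endpoints are preserved, and one checks the new pair still obeys $b_{k}'\le k$ and $b_{k+1}'\le k+1$.

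The key observation is that each such step strictly decreases $\sum_{k}b_{k}$, since the pair-sum drops from $b_{k}+b_{k+1}$ to $b_{k}+b_{k+1}-1$. As $\sum_{k}b_{k}\ge n$ always, the process terminates, and a sequence is terminal exactly when no step applies, i.e.\ when $b_{k}\ge b_{k+1}$ for all $k$. Combined with $b_{1}=1$ (forced by $1\le b_{1}\le 1$), weak decrease forces $(b_{1},\dots,b_{n})=(1,1,\dots,1)$. Hence every type-$s^{n}$ path ending in $q$ is connected, through a finite chain of elementary homotopies, to the constant path $s_{1}s_{1}\dots s_{1}$.

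To finish, I note that the target $s_{1}s_{2}\dots s_{n-1}s_{i}$ is itself a type-$s^{n}$ path ending in $q$, so by the same reduction it too is homotopic to $s_{1}s_{1}\dots s_{1}$; symmetry and transitivity of homotopy then give $\pi\homotopicHDA s_{1}s_{2}\dots s_{n-1}s_{i}$, as required. The only delicate point is the index bookkeeping of move (1)—verifying that it corresponds precisely to $(b_{k},b_{k+1})\mapsto(b_{k+1}-1,b_{k})$ while respecting the range constraints—after which both termination via the measure $\sum_{k}b_{k}$ and the identification of the unique terminal sequence are immediate.
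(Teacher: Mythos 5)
Your proof is correct, but it reaches the normal form by a genuinely more self-contained route than the paper. The paper's proof is a two-step appeal: for $i=n$ it invokes the canonical presentation of iterated face maps (the unique rewriting to strictly increasing indices, which for a full $s^n$-descent of an $n$-cell forces the type $s_1 s_2\dots s_n$), and for general $i$ it then commutes $s_i$ to the rightmost position using the precubical identities. You instead build an explicit terminating rewriting system: encoding a type-$s^n$ path ending at $q$ by its index sequence $(b_1,\dots,b_n)$ with $1\le b_k\le k$, checking that move (1) acts on indices as $(b_k,b_{k+1})\mapsto(b_{k+1}-1,b_k)$ while preserving the range constraints, using $\sum_k b_k$ as a strictly decreasing measure, and observing that the unique terminal (weakly decreasing) sequence with $b_1=1$ is $(1,1,\dots,1)$. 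Reducing both the given path and the target path $s_1 s_2\dots s_{n-1}s_i$ to this single common path (whose cells $s_1^{\,n-k}(q)$ are determined by $q$) and invoking symmetry and transitivity of homotopy is a legitimate way to sidestep any confluence argument; the symmetry appeal is sound because the paper defines elementary homotopy symmetrically (``one can be obtained from the other''), so homotopy is an equivalence relation. What your version buys is independence from the cited canonical-form result of Grandis--Mauri and an explicit termination certificate; what the paper's buys is brevity and a normal form ($s_1 s_2\dots s_n$) that matches the canonical presentation used elsewhere, e.g.\ in the definition of non-selflinked precubical sets.

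One caveat worth making explicit: your bound $1\le b_k\le k$ (and hence $b_1=1$, which pins down the terminal sequence) silently assumes the path ascends from a $0$-cell to an $n$-cell. Stated for arbitrary start cells the lemma would fail --- a one-step path $\transition{s_2}$ from a $1$-cell into a $2$-cell admits no elementary homotopy at all, so it is not homotopic to a path of type $s_1$. This assumption is, however, exactly how the lemma is used (the $s^k$-tail of the $(st)^l s^k$ normal form of a rooted path starts at a state), and the paper's own one-line proof makes the same tacit assumption, so this is a shared imprecision of the statement rather than a gap in your argument.
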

\begin{proof}
	For $i=n$ this follows from the canonical presentation of an iterated face map. In general, we start with $s_1\dots s_n$ and move $s_i$ to the rightmost place using precubical identities,
i.e., $s_1 \dots  s_{i-1} s_i  s_{i+1} s_{i+2} \dots s_n =
s_1 \dots  s_{i-1} s_i  s_i s_{i+2}  \dots s_n =
s_1 \dots  s_{i-1} s_i  s_{i+1} s_i s_{i+3}  \dots s_n = \dots = s_1 \dots  s_{n-2} s_i s_n =
s_1 \dots  s_{n-2} s_{n-1} s_i$.
\end{proof}

\begin{lem}\label{lem:CanonicalPath}
  For every $n,k\ge 0$, every rooted path $\pi$ ending in $q\in Q_n$,
  and any $i\in\{1,\dotsc,n\}$, there exists a path that is homotopic
  to $\pi$ and has the type
  $(s_1 t_1)^k s_1 s_2 \dots s_{n-2}s_{n-1} s_i$.
\end{lem}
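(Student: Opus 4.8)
The plan is to combine the two preceding lemmas, using the observation that the oscillating prefix of a normalized path can only live in dimensions $0$ and $1$. First I would invoke the lemma that every path is homotopic to one of type $(st)^p s^k$, applied to $\pi$, and call the resulting homotopic path $\pi'$. Since $\pi$ is rooted it starts at the $0$-cell $\initialCell$, and homotopy preserves endpoints, so $\pi'$ also starts at dimension $0$ and ends at $q\in Q_n$. Reading the type and tracking dimension, each $st$ block contributes net dimension change $0$ while the final $s^k$ block contributes $+k$; hence the terminal dimension equals $k$, forcing $k=n$. Setting $l=p$, the path $\pi'$ has type $(st)^l s^n$.

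Next I would pin down the indices in the prefix. Along $(st)^l$ the dimension oscillates $0,1,0,1,\dots,0$, so every step of this prefix has a $1$-cell as its higher endpoint. A $1$-cell admits only the two face maps $s_1$ and $t_1$, so each $s$-step and each $t$-step of the prefix is forced to use index $1$. Thus the prefix is literally $(s_1 t_1)^l$, with no further rewriting required.

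Finally I would normalize the suffix. Writing $\pi'=\rho\cdot\sigma$, where $\rho$ is the prefix of type $(s_1 t_1)^l$ ending at a $0$-cell $r$, the suffix $\sigma$ is a path of type $s^n$ from $r$ to $q$. For the fixed index $i$, the preceding lemma produces a path $\sigma'$ of type $s_1 s_2\dots s_{n-1} s_i$ homotopic to $\sigma$ and sharing its endpoints $r$ and $q$. Since homotopy is generated by the local segment replacements (1)--(4), any homotopy carrying $\sigma$ to $\sigma'$ is equally a homotopy of the full path carrying $\rho\cdot\sigma$ to $\rho\cdot\sigma'$; hence $\pi\sim\pi'=\rho\cdot\sigma\sim\rho\cdot\sigma'$, and $\rho\cdot\sigma'$ has exactly the type $(s_1 t_1)^l s_1 s_2\dots s_{n-1} s_i$.

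The main obstacle is not any single computation but making the two bookkeeping observations airtight: that the normalized prefix is confined to dimensions $0$ and $1$ (which is what forces its indices to be $1$), and that a homotopy of the suffix extends to a homotopy of the whole path. Both follow immediately once one recalls that homotopy acts by replacing local segments and that $1$-cells have a unique pair of face maps, so I expect no technical friction beyond carefully composing the two earlier results.
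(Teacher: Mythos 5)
Your proof is correct and takes exactly the route the paper intends: the paper's own proof of this lemma is the one-line remark that it follows from the two preceding lemmas, and your argument is precisely that composition, spelled out in full. The details you make explicit --- the dimension bookkeeping forcing $k=n$ and the prefix indices to be $1$, and the fact that an elementary homotopy of the suffix $\sigma$ extends to a homotopy of $\rho\cdot\sigma$ since homotopies act by local segment replacement --- are exactly what the paper elides.
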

\begin{proof}
	This follows from the two preceding lemmas.
\end{proof}

\begin{prop}
  For every rooted %
path $\pi$, $(S_\pi, T_\pi)$ is an ST-configuration (\ie~$S_\pi\supseteq T_\pi$).
\end{prop}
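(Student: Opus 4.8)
The plan is to reduce to a canonical representative of each homotopy class and then read off $S_\pi$ and $T_\pi$ by inspection. Since $\hintost$ is only defined on rooted paths, ``every path'' here means every rooted path, and by Lemma~\ref{lem:HomotopicPaths} the pair $(S_\pi, T_\pi)$ depends only on the homotopy class of $\pi$. Hence it suffices to verify $T_\pi \subseteq S_\pi$ for one convenient representative of each class.

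First I would invoke Lemma~\ref{lem:CanonicalPath}: for a rooted path ending in $q\in Q_n$ there is a homotopic path of type $(s_1 t_1)^l\, s_1 s_2 \dotsm s_{n-1} s_i$ (e.g.\ taking $i=n$). Such a representative splits into a \emph{sequential prefix} of type $(s_1 t_1)^l$, ending at some state $v\in Q_0$, followed by $n$ consecutive $s$-steps climbing from $v$ up to $q$. The crucial structural observation is that \emph{all} $t$-steps occur in the prefix, so all terminations — i.e.\ all additions to $T$ — happen there.

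For the prefix, Lemma~\ref{lem:STForSequntialPaths} applies directly and yields $S=T$ at the state $v$: the $l$ events started in the prefix are exactly the $l$ events terminated, namely the labels of the transitions traversed. For the suffix, each of the remaining $n$ steps is an $s$-step, so by the construction in Definition~\ref{def_hdaTOst} it only enlarges $S$ (adding one universal label) and leaves $T$ unchanged. Consequently $T_\pi$ consists precisely of the $l$ labels accumulated in the prefix, while $S_\pi$ contains those same $l$ labels together with the $n$ labels started in the suffix; in particular $T_\pi\subseteq S_\pi$, as required.

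I do not expect a serious obstacle, since homotopy invariance (Lemma~\ref{lem:HomotopicPaths}) and the canonical-form lemma (Lemma~\ref{lem:CanonicalPath}) do the heavy lifting; the only point to state carefully is that $T$ is modified exclusively during the $(s_1 t_1)^l$ prefix, where each $t_1$ terminates exactly the event just started by the preceding $s_1$, so the running invariant $T\subseteq S$ is never violated. An alternative, more self-contained route would be a direct induction on the length of $\pi$ maintaining the invariant $T_\pi\subseteq S_\pi$: the $s$-step case is immediate, while the $t$-step case requires checking that $\lambda_i(\finishPath{\pi'})$ already lies in $S_{\pi'}$, and it is precisely there that one needs homotopy invariance (or a reachability argument identifying which events are ``running'' at $\finishPath{\pi'}$) to avoid circularity — which is exactly why reducing to the canonical form first is the cleaner path.
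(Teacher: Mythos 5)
Your proof is correct, but it is organized differently from the paper's. The paper argues by induction on $\pi$: the only nontrivial case is a terminal $t$-step $\pi=\pi'\xrightarrow{t_i}t_i(q)$, where Lemma~\ref{lem:CanonicalPath} is used \emph{locally} to replace $\pi'$ by a homotopic path $\varrho'=\varrho''\xrightarrow{s_i}q$, Lemma~\ref{lem:HomotopicPaths} transfers $T_{\pi'}=T_{\varrho'}$, and the inductive hypothesis is applied to the strictly shorter $\varrho''$, giving $T_\pi=T_{\varrho''}\cup\{\lambda_i(q)\}\subseteq S_{\varrho''}\cup\{\lambda_i(q)\}=S_{\varrho'}=S_\pi$. (Amusingly, this is exactly the ``alternative, more self-contained route'' you sketch in your closing paragraph, including your observation that the $t$-step case is where homotopy invariance must enter.) You instead normalize \emph{globally}: one application of Lemma~\ref{lem:CanonicalPath} puts the whole path in the form $(s_1t_1)^l\,s_1\dotsm s_n$, Lemma~\ref{lem:STForSequntialPaths} gives $S=T$ at the end of the sequential prefix, and since the suffix consists only of $s$-steps, which by Definition~\ref{def_hdaTOst} never touch $T$, the inclusion $T_\pi\subseteq S_\pi$ is immediate. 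Both routes rest on the same two pillars (homotopy invariance and canonical form); yours additionally uses the sequential-path lemma, which the paper's proof does not, and in exchange avoids induction at the proposition level. Your version also quietly sidesteps a small bookkeeping point the paper glosses over: in the paper's induction the hypothesis is applied to $\varrho''$, which is \emph{not} a prefix of $\pi$, so the induction must really be on path length (well-founded because homotopic paths have equal length); your single-normalization argument never faces this issue. A minor point to state explicitly in your write-up: the canonical representative is again rooted (homotopic paths share endpoints), so $\hintost$ and Lemma~\ref{lem:STForSequntialPaths} indeed apply to it and its prefixes; and for $n=0$ your suffix is empty, which is the degenerate case your argument handles for free.
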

\begin{proof}
	Induction with respect to $\pi$; enough to check the case when $\pi=\pi'\xrightarrow{t_i} t_i(q)$, $q=\finishPath{\pi'}$. By Lemma \ref{lem:CanonicalPath} there exists a path $\varrho'=\varrho'' \xrightarrow{s_i} q$ homotopic to $\pi'$. Thus,
	\[
		T_\pi = T_{\pi'}\cup \{\lambda_i(q)\} \overset{L.\ref{lem:HomotopicPaths}}= T_{\varrho'}\cup \{\lambda_i(q)\}=T_{\varrho''}\cup \{\lambda_i(q)\}
		\overset{ind}\subseteq S_{\varrho''}\cup \{\lambda_i(q)\}=S_{\varrho'}=
S_\pi. \qedhere
	\]
\end{proof}

\begin{prop}\label{prp:ActiveEvents}
    Assume that an HDA $Q$ has non-repeating events. Then $S_\pi\setminus T_\pi = \lambda(\finishPath{\pi})$ for every rooted path $\pi$. 
\end{prop}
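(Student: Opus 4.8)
The plan is to prove the identity by induction on the length of the rooted path $\pi$, reading the right-hand side $\lambda(\finishPath\pi)$ as the \emph{set} $\{\lambda_1(q),\dotsc,\lambda_n(q)\}$ of components of the tuple, where $q=\finishPath\pi\in Q_n$. Since non-repeating events imply consistency, Lemma~\ref{le:multilabels}(3) ensures these components are pairwise distinct, so this set has exactly $n=\dim q$ elements. The base case is the minimal rooted path $\initialCell$, where $\hintost(\initialCell)=(\emptyset,\emptyset)$ and $\finishPath\pi$ is a $0$-cell, so both sides are empty. For the inductive step I split on the type of the final step $\pi=\pi'\xrightarrow{\alpha}q$.

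The $t$-step is routine. Writing $p=\finishPath{\pi'}\in Q_n$ and $q=t_i(p)$, the construction gives $S_\pi=S_{\pi'}$ and $T_\pi=T_{\pi'}\cup\{\lambda_i(p)\}$; by the induction hypothesis $S_{\pi'}\setminus T_{\pi'}=\{\lambda_1(p),\dotsc,\lambda_n(p)\}$, whence $S_\pi\setminus T_\pi=(S_{\pi'}\setminus T_{\pi'})\setminus\{\lambda_i(p)\}=\{\lambda_1(p),\dotsc,\lambda_n(p)\}\setminus\{\lambda_i(p)\}=\lambda(t_i(p))$ using Lemma~\ref{le:multilabels}(1) together with consistency. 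The $s$-step is where the content lies. Here $q\in Q_n$ with $\finishPath{\pi'}=s_i(q)$, and $S_\pi=S_{\pi'}\cup\{\lambda_i(q)\}$, $T_\pi=T_{\pi'}$. By the induction hypothesis and Lemma~\ref{le:multilabels}(1), $S_{\pi'}\setminus T_{\pi'}=\lambda(s_i(q))=\{\lambda_1(q),\dotsc,\lambda_n(q)\}\setminus\{\lambda_i(q)\}$, so the goal $S_\pi\setminus T_\pi=\{\lambda_1(q),\dotsc,\lambda_n(q)\}$ reduces to the single claim $\lambda_i(q)\notin T_{\pi'}$.

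The main obstacle is precisely this claim, and it is exactly where the hypothesis of non-repeating events must be invoked. To prove it I would extend $\pi$ to a path $\bar\pi$ by terminating, via successive $t$-steps, all $n$ events running in $q=\finishPath\pi$, so that $\bar\pi$ ends in a $0$-cell. Since every path is homotopic to one of type $(st)^m s^k$ and $\bar\pi$ ends in a $0$-cell (forcing $k=0$), $\bar\pi$ is homotopic to a sequential path $\rho$ of type $(s_1t_1)^m$. By Lemma~\ref{lem:STForSequntialPaths} and non-repeating events, the $m$ events occurring in $\rho$ are pairwise distinct, so $|S_\rho|=m$ equals the number of $s$-steps of $\rho$. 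Now homotopy preserves the associated sets (Lemma~\ref{lem:HomotopicPaths}) and, inspecting the four elementary moves, also the number of $s$-steps; hence $\bar\pi$ has exactly $|S_{\bar\pi}|=|S_\rho|=m$ many $s$-steps, which forces each event of $\bar\pi$ to be started by a \emph{unique} $s$-step. Finally, every element of $T_{\pi'}\subseteq S_{\pi'}$ is the label of some $s$-step lying in $\pi'$, whereas $\lambda_i(q)$ is the label of the final $s$-step $\pi'\xrightarrow{s_i}q$; these are distinct $s$-steps of $\bar\pi$, so by injectivity of the step-to-event assignment $\lambda_i(q)$ differs from every element of $T_{\pi'}$, giving $\lambda_i(q)\notin T_{\pi'}$ and completing the induction. (An equivalent, slightly more geometric route would put $\pi'$ in the canonical form of Lemma~\ref{lem:CanonicalPath}, whose $(s_1t_1)^l$ prefix lands at the bottom corner of $q$ and carries $T_{\pi'}$, and then prepend it to the sequential diagonal traversal of $q$ realizing $\lambda_1(q),\dotsc,\lambda_n(q)$; non-repeating events applied to this single sequential path again yields $\lambda_i(q)\notin T_{\pi'}$.)
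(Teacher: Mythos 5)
Your proof is correct, but it distributes the work differently from the paper's. The paper normalizes the \emph{whole} path at the outset: by Lemmas~\ref{lem:CanonicalPath} and~\ref{lem:HomotopicPaths} it assumes $\pi$ has type $(s_1t_1)^l s_1\dotsm s_n$, so the prefix $\pi'$ of length $2l$ ends at a state with $S_{\pi'}=T_{\pi'}$, and the crucial disjointness $\lambda(q)\cap S_{\pi'}=\emptyset$ is then obtained \emph{locally}: for each $i$, the two-step continuation $\pi'\xrightarrow{s_1} s_1\dotsm s_{i-1}s_{i+1}\dotsm s_n(q)\xrightarrow{t_1}\cdots$ is sequential and its last edge carries $\lambda_i(q)$, so the non-repeating hypothesis applied to this one short sequential path, via Lemma~\ref{lem:STForSequntialPaths}, yields $\lambda_i(q)\notin S_{\pi'}$. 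You instead induct on the raw path, splitting on the type of the last step, and prove the key claim \emph{globally}: complete $\pi$ by $t$-steps to a $0$-cell, sequentialize the completed path, and count $s$-steps to conclude that the assignment of started events to $s$-steps is injective along the whole path. This needs one observation the paper never states — that the four elementary homotopy moves preserve the number of $s$-steps — but it is immediate from their definition and is a nice reusable invariant; your injectivity conclusion is in fact stronger than what the induction step requires, giving $\lambda_i(q)\notin S_{\pi'}$ rather than just $\lambda_i(q)\notin T_{\pi'}$, which matches the paper's conclusion. The trade-off is clear: the paper's up-front canonicalization makes the inductive bookkeeping almost trivial at the price of arguing only about canonical-form paths, whereas your version keeps the induction on arbitrary paths and concentrates all the content in a single counting argument. (Your closing parenthetical alternative is essentially the paper's proof.)
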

\begin{proof}
Note that $\lambda(\finishPath{\pi})$ is a tuple, thus a set of universal labels with an order on them; and similarly, the set of events on the left are ordered.
    We use induction with respect to the structure of $\pi$. If $\pi=\initialCell$ --- obvious. By Lemmas \ref{lem:CanonicalPath} and \ref{lem:HomotopicPaths}  we can assume that $\pi$ has the type $(s_1t_1)^l s_1\dots s_n$. Denote $q=\finishPath{\pi}\in Q_n$. Consider two cases:
    \begin{itemize}
    \item
    	$n=0$. Then $\pi=\pi'\xrightarrow{t_1}q$, with $\lambda(q)=\emptyset$. Using the inductive hypothesis we obtain
    	\[
    		S_\pi \stackrel{\mathit{def}}{=} S_{\pi'} \stackrel{\mathit{ind}}{=} T_{\pi'}\cup \{\lambda(\finishPath{\pi')}\} \stackrel{\mathit{def}}{=} T_\pi.
    	\]
    \item
    	$n>0$. Let $\pi'$ be the prefix of $\pi$ of length $2l$. Since $\finishPath{\pi'}$ is a state, then $S_{\pi'}=T_{\pi'}$ (by the previous case) and $S_{\pi}=S_{\pi'}\cup \lambda(q)$, $T_\pi=T_{\pi'}$ (by Definition~\ref{def_hdaTOst}). It remains to show that $\lambda(q)\cap S_{\pi'}=\emptyset$. For every $i$ 
    	\[
    		\lambda(s_1\dots s_{i-1} s_{i+1}\dots s_n(q))=\lambda_i(q)
    	\]
    	and the path $\pi'\xrightarrow{s_1} s_1\dots s_{i-1} s_{i+1}\dots s_n(q)\xrightarrow{t_1} s_1\dots s_{i-1} t_i s_{i+1}\dots s_n(q)$ 
    	is sequential. Since $Q$ has non-repeating events, Lemma \ref{lem:STForSequntialPaths} implies that $\lambda_i(q)\not\in S_{\pi'}$. \qedhere
    \end{itemize}
\end{proof}

\begin{cor}\label{cor_newEvents}
	Assume that $Q$ has non-repeating events. Then for every $i$ and $q\in Q$ and every rooted path $\pi\in\Path(Q)_*$ that can be extended to $\pi\xrightarrow{s_i}q$, then $\lambda_i(q)\not\in S_\pi$.
\end{cor}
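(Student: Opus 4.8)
The plan is to deduce this directly from Proposition~\ref{prp:ActiveEvents}, invoking it for \emph{two} paths and then using the distinctness of the multilabels $\lambda_j(q)$ that consistency provides. Write $\varrho$ for the extended rooted path $\pi \xrightarrow{s_i} q$, and set $m=\dim q$, so that $\finishPath{\pi}=s_i(q)\in Q_{m-1}$. By Definition~\ref{def_hdaTOst}, the $s_i$-step only starts the event $\lambda_i(q)$, i.e.\ $S_\varrho = S_\pi\cup\{\lambda_i(q)\}$ and $T_\varrho = T_\pi$. Since $T_\pi\subseteq S_\pi$, we have $S_\pi=(S_\pi\setminus T_\pi)\cup T_\pi$, so it suffices to show that $\lambda_i(q)$ lies neither in $S_\pi\setminus T_\pi$ nor in $T_\pi$.

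First I would settle the half concerning $T_\pi$ by applying Proposition~\ref{prp:ActiveEvents} to $\varrho$ itself, which gives $S_\varrho\setminus T_\varrho = \lambda(\finishPath{\varrho}) = \lambda(q) = (\lambda_1(q),\dotsc,\lambda_m(q))$. In particular $\lambda_i(q)\in S_\varrho\setminus T_\varrho$, whence $\lambda_i(q)\notin T_\varrho = T_\pi$.

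For the other half I would apply Proposition~\ref{prp:ActiveEvents} to $\pi$, obtaining $S_\pi\setminus T_\pi = \lambda(\finishPath{\pi}) = \lambda(s_i(q))$. By Lemma~\ref{le:multilabels}(1) this equals the tuple $(\lambda_1(q),\dotsc,\lambda_{i-1}(q),\lambda_{i+1}(q),\dotsc,\lambda_m(q))$, i.e.\ all multilabels of $q$ except the $i$-th. Since non-repeating events imply consistency (shown earlier), Lemma~\ref{le:multilabels}(3) guarantees that the $\lambda_j(q)$ are pairwise distinct, so $\lambda_i(q)\neq\lambda_j(q)$ for every $j\neq i$ and hence $\lambda_i(q)\notin \lambda(s_i(q)) = S_\pi\setminus T_\pi$. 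Combining the two halves yields $\lambda_i(q)\notin S_\pi$.

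The argument is short because all the real work lives in Proposition~\ref{prp:ActiveEvents}; there is no genuine obstacle, only bookkeeping. The two points to watch are: invoking the proposition for \emph{both} $\pi$ (whose endpoint is $s_i(q)$) and $\varrho$ (whose endpoint is $q$), and remembering to use consistency via Lemma~\ref{le:multilabels}(3) so that $\lambda_i(q)$ cannot secretly coincide with one of the other currently active events in $S_\pi\setminus T_\pi$, rather than treating $S_\pi$ as a single indivisible block.
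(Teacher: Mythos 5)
Your proof is correct and is essentially the paper's own argument, unfolded from contrapositive into direct form: the paper's one-line proof (if $\lambda_i(q)\in S_\pi$ then $S_{\pi\xrightarrow{s_i}q}=S_\pi$, which via Proposition~\ref{prp:ActiveEvents} applied to both $\pi$ and its extension forces $\lambda(s_i(q))=\lambda(q)$ as sets, impossible since consistency makes these sets have $\dim q - 1$ and $\dim q$ elements) uses exactly the same ingredients you do, namely Proposition~\ref{prp:ActiveEvents} for the two paths together with the pairwise distinctness of the $\lambda_j(q)$ from Lemma~\ref{le:multilabels}(3). Your version merely spells out the bookkeeping (the split of $S_\pi$ into $T_\pi$ and $S_\pi\setminus T_\pi$) that the paper leaves implicit.
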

\begin{proof}
	Otherwise, $S_{\pi\xrightarrow{s_i}q}=S_\pi$, which implies that $\lambda(s_i(q))=\lambda(q)$.
\end{proof}

\begin{prop}
	If $Q$ has non-repeating events, then $\hintost(Q)$ is a regular ST-structure.
\end{prop}
\begin{proof}
		Conditions (A) and (B) are obvious. To prove (C), fix $\pi\in\Path(Q)_*$ and
		\[
			e=\lambda_i(q)\in \lambda(q)=S_\pi\setminus T_\pi,
		\]
		where $q=\finishPath{\pi}$.
		Let $\varrho$ be a rooted path such that $\varrho\xrightarrow{s_i}q$ is homotopic to $\pi$
		and let $\varrho'=\pi\xrightarrow{t_i} t_i(q)$.
		Then
		\[
			(S_\pi, T_\pi \cup \{e\})
			=
			(S_\pi, T_\pi \cup \lambda_i(q))
			=
			(S_{\varrho'},T_{\varrho'})
			\in
			\hintost(Q)
		\]
		and
		\[
			(S_\pi,T_\pi)=(S_\varrho\cup \{\lambda_i(q)\},T_\varrho)=(S_\varrho\cup \{e\},T_\varrho).
		\]
		But $e\not\in S_\varrho$ by Corollary \ref{cor_newEvents}, so $(S_\varrho,T_\varrho)=(S_\pi\setminus\{e\},T_\pi)\in\hintost(Q)$.
\end{proof}

If the HDA in question is a sculpture, then there is a natural
equivalence relation on its cells which captures the notion of
\emph{event} better than the universal event labelling.

\begin{defi}
  \label{def_eventEquivSculpt}
  \label{cor_eventEquivSculpt}
  For a sculpture $\sculpture{Q}{\embedMorphism}{\bulk{d}}$, define
  $\universalEvents{\embedMorphism}:\universalEvents{Q}\to
  \universalEvents{\bulk{d}}\simeq \universalEvents{\bulkST{d}}\simeq
  \{1,\dotsc,d\}$ to be the map induced by $\embedMorphism$, \ie~$\universalEvents{\embedMorphism}(\lambda(q))=\lambda(\embedMorphism(q))$.
  This induces an equivalence relation on \universalEvents{Q} which we
  denote $\eventEquivSculpt$, \ie~$\lambda(q)\eventEquivSculpt\lambda(q')$ iff
  $\universalEvents{\embedMorphism}(\lambda(q))=\universalEvents{\embedMorphism}(\lambda(q'))$.
\end{defi}

\begin{lem}\label{lem:SpiSculpt}
	Let $\sculpture{Q}{\embedMorphism}{\bulkST{d}}$ be a sculpture, then $\quotientofwrt{(S_\pi, T_\pi)}{\eventEquivSculpt}=\embedMorphism(\finishPath{\pi})$ for every rooted path $\pi$.
\end{lem}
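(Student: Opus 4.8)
The plan is to induct on the rooted path $\pi$, matching the two sides one step at a time. First I would make the identifications that render the asserted equality type-correct. Since by Corollary~\ref{cor_eventEquivSculpt} the relation $\eventEquivSculpt$ identifies $e$ and $e'$ exactly when $\universalEvents{\embedMorphism}(e)=\universalEvents{\embedMorphism}(e')$, the induced map $\quotientofwrt{\universalEvents{Q}}{\eventEquivSculpt}\to\universalEvents{\bulk d}\simeq\{1,\dotsc,d\}$ is injective, so I identify $\quotientofwrt{S_\pi}{\eventEquivSculpt}$ with $\universalEvents{\embedMorphism}(S_\pi)$ and $\quotientofwrt{T_\pi}{\eventEquivSculpt}$ with $\universalEvents{\embedMorphism}(T_\pi)$. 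Under this identification, and writing $(S,T)$ for the ST-configuration corresponding to the bulk cell $\embedMorphism(\finishPath\pi)$ under $\bulk d\isomorphic\bulkST d$, the goal becomes $\universalEvents{\embedMorphism}(S_\pi)=S$ and $\universalEvents{\embedMorphism}(T_\pi)=T$.

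Two auxiliary facts will drive the induction. First, because $\embedMorphism$ is a HDA morphism it commutes with all face maps, and because $\universalEvents{\embedMorphism}(\lambda(e))=\lambda(\embedMorphism(e))$ for edges $e$, the definition of $\lambda_i$ as an iterated face gives $\universalEvents{\embedMorphism}(\lambda_i(q))=\lambda_i(\embedMorphism(q))$ for all $q\in Q_n$ and $i$. Second, for a bulk cell $c\in\bulk d_n$ whose $\executing$-coordinates sit at positions $p_1<\dotsb<p_n$, the definition of $\lambda_i$ together with Lemma~\ref{lemma_universalEvents_in_Bulk} yields $\lambda_i(c)=p_i$; hence, if $c$ corresponds to the configuration $(S,T)$, then $s_i(c)$ corresponds to $(S\setminus\{p_i\},T)$ and $t_i(c)$ to $(S,T\cup\{p_i\})$.

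The base case $\pi=\initialCell$ is immediate: $\hintost(\initialCell)=(\emptyset,\emptyset)$ and $\embedMorphism(\initialCell)=\initbulk d$ corresponds to $(\emptyset,\emptyset)$. For the step $\pi=\pi'\xrightarrow{s_i}q$ I have $\finishPath{\pi'}=s_i(q)$, so, writing $(S,T)$ for the configuration of $\embedMorphism(q)=\embedMorphism(\finishPath\pi)$, the second fact shows that $\embedMorphism(\finishPath{\pi'})=s_i(\embedMorphism(q))$ corresponds to $(S\setminus\{\lambda_i(\embedMorphism(q))\},T)$; the induction hypothesis then reads $\universalEvents{\embedMorphism}(S_{\pi'})=S\setminus\{\lambda_i(\embedMorphism(q))\}$ and $\universalEvents{\embedMorphism}(T_{\pi'})=T$. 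Since $S_\pi=S_{\pi'}\cup\{\lambda_i(q)\}$ and $\universalEvents{\embedMorphism}(\lambda_i(q))=\lambda_i(\embedMorphism(q))\in S$, applying $\universalEvents{\embedMorphism}$ gives $\universalEvents{\embedMorphism}(S_\pi)=S$ and $\universalEvents{\embedMorphism}(T_\pi)=T$, as required. The step $\pi=\pi'\xrightarrow{t_i}q$ is symmetric: now $r:=\finishPath{\pi'}$ is the higher cell with $q=t_i(r)$, so, writing $(S',T')$ for the configuration of $\embedMorphism(r)$, the cell $\embedMorphism(\finishPath\pi)=t_i(\embedMorphism(r))$ is obtained from $(S',T')$ by promoting the event at position $\lambda_i(\embedMorphism(r))$ from started to terminated; combined with $T_\pi=T_{\pi'}\cup\{\lambda_i(r)\}$, $S_\pi=S_{\pi'}$, and the first fact, this gives the claim.

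I do not anticipate a real obstacle; the only delicate point is bookkeeping which cell of each step is the higher-dimensional one, and hence on which cell $\lambda_i$ is evaluated, so that the ``started''/``terminated'' increments of $\hintost$ (Definition~\ref{def_hdaTOst}) align with the $\executing$-insertion and $\executing\mapsto 1$ updates on the bulk configuration. Isolating the two auxiliary facts from the induction keeps this alignment transparent.
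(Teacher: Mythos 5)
Your proof is correct, but it takes a genuinely different route from the paper's. The paper first normalizes: by Lemmas~\ref{lem:HomotopicPaths} and~\ref{lem:CanonicalPath} it replaces $\pi$ by a homotopic path of type $(s_1t_1)^l s_1\dots s_k$, and then splits into the cases $k=0$ (sequential tail, handled via Corollary~\ref{cor_newEvents}) and $k>0$ (handled via Proposition~\ref{prp:ActiveEvents}), so its induction leans on homotopy invariance of $(S_\pi,T_\pi)$ and on the non-repeating-events machinery. You instead induct on arbitrary paths one step at a time, which works because of the two local facts you isolate: the naturality identity $\universalEvents{\embedMorphism}(\lambda_i(q))=\lambda_i(\embedMorphism(q))$ (immediate from $\embedMorphism$ commuting with face maps and Definition~\ref{def_eventEquivSculpt}) and the explicit configuration-level action of the bulk's face maps, $s_i(c)\leftrightarrow(S\setminus\{\lambda_i(c)\},T)$ and $t_i(c)\leftrightarrow(S,T\cup\{\lambda_i(c)\})$, which follows from Definition~\ref{def_canonical} together with Lemma~\ref{lemma_universalEvents_in_Bulk}. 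These two facts make each $\hintost$-increment match the bulk's face action on the nose, so you never need homotopy normalization, Proposition~\ref{prp:ActiveEvents}, or Corollary~\ref{cor_newEvents} --- indeed your s-step computation $(S\setminus\{p\})\cup\{p\}=S$ goes through whether or not the event was fresh, so non-repeating events is not invoked even implicitly. Your explicit identification of $\quotientofwrt{S_\pi}{\eventEquivSculpt}$ with $\universalEvents{\embedMorphism}(S_\pi)$, justified by injectivity of the map induced on the quotient (Corollary~\ref{cor_eventEquivSculpt}), also makes type-correct an equality that the paper treats informally. What the paper's route buys is reuse of lemmas it needs anyway elsewhere (e.g.\ Proposition~\ref{prp:ActiveEvents} in Theorem~\ref{th:sculptalgo}); what yours buys is a shorter, self-contained, purely functorial argument.
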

\begin{proof}
Note that we work with a sculpture that is embedded directly in the $\bulkST{d}$, which is isomorphic to $\bulk{d}$. We do this to simplify the arguments, for otherwise we would have had to go through the isomorphism using the \chuintost\ to translate between tuples and ST-configurations.

We use induction on the length of the path $\pi$. By Lemmas \ref{lem:CanonicalPath} and \ref{lem:HomotopicPaths} we may assume that $\pi$ has type $(s_1t_1)^ns_1\dots s_k$, $q=\finishPath{\pi}\in Q_k$. 
For $\pi=\initialCell$, $(S_\pi, T_\pi) = (\emptyset, \emptyset)= \embedMorphism(\initialCell)$. 

For $k=0$, then $\pi=\pi'\transition{s_{1}}e\transition{t_{1}}q$ is sequential. By Definition~\ref{def_hdaTOst} $S_\pi = S_{\pi'} \cup\{\lambda(e)\}$ and $T_\pi = T_{\pi'}\cup\{\lambda(e)\}$ with $\lambda(e)\not\in S_{\pi'}$ (because of Corollary~\ref{cor_newEvents} since a sculpture has non-repeating events).
On the right side, the path $\embedMorphism(\pi)=\embedMorphism(\pi')\transition{s_{1}}\embedMorphism(e)\transition{t_{1}}\embedMorphism(\finishPath{\pi})$ has all universal labels different too, thus $\lambda(\embedMorphism(e))\not\in S'$ for $(S',T')=\embedMorphism(\finishPath{\pi'})$, and by construction $(S'\cup\{\lambda(\embedMorphism(e))\},T'\cup\{\lambda(\embedMorphism(e))\})=\embedMorphism(\finishPath{\pi})$.
We finish this case by applying the induction hypothesis to obtain
$\quotientofwrt{(S_{\pi}, T_{\pi})}{\eventEquivSculpt}=(\quotientofwrt{S_{\pi'}}{\eventEquivSculpt}\cup\{\universalEvents{\embedMorphism}(\lambda(e))\},\quotientofwrt{T_{\pi'}}{\eventEquivSculpt}\cup\{\universalEvents{\embedMorphism}(\lambda(e))\})\stackrel{ind}{=}
(S'\cup\{\universalEvents{\embedMorphism}(\lambda(e))\},T'\cup\{\universalEvents{\embedMorphism}(\lambda(e))\})\stackrel{\ref{def_eventEquivSculpt}}{=}(S'\cup\{\lambda(\embedMorphism(e))\},T'\cup\{\lambda(\embedMorphism(e))\})$.

For $k>0$, then
$\pi=\pi'\transition{s_{1}}q_{1}\dots\transition{s_{k}}q=\pi''\transition{s_{k}}q$
with $\pi'$ sequential.  From Proposition~\ref{prp:ActiveEvents} we
have $S_{\pi}=T_{\pi}\cup\{\lambda(q)\}$ and
$T_{\pi''}=T_{\pi}=T_{\pi'}=S_{\pi'}$. Denote by
$(S,T)=\embedMorphism(\finishPath{\pi})=\embedMorphism(q)$ and by
$(S'',T'')=\embedMorphism(\finishPath{\pi''})=\embedMorphism(s_{k}(q))$. By
construction, $(S,T)=(S''\cup\{\lambda_{k}(q)\},T'')$ and by the
induction hypothesis we have
$(S''\cup\{\lambda_{k}(q)\},T'') \stackrel{\text{ind}}{=}
(\quotientofwrt{S_{\pi''}}{\eventEquivSculpt}\cup\{\lambda_{k}(q)\},\quotientofwrt{T_{\pi''}}{\eventEquivSculpt})
=
(\quotientofwrt{S_{\pi'}}{\eventEquivSculpt}\cup\{\lambda(s_{k}(q))\}\cup\{\lambda_{k}(q)\},\quotientofwrt{T_{\pi'}}{\eventEquivSculpt})
=
(\quotientofwrt{S_{\pi'}}{\eventEquivSculpt}\cup\{\lambda(q)\},\quotientofwrt{T_{\pi'}}{\eventEquivSculpt})
= (\quotientofwrt{S_{\pi}}{\eventEquivSculpt},\quotientofwrt{T_{\pi}}{\eventEquivSculpt})$,
since by the consistency and non-repeated events properties we know
that
$\lambda_{k}(q)\not\in\quotientofwrt{S_{\pi''}}{\eventEquivSculpt}$.
\end{proof}

\begin{prop}
  \label{pr:sts=stpi}
 For a (connected) simplistic sculpture $\sculpture{Q}{\embedMorphism}{\bulk{n}}$ we have 
\[
  \sculpintost(\sculpture{Q}{\embedMorphism}{\bulk{n}}) \isomorphic
  \quotientofwrt{\hintost(Q)}{\eventEquivSculpt}\,.
\]
\end{prop}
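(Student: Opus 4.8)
The plan is to show that the two ST-structures coincide \emph{on the nose} once their event sets are identified, letting Lemma~\ref{lem:SpiSculpt} carry the real weight.

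First I would pin down the event bijection. By Corollary~\ref{cor_eventEquivSculpt} the relation $\eventEquivSculpt$ is exactly the kernel of $\universalEvents{\embedMorphism}:\universalEvents{Q}\to\{1,\dotsc,n\}$, so $\universalEvents{\embedMorphism}$ factors through an injection $\bar\iota:\quotientofwrt{\universalEvents{Q}}{\eventEquivSculpt}\hookrightarrow\{1,\dotsc,n\}$. Now, $\sculpintost(\sculpture{Q}{\embedMorphism}{\bulk{n}})$ has event set $\{1,\dotsc,n\}$ (Definition~\ref{def_sculptures_to_ST}), whereas $\quotientofwrt{\hintost(Q)}{\eventEquivSculpt}$ has event set $\quotientofwrt{\universalEvents{Q}}{\eventEquivSculpt}$, so $\bar\iota$ is exactly the candidate event component of the isomorphism. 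I would argue $\bar\iota$ is \emph{surjective}, hence bijective, from simplicity: if some $i\in\{1,\dotsc,n\}$ were not in $\im(\universalEvents{\embedMorphism})$, then no transition of $Q$ would point in direction $i$; using that precubical sets are closed under faces (so any cell with $i$-th coordinate $\executing$ has a direction-$i$ transition face) together with connectedness (so any cell with $i$-th coordinate $1$ is reached by a rooted path on which event $i$ terminates, forcing an earlier cell with $i$-th coordinate $\executing$), every cell of $Q$ would then have $i$-th coordinate $0$; thus $\embedMorphism$ would factor through the sub-bulk $\bulk{n-1}$ obtained by deleting coordinate $i$, contradicting minimality of $n$. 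Finally $\bar\iota$ is order-preserving, since $\universalEvents{}$ preserves $\eventless$ and, by Lemma~\ref{lemma_universalEvents_in_Bulk}, $\eventless$ on $\universalEvents{\bulk n}$ is the natural order on $\{1,\dotsc,n\}$.

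Next I would match the configuration sets. By Definition~\ref{def_sculptures_to_ST} the configurations of $\sculpintost(\sculpture{Q}{\embedMorphism}{\bulk{n}})$ are precisely $\{\embedMorphism(q)\mid q\in Q\}$, read as ST-configurations through $\bulk n\isomorphic\bulkST n$. By Definition~\ref{def_hdaTOst} the configurations of $\quotientofwrt{\hintost(Q)}{\eventEquivSculpt}$ are $\{\quotientofwrt{(S_\pi,T_\pi)}{\eventEquivSculpt}\mid\pi\in\Path(Q)_*\}$. Here Lemma~\ref{lem:SpiSculpt} does the work: $\quotientofwrt{(S_\pi,T_\pi)}{\eventEquivSculpt}=\embedMorphism(\finishPath{\pi})$ for every rooted path $\pi$ (under the identification $\bar\iota$), so this set equals $\{\embedMorphism(\finishPath{\pi})\mid\pi\in\Path(Q)_*\}$. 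Since $Q$ is connected, every cell $q$ is the endpoint $\finishPath{\pi}$ of some rooted path, and conversely every such endpoint is a cell; hence $\{\embedMorphism(\finishPath{\pi})\mid\pi\in\Path(Q)_*\}=\{\embedMorphism(q)\mid q\in Q\}$. Transported along $\bar\iota$, the two configuration sets therefore coincide.

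I would then conclude: $\bar\iota$ is an order-preserving bijection of events carrying the configuration set of $\quotientofwrt{\hintost(Q)}{\eventEquivSculpt}$ bijectively onto that of $\sculpintost(\sculpture{Q}{\embedMorphism}{\bulk{n}})$, hence a bijective morphism, i.e.\ an isomorphism of (ordered) ST-structures. The only genuinely delicate point is the surjectivity of $\universalEvents{\embedMorphism}$, which is where simplicity is indispensable; everything else is bookkeeping on top of Lemma~\ref{lem:SpiSculpt}. A minor care point is to keep straight that Lemma~\ref{lem:SpiSculpt} already builds in the identification $\bar\iota$ via Definition~\ref{def_eventEquivSculpt}, so one needs no separate translation between $\universalEvents{\embedMorphism}(\lambda(e))$ and $\lambda(\embedMorphism(e))$.
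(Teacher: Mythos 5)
Your proposal is correct and follows essentially the same route as the paper's proof: use simplicity to identify the event sets on both sides with $\{1,\dotsc,n\}$ (with the same order), then let Lemma~\ref{lem:SpiSculpt} together with connectedness (every cell is the endpoint of a rooted path) match the configuration sets. The one place you go beyond the paper is welcome detail rather than a different approach: where the paper merely asserts that simplicity yields the same event sets, you actually prove surjectivity of $\universalEvents{\embedMorphism}$ --- a missing direction $i$ would force every cell's $i$-th coordinate to be $0$, so $\embedMorphism$ would factor through $\bulk{n-1}$, contradicting minimality --- and this argument is sound.
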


\begin{proof}%
Note that the requirement of being simplistic is only needed in order to have the same set of events on both sides.
The events generated on the left side by \sculpintost\ are $\{1,\dots,n\}=\universalEvents{\bulk{n}}$ which are the events obtained on the right side due to the application of \eventEquivSculpt, having the same order.

The isomorphism is then exhibited by the identity map $f$ on the above sets of events.
Showing that $f$ preserves ST-configurations is easy by using the previous Lemma~\ref{lem:SpiSculpt} since every ST-configuration is generated as $\embedMorphism(q)$, but since all cells are reachable then there exists a path $\pi$ ending in $q$ so that we need to show $f(\embedMorphism(\finishPath{\pi}))=\quotientofwrt{(S_{\pi},T_{\pi})}{\eventEquivSculpt}$, which is done by the lemma.
\end{proof}

\begin{cor}
For a sculpture $\sculpture{Q}{\embedMorphism}{\bulkST{d}}$ the equivalence $\eventEquivSculpt$ is non-collapsing.
\end{cor}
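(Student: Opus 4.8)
The plan is to reduce non-collapsingness to a local injectivity statement and then verify it by an $s$-step count. By Lemma~\ref{le:quotientST}, $\eventEquivSculpt$ is non-collapsing exactly when the quotient map is an ST-morphism, i.e.\ when for every ST-configuration of $\hintost(Q)$ the quotient map is injective on its $S$-component. Since $\hintost(Q)=\bigcup_{\pi}\hintost(\pi)$ ranges over rooted paths, it suffices to fix a rooted path $\pi$ ending in $q=\finishPath{\pi}$ and show that $\universalEvents{\embedMorphism}$ is injective on $S_\pi$. Writing $\embedMorphism(q)=(S',T')$ for the ST-configuration of the image cell, Lemma~\ref{lem:SpiSculpt} gives $\quotientofwrt{S_\pi}{\eventEquivSculpt}=S'$, so $\universalEvents{\embedMorphism}(S_\pi)=S'$; hence it is enough to establish the cardinality equality $\num{S_\pi}=\num{S'}$, which forces $\universalEvents{\embedMorphism}\rest{S_\pi}$ to be a bijection onto $S'$ and therefore injective.

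I would carry out the count through $s$-steps. On the $Q$ side, every $s$-step along $\pi$ adds a genuinely new event to the started set by Corollary~\ref{cor_newEvents}, whereas every $t$-step only enlarges $T_\pi$; thus $\num{S_\pi}$ equals the number of $s$-steps in $\pi$. Because $\embedMorphism$ is a precubical (hence graded, dimension-preserving) morphism commuting with the face maps, it carries $s$-steps to $s$-steps and $t$-steps to $t$-steps, so the image path $\embedMorphism(\pi)$ has exactly the same number of $s$-steps as $\pi$.

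The final ingredient is a monotonicity observation inside the bulk. Along any path in $\bulk{d}$ the value of each coordinate is non-decreasing in the order $0<\executing<1$: an $s$-step changes a single coordinate $0\mapsto\executing$ and a $t$-step changes a single coordinate $\executing\mapsto 1$, leaving all other coordinates fixed. Consequently each coordinate undergoes the transition $0\mapsto\executing$ at most once, so the number of $s$-steps in $\embedMorphism(\pi)$ equals the number of coordinates of $\embedMorphism(q)$ whose value is $\executing$ or $1$, which is precisely $\num{S'}$. Chaining $\num{S_\pi}=\#\{s\text{-steps of }\pi\}=\#\{s\text{-steps of }\embedMorphism(\pi)\}=\num{S'}$ gives the desired equality, and since the argument is uniform over all rooted paths, $\eventEquivSculpt$ is non-collapsing.

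I expect the only delicate point to be the first equality, namely that $\num{S_\pi}$ counts $s$-steps rather than merely being bounded by the number of cells: this is exactly where Corollary~\ref{cor_newEvents}, and hence the non-repeating-events property of sculptures, is essential, since it guarantees that no already-started event is ever re-added. The remaining steps are the elementary combinatorics of coordinate monotonicity in $\bulk{d}$ and the dimension-preservation of the embedding, both of which are immediate from the definitions.
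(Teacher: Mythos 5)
Your proof is correct. There is nothing in the paper to match it against verbatim: the corollary is stated without proof (the source even carries an author note asking whether one is needed), the intended justification being implicit in the surrounding machinery --- the (1)$\Rightarrow$(2) direction of Theorem~\ref{th:sculptalgo} shows that $\eventEquivSculpt$ is a proper event identification, and Lemma~\ref{lem:TrivialIdent} shows that any proper event identification is trivial on every $S_\pi$, which is exactly non-collapsingness; that argument reduces $\pi$, via Lemmas~\ref{lem:HomotopicPaths} and~\ref{lem:CanonicalPath}, to a sequential representative and then exploits non-repetition of universal labels in the bulk. Your route is genuinely different and somewhat more elementary: you bypass the homotopy reduction entirely, counting $s$-steps on the $Q$ side via Corollary~\ref{cor_newEvents} (correctly flagging that this is where non-repeating events, guaranteed for sculptures, enters) and on the bulk side via coordinate monotonicity in the order $0<\executing<1$, concluding $\num{S_\pi}=\num{S'}$ and hence injectivity of $\universalEvents{\embedMorphism}\rest{S_\pi}$. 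This is sound; indeed your monotonicity count is precisely the reason bulks have non-repeating events, so the two arguments share the same kernel, yours trading the paper's homotopy lemmas for a direct cardinality computation. Two small points are worth making explicit: the equality between the number of $s$-steps of $\embedMorphism(\pi)$ and the number of coordinates of $\embedMorphism(\finishPath{\pi})$ valued $\executing$ or $1$ requires that $\embedMorphism(\pi)$ start at $(0,\dotsc,0)$, which holds because the embedding of a sculpture is a HDA morphism and therefore fixes the initial cell (Def.~\ref{def_sculptures}); and the passage from $\quotientofwrt{S_\pi}{\eventEquivSculpt}=S'$ (Lemma~\ref{lem:SpiSculpt}) to $\universalEvents{\embedMorphism}(S_\pi)=S'$ silently invokes Corollary~\ref{cor_eventEquivSculpt}, which identifies $\eventEquivSculpt$-classes with their images in $\{1,\dotsc,d\}$.
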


\begin{defi}\label{def_properEventIdent}
	Let $Q$ be an HDA. \emph{A proper event identification} on $Q$ is an equivalence relation $\evEqRel$ on $\universalEvents{Q}$ such that
	\begin{enumerate}
	\item\label{def_properEventIdentAnti}
		The quotient preorder on $\universalEvents{Q}/_\evEqRel$ induced from $\universalEvents{Q}$ is antisymmetric. Equivalently, if $a\eventless b$, $c\eventless d$, $a\evEqRel d$, $b\evEqRel c$ for $a,b,c,d\in\universalEvents{Q}$, then $a\evEqRel b \evEqRel c\evEqRel d$. 
	\item\label{def_properEventIdentFunction}
		If $\finishPath{\pi}=\finishPath{\pi'}$, then $(S_\pi/_\evEqRel,T_\pi/_\evEqRel)=(S_{\pi'}/_\evEqRel,T_{\pi'}/_\evEqRel)$.
	\item\label{def_properEventIdentInjective}
		If $(S_\pi/_\evEqRel,T_\pi/_\evEqRel)=(S_{\pi'}/_\evEqRel,T_{\pi'}/_\evEqRel)$, then $\finishPath{\pi}=\finishPath{\pi'}$.
	\end{enumerate}
\end{defi}

\begin{rem}\label{remark_ProperEvents}
	An equivalence relation $\evEqRel$ is a proper event identification if the sequence of relations
	\[
		Q \xrightarrow{en^{-1}} \Path(Q)_* \xrightarrow{\hintost} \hintost(Q)\xrightarrow{\subseteq} \bulkST{\num{\universalEvents{Q}}} \to \bulkST{\num{\universalEvents{Q}/_\evEqRel}}
	\]
	forms an injective function. Note that the right-most map is not (in general) an ST-map.
\end{rem}

\begin{lem}\label{lem:TrivialIdent}
	Let $\evEqRel$ be a proper event identification on $Q$. Then for every $\pi\in\Path(Q)_*$,  $\evEqRel$ is trivial when restricted to $S_\pi$.
\end{lem}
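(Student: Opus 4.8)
The plan is to prove the statement by induction on the number of steps of the rooted path $\pi$, keeping as the induction invariant the precise claim that any two \emph{distinct} events of $S_\pi$ are never related by $\evEqRel$. I expect that only the injectivity condition~(\ref{def_properEventIdentInjective}) of a proper event identification will be needed, together with the fact that $Q$ is a graded set. The base case is immediate: for $\pi=\initialCell$ we have $S_\pi=\emptyset$ by Definition~\ref{def_hdaTOst}, so there is nothing to check.

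For the inductive step I would write $\pi=\pi'\xrightarrow{\alpha}q$ with $q=\finishPath{\pi}$ and assume the claim for the shorter rooted path $\pi'$. If $\alpha=t_i$ is a termination step, then $S_\pi=S_{\pi'}$ by Definition~\ref{def_hdaTOst}, so the claim for $\pi$ is literally the induction hypothesis for $\pi'$ and there is nothing more to do. The only substantive case is a start step $\alpha=s_i$, where $S_\pi=S_{\pi'}\cup\{\lambda_i(q)\}$ and $T_\pi=T_{\pi'}$; here the task reduces to showing that the freshly started event $e=\lambda_i(q)$ is not $\evEqRel$-equivalent to any event already in $S_{\pi'}$ (events lying entirely inside $S_{\pi'}$ being handled by the hypothesis).

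The key idea for this case is to exploit injectivity by a dimension clash. Suppose, for contradiction, that $e\evEqRel f$ for some $f\in S_{\pi'}$ with $f\ne e$. Then the class $[e]_\evEqRel=[f]_\evEqRel$ already belongs to $S_{\pi'}/_\evEqRel$, so that $S_\pi/_\evEqRel=(S_{\pi'}\cup\{e\})/_\evEqRel=S_{\pi'}/_\evEqRel$, while also $T_\pi/_\evEqRel=T_{\pi'}/_\evEqRel$ since $T_\pi=T_{\pi'}$. Hence $\pi$ and $\pi'$ have identical quotient ST-configurations, and condition~(\ref{def_properEventIdentInjective}) forces $\finishPath{\pi}=\finishPath{\pi'}$, i.e.\ $q=s_i(q)$. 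This is impossible, because $q\in Q_n$ and $s_i(q)\in Q_{n-1}$ lie in different grades of $Q$. This contradiction shows no such $f$ exists, completing the induction.

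The main thing to get right is not any hard calculation but the bookkeeping observation that collapsing the new event with an old one makes the whole quotient configuration \emph{revert} to that of the strictly shorter prefix, so that injectivity must confuse a cell with one of its codimension-one faces. I would also note explicitly that neither the antisymmetry condition~(\ref{def_properEventIdentAnti}) nor the well-definedness condition~(\ref{def_properEventIdentFunction}) nor any non-repeating-events hypothesis is required for this particular lemma; the only subtlety to double-check is that the terminated set is genuinely unchanged across an $s$-step, so that the pair of quotient configurations of $\pi$ and $\pi'$ coincides in both components.
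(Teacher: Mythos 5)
Your proof is correct, and it takes a genuinely different route from the paper's. The paper argues non-inductively: it extends $\pi$ with $t$-steps so that it ends in a state, replaces it by a homotopic sequential path using Lemmas~\ref{lem:HomotopicPaths} and~\ref{lem:CanonicalPath}, picks $k<l$ with $\lambda(e_k)\evEqRel\lambda(e_l)$, and compares the prefixes $\pi_{l-1}$ and $\pi_l$ ending in the states $v_{l-1}$ and $v_l$, whose quotient ST-configurations coincide, to break Definition~\ref{def_properEventIdent}(\ref{def_properEventIdentInjective}). You instead induct along $\pi$ and compare $\pi=\pi'\xrightarrow{s_i}q$ with its immediate prefix: since $T_\pi=T_{\pi'}$ and collapsing the fresh event gives $S_\pi/_\evEqRel=S_{\pi'}/_\evEqRel$, condition~(\ref{def_properEventIdentInjective}) would force $q=\finishPath{\pi}=\finishPath{\pi'}=s_i(q)$, which gradedness ($Q_n\cap Q_{n-1}=\emptyset$) forbids. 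Your route buys three things: it avoids the homotopy machinery entirely; it makes explicit that only condition~(\ref{def_properEventIdentInjective}) and gradedness are used; and the dimension clash is unconditional, whereas the paper's conclusion that injectivity is ``broken'' tacitly needs $v_{l-1}\neq v_l$ --- condition~(\ref{def_properEventIdentInjective}) alone merely yields $\finishPath{\pi_l}=\finishPath{\pi_{l-1}}$, which in a general (possibly cyclic) HDA is not absurd by itself, though it is excluded under the non-repeating-events hypothesis in force around Theorem~\ref{th:sculptalgo} or for acyclic HDA. Your comparison of a cell with its own codimension-one face is exactly the repair: in the paper's notation one would compare $\pi_{l-1}$ with the prefix ending in the \emph{edge} $e_l$ rather than in the state $v_l$. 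What the paper's version buys in exchange is brevity and reuse of the sequential normal forms needed elsewhere anyway (e.g.\ in Proposition~\ref{prp:ActiveEvents} and Lemma~\ref{lem:SpiSculpt}), and it localizes the violation to two occurrences along a single sequential execution, which matches the intuition that a proper identification may never merge two events both active on one run.
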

\begin{proof}
	Assume that there exists a path $\pi\in \Path(Q)_*$ and $a\neq b\in S_\pi$ such that $a\evEqRel b$. Without loss of generality we may assume that $\finishPath{\pi}$ is a state (extending $\pi$ with some $t_1$--type segments if needed), and also that $\pi$ is sequential (by Lemma \ref{lem:HomotopicPaths} and \ref{lem:CanonicalPath}). Denote
	\[
		\pi = v_0 \xrightarrow{s} e_1 \xrightarrow{t} v_1 \xrightarrow{s} e_2 \xrightarrow{t} v_2 \xrightarrow{s} \dots \xrightarrow{s} e_n \xrightarrow{t} v_n	
	\]
	and let $\pi_j$ denote the prefix of $\pi$ ending at $v_j$. Since $S_\pi=\{\lambda(e_i)\}_{i=1}^n$, then there exist $k<l$ integers such that $\lambda(e_k)\evEqRel \lambda(e_l)$. But then
	\[
		(S_{\pi_l}/_\evEqRel,T_{\pi_l}/_\evEqRel)=(S_{\pi_{l-1}}/_\evEqRel, T_{\pi_{l-1}}/_\evEqRel),
	\]
	so $\evEqRel$ is not a proper identification, breaking \ref{def_properEventIdent}(\ref{def_properEventIdentInjective}).
\end{proof}

\begin{cor}
A proper event identification equivalence on $Q$ is non-collapsing.
\end{cor}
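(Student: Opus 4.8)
The plan is to derive this corollary directly from Lemma~\ref{lem:TrivialIdent}; essentially no new argument is needed, only a careful matching of definitions. First I would fix the ambient ST-structure against which ``collapsing'' is to be read: since $\evEqRel$ is an equivalence on $\universalEvents{Q}$, the relevant structure is $\hintost(Q)=(\universalEvents{Q},\sts)$, whose ST-configurations are, by Definition~\ref{def_hdaTOst}, exactly the pairs $(S_\pi,T_\pi)$ as $\pi$ ranges over the rooted paths in $\Path(Q)_*$.

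Then I would argue by contradiction. Suppose $\evEqRel$ is collapsing on $\hintost(Q)$. By the definition of collapsing there are a configuration $(S,T)\in\sts$ and distinct events $e,e'\in S$ with $e\evEqRel e'$. Writing $(S,T)=(S_\pi,T_\pi)$ for a suitable rooted path $\pi$, we obtain $e,e'\in S_\pi$ with $e\neq e'$ and $e\evEqRel e'$. This contradicts Lemma~\ref{lem:TrivialIdent}, which states that $\evEqRel$ restricted to $S_\pi$ is trivial, i.e.\ any $a,b\in S_\pi$ with $a\evEqRel b$ satisfy $a=b$. Hence no collapsing configuration exists, and $\evEqRel$ is non-collapsing.

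The only point requiring attention --- and the ``hard part'', such as it is --- lies not in this corollary but in the identification established beforehand: recognizing that \emph{every} configuration of $\hintost(Q)$ arises from some rooted path, so that the witness $(S,T)$ of collapsing can always be presented in the form $(S_\pi,T_\pi)$ to which Lemma~\ref{lem:TrivialIdent} applies. With that identification in hand, the statement follows immediately, and the corollary is really just a restatement of the triviality of $\evEqRel$ on started-event sets in the language of (non-)collapsing equivalences.
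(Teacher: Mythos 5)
Your proof is correct and matches the paper's intended argument exactly: the paper states this corollary without proof (treating it as immediate from Lemma~\ref{lem:TrivialIdent}), and your unwinding of the definition of collapsing on $\hintost(Q)$ --- every configuration arises as $(S_\pi,T_\pi)$ for a rooted path $\pi$, so a collapsing witness would contradict the triviality of $\evEqRel$ on $S_\pi$ --- is precisely the argument the authors had in mind.
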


\begin{thm}\label{th:sculptalgo}
  Let $Q$ be a connected HDA with non-repeating events. The following are
  equivalent:
	\begin{enumerate}
	\item
		$Q$ can be sculpted.
	\item
		There exists a proper event identification on $Q$.
	\end{enumerate}
\end{thm}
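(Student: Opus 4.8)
The plan is to prove the two implications separately, using the path-based translation $\hintost$ together with the labelling lemmas of Section~\ref{sec_background} and the results on $\eventEquivSculpt$ established above.

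For the direction (1)$\Rightarrow$(2) I would start from a sculpture $\sculpture{Q}{\embedMorphism}{\bulk{d}}$ and verify that the equivalence $\eventEquivSculpt$ of Corollary~\ref{cor_eventEquivSculpt} is a proper event identification, checking the three conditions of Definition~\ref{def_properEventIdent}. Antisymmetry follows because $\universalEvents{\embedMorphism}$ preserves $\eventless$ (functoriality of $\universalEvents{}$) and lands in $\universalEvents{\bulk{d}}\cong\{1,\dots,d\}$, whose order is the natural total order by Lemma~\ref{lemma_universalEvents_in_Bulk}; hence the induced order-preserving map $\universalEvents{Q}/_{\eventEquivSculpt}\hookrightarrow\{1,\dots,d\}$ is injective into an antisymmetric order, forcing the quotient preorder to be antisymmetric. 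Conditions \ref{def_properEventIdent}(\ref{def_properEventIdentFunction}) and \ref{def_properEventIdent}(\ref{def_properEventIdentInjective}) then follow immediately from Lemma~\ref{lem:SpiSculpt}, which gives $\quotientofwrt{(S_\pi,T_\pi)}{\eventEquivSculpt}=\embedMorphism(\finishPath{\pi})$: since $\embedMorphism$ is injective, $\finishPath{\pi}=\finishPath{\pi'}$ holds if and only if the two quotiented configurations coincide.

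For the harder direction (2)$\Rightarrow$(1) I would take a proper event identification $\evEqRel$ and promote the injective function of Remark~\ref{remark_ProperEvents}, namely $\Phi\colon Q\to\bulkST{\universalEvents{Q}/_{\evEqRel}}$ with $\Phi(q)=\quotientofwrt{(S_\pi,T_\pi)}{\evEqRel}$ for any rooted $\pi$ with $\finishPath{\pi}=q$ (well defined by \ref{def_properEventIdent}(\ref{def_properEventIdentFunction}) and defined everywhere by connectedness), to a HDA embedding. First I would fix a total order on $\universalEvents{Q}/_{\evEqRel}$ by choosing a linear extension of the quotient order, which exists exactly because of the antisymmetry condition \ref{def_properEventIdent}(\ref{def_properEventIdentAnti}); this identifies the target with $\bulk{d}$ for $d=\num{\universalEvents{Q}/_{\evEqRel}}$. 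The map $\Phi$ is injective (Remark~\ref{remark_ProperEvents}) and sends $\initialCell$, reached by the empty path, to $(\emptyset,\emptyset)$. Dimension is preserved since $\evEqRel$ is trivial on each $S_\pi$ (Lemma~\ref{lem:TrivialIdent}), so by Proposition~\ref{prp:ActiveEvents} the concurrency degree of $\Phi(q)$ equals $\num{S_\pi\setminus T_\pi}=\num{\lambda(q)}=n$ for $q\in Q_n$. The crux is that $\Phi$ commutes with the face maps, which I would check by computing $\Phi$ on $s_k(q)$ and $t_k(q)$ along well-chosen paths: for a rooted $\rho$ ending at $s_k(q)$ the extension $\rho\xrightarrow{s_k}q$ ends at $q$ and, by Definition~\ref{def_hdaTOst}, adds precisely $\lambda_k(q)$ to the $S$-component; dually, for $\pi$ ending at $q$ the extension $\pi\xrightarrow{t_k}t_k(q)$ adds $\lambda_k(q)$ to the $T$-component. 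Combining this with Lemma~\ref{le:multilabels}(1) (faces delete the $k$-th label) and the strict chain $\lambda_1(q)\eventless\dots\eventless\lambda_n(q)$ of Lemma~\ref{le:multilabels}(2), triviality of $\evEqRel$ on $S_\pi$ together with antisymmetry yields distinct, strictly ordered classes $\lambda_1(q)/_{\evEqRel}\prec\dots\prec\lambda_n(q)/_{\evEqRel}$ in the chosen linear extension, so that removing, respectively terminating, the $k$-th active event realizes exactly the bulk face maps $s_k,t_k$ of Definition~\ref{def_canonical}. Hence $\Phi\colon Q\hookrightarrow\bulk{d}$ is a HDA embedding and $Q$ is sculptable.

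The main obstacle is this last face-map verification $\Phi(\alpha_k(q))=\alpha_k(\Phi(q))$ for $\alpha\in\{s,t\}$. The difficulty is essentially index bookkeeping: one must guarantee that the $k$-th active event of $\Phi(q)$, with respect to the total order fixed on $\universalEvents{Q}/_{\evEqRel}$, is exactly the class of $\lambda_k(q)$. This is where the three ingredients must be used in concert — the strict order on the $\lambda_i(q)$ from Lemma~\ref{le:multilabels}(2), the non-collapsing property from Lemma~\ref{lem:TrivialIdent}, and antisymmetry \ref{def_properEventIdent}(\ref{def_properEventIdentAnti}) — to ensure that the relative order of these labels is both preserved and reflected by any linear extension, so that the choice of extension is immaterial.
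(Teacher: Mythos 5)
Your proposal is correct and follows essentially the same route as the paper: (1)$\Rightarrow$(2) via $\eventEquivSculpt$ and Lemma~\ref{lem:SpiSculpt} plus injectivity of $\embedMorphism$, and (2)$\Rightarrow$(1) by defining $\embedMorphism(q)=\quotientofwrt{(S_\pi,T_\pi)}{\evEqRel}$ composed with an order-preserving bijection onto $\{1,\dotsc,d\}$ (the paper's map $j$, your linear extension) and checking face-map compatibility using Proposition~\ref{prp:ActiveEvents}, Lemma~\ref{lem:TrivialIdent} and Lemma~\ref{le:multilabels}. The ``index bookkeeping'' you flag as the crux --- that the $k$-th active event of $\embedMorphism(q)$ in the chosen total order is exactly the class of $\lambda_k(q)$ --- is precisely the point the paper compresses into the remark that $j$ is injective and order-preserving in its displayed computation, so your treatment is, if anything, slightly more explicit than the paper's.
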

\begin{proof}
	\def\numbMap{j}
	(2)$\Rightarrow$(1).
	Let $\evEqRel$ be a proper event identification equivalence. Fix an order-preserving bijective map $\numbMap:\universalEvents{Q}/_\evEqRel \to \{1,\dotsc,d\}$, which exists by Definition~\ref{def_properEventIdent}(\ref{def_properEventIdentAnti}), with $d$ being the dimension of the quotient.
For $q\in Q_n$ choose a path $\pi$ ending at $q$ and put
	\[
		\embedMorphism(q)=(\numbMap(S_{\pi}/_\evEqRel),\numbMap(T_{\pi}/_\evEqRel)).
	\]
Lemma~\ref{lem:TrivialIdent} tells that the equivalence classes in $S_{\pi}/_\evEqRel $ and $T_{\pi}/_\evEqRel$ are singletons, thus making $\embedMorphism(q)=(\numbMap(S_{\pi}),\numbMap(T_{\pi}))$. We abused the notation here, as the map $\numbMap$ should be applied to an equivalence class, but instead we apply it to elements of \universalEvents{Q}, since for a particular path as we have here, the equivalence classes are singletons.
	Condition \ref{def_properEventIdent}(\ref{def_properEventIdentFunction}) assures that $\embedMorphism(q)$ does not depend on the choice of $\pi$ as long as $\finishPath{\pi}=q$. It remains to prove that $\embedMorphism$ is a morphism of precubical sets, \ie~it preserves the precubical maps. 
	
	Let $q\in Q_n$, $i\in \{1,\dotsc,n\}$, 
$\pi=\pi'\xrightarrow{s_i}q$;
then $\embedMorphism(q)=(j(S_\pi),j(T_\pi))$.
Since $\numbMap$ is injective and order-preserving, we have
	\begin{multline*}
		s_i(\embedMorphism(q))=
		s_i(\numbMap(S_\pi),\numbMap(T_\pi))\overset{\text{P.\ref{prp:ActiveEvents}}}=
		s_i(\numbMap(T_\pi)\cup \numbMap(\lambda(q)),\numbMap(T_\pi))= \\
		(\numbMap(T_\pi)\cup \numbMap(\lambda(q)\setminus\{\lambda_i(q)\}),\numbMap(T_\pi))=
		(\numbMap(T_\pi)\cup \numbMap(\lambda(s_i(q))),\numbMap(T_\pi))= \\
		(\numbMap(T_{\pi'})\cup \numbMap(\lambda(s_i(q))),\numbMap(T_{\pi'}))=
		(\numbMap(S_{\pi'}),\numbMap(T_{\pi'}))=\embedMorphism(s_i(q)).
	\end{multline*}
	Now let $\pi''=\pi\xrightarrow{t_i} t_i(q)$. A similar calculation shows that
	\[
		t_i(\embedMorphism(q))=t_i(j(S_\pi), j(T_\pi))= (j(S_{\pi''}), j(T_{\pi''}))=\embedMorphism(t_i(q)).
	\]
	As a consequence, $\embedMorphism$ is a precubical map $Q\to \bulkST{d}$.
	
	(1)$\Rightarrow$(2). Let $\sculpture{Q}{\embedMorphism}{\bulkST{d}}$ be a sculpture involving $Q$. Consider the equivalence relation \eventEquivSculpt\  from Definition~\ref{def_eventEquivSculpt}. Since this was defined using the functor \universalEvents{} then we know that it preserves the order, thus respecting property \ref{def_properEventIdent}(\ref{def_properEventIdentAnti}) of being a proper event identification.
The other two properties are derived using Lemma~\ref{lem:SpiSculpt}.
For two paths with $\finishPath{\pi}=q=\finishPath{\pi'}$ we have $\quotientofwrt{(S_{\pi},T_{\pi})}{\eventEquivSculpt}\stackrel{L.\ref{lem:SpiSculpt}}{=}\embedMorphism(\finishPath{\pi})=\embedMorphism(\finishPath{\pi'})\stackrel{L.\ref{lem:SpiSculpt}}{=}\quotientofwrt{(S_{\pi'},T_{\pi'})}{\eventEquivSculpt}$.
For the last property, start with $\embedMorphism(\finishPath{\pi})\stackrel{L.\ref{lem:SpiSculpt}}{=}\quotientofwrt{(S_{\pi},T_{\pi})}{\eventEquivSculpt}=\quotientofwrt{(S_{\pi'},T_{\pi'})}{\eventEquivSculpt}\stackrel{L.\ref{lem:SpiSculpt}}{=}\embedMorphism(\finishPath{\pi'})$, and because of the injectivity of \embedMorphism\ we have $\finishPath{\pi}=\finishPath{\pi'}$.
\end{proof}

Since the number of equivalence relations on $\universalEvents{Q}\times \universalEvents{Q}$ is finite,
then Theorem~\ref{th:sculptalgo} translates into an \emph{algorithm} to
determine whether $Q$ is a sculpture: First apply $\hintost(Q)$; then choose some equivalence relation on $\universalEvents{Q}\times \universalEvents{Q}$ and check whether it is a proper event identification.
The dimension $m=\num{\universalEvents{Q}}$ is smaller than or equal (when there is no concurrency) to the number of edges $\num{Q_{1}}=n$. Therefore, the number of relations on $\universalEvents{Q}$ that need to be checked is $2^{m^{2}} < 2^{n^{2}}$, which in the worst case can be more than exponential in the number of edges of $Q$. For each relation we need to check both that it is an equivalence and the proper event identification properties. If we know how to pick only the equivalence relations, which are exponential in number (i.e., using the Bell numbers\footnote{See the Bell numbers sequence as \url{https://oeis.org/A000110} in the OEIS.} they are exactly $(\frac{m}{e\cdot ln\,m})^{m} < B_{m} < (\frac{0.792 \cdot m}{ln\,(m+1)})^{m} $, see \cite{berend2010improved}) then we have to check these only for proper event identification. But we can do better by constructing a proper event identification (when it exists) while we traverse the HDA with the $\hintost$.
\label{pageFirstAlgComplexity}

In the following we give a more intuitive algorithm, using
constructions which iteratively \emph{repair} $\hintost(Q)$ by
constructing a finite sequence of increasing equivalence relations, in the end reaching a proper event identification.

For an HDA $Q$, using the notation of Def.~\ref{def_hdaTOst}, let
$\rho_0\subseteq Q\times \hintost(Q)$ be the relation
$\rho_0=\{( q, \hintost( \pi))\mid \finishPath{\pi}= q\}$. 
We call this an \emph{ST-labeling}, forming the composition of the first two relations from Remark~\ref{remark_ProperEvents}. 
For an equivalence relation $\mathord\sim\subseteq \universalEvents{Q}\times \universalEvents{Q}$, let
$\rho_\sim=\{( q, \quotientofwrt{( S, T)}{\sim})\mid( q,( S, T))\in
  \rho_0\}$.

First, the following lemma
shows that we can restrict our attention to only ST-labellings of $0$-cells
(and because of the previous results, it is enough to apply \hintost\ only to sequential paths).%

\begin{lem}
  \label{le:hda to sculp zero is enough}
  If $|\{ \sigma\mid( q, \sigma)\in \rho_0\}|> 1$ for some $q\in Q_k$,
  $k\ge 1$, then also $|\{ \sigma'\mid( q', \sigma')\in \rho_0\}|> 1$
  for some $q'\in Q_0$.
\end{lem}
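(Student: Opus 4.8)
The plan is to take the two rooted paths that witness $\lvert\{\sigma\mid(q,\sigma)\in\rho_0\}\rvert>1$ and extend \emph{both} of them by one and the same suffix of termination steps, dropping the dimension down to $0$, and then argue that the disagreement is inherited by the resulting $0$-cell. Concretely, suppose $\pi,\pi'\in\Path(Q)_*$ both end at $q\in Q_k$ with $\hintost(\pi)=(S_\pi,T_\pi)\neq(S_{\pi'},T_{\pi'})=\hintost(\pi')$. Since $\dim q=k\ge1$, there is a purely terminating path $\tau$ from $q$ to a $0$-cell, for instance $\tau=q\xrightarrow{t_1}t_1(q)\xrightarrow{t_1}\cdots\xrightarrow{t_1}q'$ of length $k$ with $q'\in Q_0$. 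Following $\pi$ and then $\tau$ (a rooted path I denote $\pi\tau$), and likewise $\pi'\tau$, gives two rooted paths ending at $q'$; I would then show $\hintost(\pi\tau)\neq\hintost(\pi'\tau)$, which by the definition of $\rho_0$ immediately yields $\lvert\{\sigma'\mid(q',\sigma')\in\rho_0\}\rvert>1$.

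The crucial structural observation is that the disagreement between $\hintost(\pi)$ and $\hintost(\pi')$ must already be present in the started-set component $S$. Here I invoke that $Q$ has non-repeating events, the standing hypothesis of this section, so Proposition~\ref{prp:ActiveEvents} applies and gives $S_\pi\setminus T_\pi=\lambda(q)=S_{\pi'}\setminus T_{\pi'}$, the \emph{same} set $\lambda(q)$ of running events at the common endpoint $q$. Consequently $T_\pi=S_\pi\setminus\lambda(q)$ is completely determined by $S_\pi$, and symmetrically for $\pi'$, so $(S_\pi,T_\pi)\neq(S_{\pi'},T_{\pi'})$ forces $S_\pi\neq S_{\pi'}$.

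It then remains to check that this $S$-conflict survives the extension by $\tau$. Inspecting Definition~\ref{def_hdaTOst}, a $t$-step never enlarges the started-set, only the terminated-set; since $\tau$ consists exclusively of $t$-steps, we obtain $S_{\pi\tau}=S_\pi\neq S_{\pi'}=S_{\pi'\tau}$. Hence the two ST-configurations assigned to $\pi\tau$ and $\pi'\tau$ differ already in their first component and are therefore distinct elements of $\{\sigma'\mid(q',\sigma')\in\rho_0\}$, as required. Equivalently, since $\lambda(q')=\emptyset$ at a $0$-cell, both configurations have the form $(S,S)$ by Proposition~\ref{prp:ActiveEvents}, and the two $S$'s differ.

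The main obstacle is precisely the reduction carried out in the second paragraph: without controlling $T$ in terms of $S$, extending by a common terminating suffix could collapse a $T$-only disagreement — for example $T_\pi=\{a\}$ and $T_{\pi'}=\{b\}$ could both become $\{a,b\}$ once $a$ and $b$ are terminated along $\tau$. It is exactly the non-repeating-events hypothesis, through Proposition~\ref{prp:ActiveEvents}, that excludes this by pushing any genuine conflict into the $S$-component, which termination steps manifestly preserve. I would therefore state this hypothesis explicitly at the outset of the proof.
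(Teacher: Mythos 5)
Your proof is correct and takes essentially the same route as the paper's: both arguments extend the two disagreeing rooted paths by one common, purely terminating suffix down to a state, and both invoke Proposition~\ref{prp:ActiveEvents} (under the standing non-repeating-events hypothesis) to see that the disagreement survives. The only difference is cosmetic bookkeeping: the paper localizes the conflict in the $T$-component (so that $T_{\pi}\cup\lambda(q)$ and $T_{\pi'}\cup\lambda(q)$ still differ, these unions being disjoint), whereas you localize it in the $S$-component, which $t$-steps manifestly preserve.
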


\begin{proof}%
  Assume $|\{ \sigma\mid( q, \sigma)\in \rho_0\}|> 1$, then there exist two different paths $\pi,\pi'$ ending in $q$ s.t.\
  $( q, ( S_{\pi}, T_{\pi}))$, $( q,( S_{\pi'}, T_{\pi'}))\in \rho_0$ with 
  $T_{\pi}\ne T_{\pi'}$; this being the only case since, by Proposition~\ref{prp:ActiveEvents}, $S_{\pi}\setminus T_{\pi}=\lambda(\finishPath{\pi})=S_{\pi'}\setminus T_{\pi'}$.  
We can complete both $\pi$ and $\pi'$ by the same sequence of t-steps from $q$ to its upper corner, \ie~there exist
$\pi_{0}=\pi\transition{t_{k}}\dots\transition{t_{1}}q_{0}$ and 
$\pi'_{0}=\pi'\transition{t_{k}}\dots\transition{t_{1}}q_{0}$,
with $\finishPath{\pi_{0}},\finishPath{\pi'_{0}}\in Q_{0}$.
By definition $T_{\pi_{0}}=T_{\pi}\cup\lambda(q)$ and $T_{\pi'_{0}}=T_{\pi'}\cup\lambda(q)$, which are different, meaning that we found the state $q_{0}$ for the lemma.
\end{proof}

We can immediately rule out ST-labellings in which a cell receives
ST-configurations with different numbers of events (this would break the property \ref{def_properEventIdent}(\ref{def_properEventIdentFunction}) of a proper event identification in an irreparable way, cf.~Lemmas~\ref{lem:TrivialIdent} and \ref{lem:STForSequntialPaths}):

\begin{lem}
  \label{le:algo no diff numbers}
  If there is $q\in Q_0$ and $( q,( S, S)),( q,( S', S'))\in \rho_0$
  with $| S|\ne| S'|$, then $Q$ cannot be sculpted.
\end{lem}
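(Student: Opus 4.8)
The plan is to argue by contradiction: I will assume that $Q$ admits a sculpture and derive that $(S,S)$ and $(S',S')$ must carry the same number of events, contradicting $|S|\ne|S'|$. The whole argument is a cardinality count, and the only delicate point is justifying that passing to the quotient by the event identification does not change the size of a single configuration's started-set.

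First I would reduce to the case of non-repeating events for free. A sculptable HDA necessarily has non-repeating events, since by the earlier proposition any $Q$ with repeating events cannot be sculpted; so under the contradiction hypothesis that $Q$ is sculptable I may invoke Theorem~\ref{th:sculptalgo}, which supplies a proper event identification $\evEqRel$ on $Q$. Next I would unpack the hypothesis of the lemma: by the definition of $\rho_0$, the memberships $(q,(S,S)),(q,(S',S'))\in\rho_0$ mean there are rooted paths $\pi,\pi'$ with $\finishPath{\pi}=\finishPath{\pi'}=q$ and $\hintost(\pi)=(S,S)$, $\hintost(\pi')=(S',S')$ (that the $S$- and $T$-components coincide for the $0$-cell $q$ is forced by Proposition~\ref{prp:ActiveEvents}, as $\lambda(q)=\emptyset$).

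The two key steps then chain together. Applying property \ref{def_properEventIdent}(\ref{def_properEventIdentFunction}) of a proper event identification to the two paths $\pi,\pi'$ ending in the same cell $q$ yields $\quotientofwrt{S}{\evEqRel}=\quotientofwrt{S'}{\evEqRel}$, so in particular $|\quotientofwrt{S}{\evEqRel}|=|\quotientofwrt{S'}{\evEqRel}|$. The quantitative input is Lemma~\ref{lem:TrivialIdent}: $\evEqRel$ restricts to the identity on $S_\pi$ for every rooted path $\pi$, whence $|\quotientofwrt{S}{\evEqRel}|=|S|$ and $|\quotientofwrt{S'}{\evEqRel}|=|S'|$. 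Combining the two gives $|S|=|S'|$, contradicting the hypothesis $|S|\ne|S'|$; therefore $Q$ cannot be sculpted.

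The main (and essentially only) obstacle is the cardinality bookkeeping, namely ensuring that the quotient map does not silently collapse two distinct events lying in the same configuration $S$ or $S'$. This is precisely what Lemma~\ref{lem:TrivialIdent} rules out, so once that lemma is in place the count is exact and the proof closes immediately; no further estimates are needed.
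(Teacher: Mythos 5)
Your proof is correct, but it takes a genuinely different route from the paper's. The paper argues directly from the embedding: if $\sculpture{Q}{\embedMorphism}{\bulk{n}}$ were a sculpture, the two rooted paths $\pi,\pi'$ ending in $q$ (which have different lengths, since they produce configurations of different sizes at a state) would map under the injective $\embedMorphism$ to two paths in the bulk from the initial cell to $\embedMorphism(q)$; inside a bulk any two such paths are homotopic, and homotopic paths have equal length --- contradiction. You instead route through the heavier machinery of Section~\ref{sec_decision}: sculptability gives non-repeating events, Theorem~\ref{th:sculptalgo} then supplies a proper event identification $\evEqRel$, property \ref{def_properEventIdent}(\ref{def_properEventIdentFunction}) forces $\quotientofwrt{S}{\evEqRel}=\quotientofwrt{S'}{\evEqRel}$, and Lemma~\ref{lem:TrivialIdent} guarantees the quotient is cardinality-preserving on each configuration, so $|S|=|S'|$. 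Your argument is sound and non-circular (Theorem~\ref{th:sculptalgo} and Lemma~\ref{lem:TrivialIdent} precede this lemma and do not use it), and it has the merit of formalizing exactly the parenthetical remark the paper makes just before the lemma, that such a labeling ``would break property \ref{def_properEventIdent}(\ref{def_properEventIdentFunction}) in an irreparable way''; it also neatly localizes the only delicate point (no collapsing within a single $S_\pi$) in Lemma~\ref{lem:TrivialIdent}. What the paper's proof buys in exchange is economy and independence: it is a two-step geometric argument needing only injectivity of the embedding and homotopy invariance of path length in the bulk, whereas yours invokes the full strength of the decidability theorem to prove what is, in the paper's architecture, a small auxiliary pruning lemma.
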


\begin{proof}%
  Assume to the contrary that $Q$ is in a sculpture with embedding
  $\embedMorphism: Q\to \bulk{n}$.  By construction of $\rho_0$, there
  are two rooted paths $\pi$, $\pi'$ in $Q$ which both end in $q$, but
  with different lengths.  By injectivity of $\embedMorphism$, the
  images of these paths under $\embedMorphism$, here denoted
  $\embedMorphism( \pi)$ and $\embedMorphism( \pi')$, are paths in
  $\bulk{n}$ from the initial state to $\embedMorphism( q)$.  But
  inside the bulk $\embedMorphism( \pi)$ and $\embedMorphism( \pi')$
  are homotopic, in contradiction to them having different lengths.
\end{proof}

We will inductively construct equivalence relations $\sim_n$, with the
property that $\mathord\sim_n\subsetneq \mathord\sim_{ n+ 1}$.  This
procedure will either lead to a relation
$\mathord\sim_N = \,\evEqRel$ that is a proper event identification equivalence as required in Theorem~\ref{th:sculptalgo}
or to an irreparable conflict as explained below.

Let $\mathord\sim_1= \{(\lambda(q),\lambda(q)) \mid \lambda(q)\in\universalEvents{Q}\}$,
the minimal equivalence relation on $\universalEvents{Q}$. 
If $Q$ is a sculpture, then
$\mathord\sim_1\subseteq \mathord{\eventEquivSculpt}$,
hence we can safely start our procedure with $\sim_1$.
Moreover, this minimal equivalence is antisymmetric (preserving the order of $Q$).

Assume, inductively, that $\sim_n$ has been constructed for some
$n\ge 1$.  The next lemma shows that if there are two different cells
which receive the same labeling under $\rho_{ \sim_n}$, then either $Q$
is not a sculpture or we need to backtrack%
, \ie~we would break property \ref{def_properEventIdent}(\ref{def_properEventIdentInjective}) since we have equated too much.

\begin{lem}
  \label{le:hda to sculpt rho not inj}
  If there are $( q, \sigma),( q', \sigma)\in \rho_{ \sim_n}$ with
  $q\ne q'$, and $Q$ can be sculpted, then $\mathord\sim_n\not\subseteq
  \mathord{\eventEquivSculpt}$ for any embedding $\embedMorphism:
  Q\hookrightarrow B^k$.
\end{lem}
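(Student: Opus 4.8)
The plan is to argue by contraposition in the following shape: fix an arbitrary embedding $\embedMorphism: Q\hookrightarrow B^k$, assume $\mathord\sim_n\subseteq\mathord{\eventEquivSculpt}$ holds for it, and derive $q=q'$, contradicting $q\ne q'$. First I would unpack the hypothesis $(q,\sigma),(q',\sigma)\in\rho_{\sim_n}$ using the definitions of $\rho_{\sim_n}$ and $\rho_0$: this produces rooted paths $\pi,\pi'$ with $\finishPath{\pi}=q$, $\finishPath{\pi'}=q'$, and $\quotientofwrt{(S_\pi,T_\pi)}{\sim_n}=\sigma=\quotientofwrt{(S_{\pi'},T_{\pi'})}{\sim_n}$.

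The key step is to observe that the inclusion $\mathord\sim_n\subseteq\mathord{\eventEquivSculpt}$ makes $\eventEquivSculpt$ the coarser relation, so the quotient by $\eventEquivSculpt$ factors through the quotient by $\sim_n$ via a well-defined surjection $\phi\colon\quotientofwrt{\universalEvents{Q}}{\sim_n}\to\quotientofwrt{\universalEvents{Q}}{\eventEquivSculpt}$ sending $[e]_{\sim_n}\mapsto[e]_{\eventEquivSculpt}$. Applying $\phi$ componentwise to the equality of $\sim_n$-quotients above, I obtain $\quotientofwrt{(S_\pi,T_\pi)}{\eventEquivSculpt}=\quotientofwrt{(S_{\pi'},T_{\pi'})}{\eventEquivSculpt}$.

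Then I would invoke Lemma~\ref{lem:SpiSculpt}, which identifies $\quotientofwrt{(S_\pi,T_\pi)}{\eventEquivSculpt}=\embedMorphism(\finishPath{\pi})=\embedMorphism(q)$, and likewise the primed version equals $\embedMorphism(q')$. Hence $\embedMorphism(q)=\embedMorphism(q')$, and injectivity of the embedding $\embedMorphism$ forces $q=q'$, the desired contradiction. Since the embedding was arbitrary, this establishes $\mathord\sim_n\not\subseteq\mathord{\eventEquivSculpt}$ for every embedding.

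The argument is short precisely because all the substantive work is carried by Lemma~\ref{lem:SpiSculpt}, which pins the $\eventEquivSculpt$-quotient of a path's ST-configuration to the embedded image of its endpoint. The only point requiring genuine care is the monotonicity/factorization step: I must ensure that equality of the \emph{finer} ($\sim_n$) quotients genuinely transports \emph{forward} to equality of the \emph{coarser} ($\eventEquivSculpt$) quotients (not the reverse, which would fail), and that this transport is applied correctly to both components $S$ and $T$ of the pair. I expect that — mild as it is — to be the main obstacle.
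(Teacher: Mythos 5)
Your proof is correct and takes essentially the same route as the paper's: the paper also argues by contradiction, transferring the shared $\sim_n$-label to a shared $\eventEquivSculpt$-label and concluding that two distinct cells would receive the same embedded image, contradicting injectivity of $\embedMorphism$. The only cosmetic difference is that you invoke Lemma~\ref{lem:SpiSculpt} directly (making the factorization through the coarser quotient explicit, which the paper leaves implicit), whereas the paper cites Proposition~\ref{pr:sts=stpi}, itself a packaging of that lemma.
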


\begin{proof}%
  The proof is by \textit{reductio ad absurdum}; suppose that there is
  an embedding for which
  $\mathord\sim_n\subseteq \mathord{\eventEquivSculpt}$.  However,
  since according to Proposition~\ref{pr:sts=stpi} $\quotientofwrt{\hintost(Q)}{\eventEquivSculpt}$ produces the
  same labels as $\sculpintost(\sculpture{Q}{\embedMorphism}{\bulk{d}})$, and since
  $\sculpintost$ labels each different cell with a different label
  (because of the injectivity of the embedding), we have a
  contradiction.
\end{proof}

We construct $\sim_{ n+ 1}$ from $\sim_n$ by finding and repairing
\emph{homotopy pairs}, which consist of two paths of the form

\vspace{-2ex}
\begin{equation*}
  \pi \transition{s}e_{1} \transition{t} v_{1}\ \dotsm\
  v_{n-1} \transition{s}e_{n} \transition{t} q
  \qquad \pi \transition{s}e'_{1} \transition{t} v'_{1}\ \dotsm\
  v'_{n-1} \transition{s}e'_{n} \transition{t} q
\ .
\end{equation*}
The shortest homotopy pair is an \emph{interleaving}, a pair of
two transitions.

\begin{lem}
  \label{le:homtp pair}
  If $q\in Q_0$ is such that
  $|\{ \sigma\mid( q, \sigma)\in \rho_n\}|> 1$, then there exists a
  homotopy pair
  with final state $q$.
\end{lem}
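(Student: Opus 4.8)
The plan is to read off the homotopy pair directly from two witnessing rooted paths. By hypothesis there are rooted paths $\pi,\pi'\in\Path(Q)_*$ ending in $q$ with $\quotientofwrt{(S_\pi,T_\pi)}{\sim_n}\ne\quotientofwrt{(S_{\pi'},T_{\pi'})}{\sim_n}$; in particular $(S_\pi,T_\pi)\ne(S_{\pi'},T_{\pi'})$, so $\pi\ne\pi'$. Since $q\in Q_0$ we have $\lambda(q)=\emptyset$, and Proposition~\ref{prp:ActiveEvents} gives $S_\pi=T_\pi$ and $S_{\pi'}=T_{\pi'}$, so both labels have the form $(S,S)$. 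First I would make both paths sequential: by Lemma~\ref{lem:CanonicalPath} a rooted path ending in a $0$-cell is homotopic to one of type $(s_1t_1)^l$, which is sequential, and by Lemma~\ref{lem:HomotopicPaths} passing to a homotopic path changes neither $S_\pi$ nor $T_\pi$. Hence we may assume $\pi=\initialCell\transition{s}e_1\transition{t}v_1\dotsm\transition{t}q$ and $\pi'=\initialCell\transition{s}e_1'\transition{t}v_1'\dotsm\transition{t}q$ are sequential, still with distinct labels, and by Lemma~\ref{lem:STForSequntialPaths} their event sets are $\{\lambda(e_i)\}$, all distinct because $Q$ has non-repeating events.

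Next I would fix the lengths and locate the divergence. Invoking Lemma~\ref{le:algo no diff numbers}, which in the algorithm is checked before homotopy pairs are sought, we may assume $\num{S_\pi}=\num{S_{\pi'}}=N$ (otherwise $Q$ is already known not to be sculptable); thus $\pi,\pi'$ have the same number $N$ of transitions. Because $Q$ has non-repeating events it is acyclic, so no non-trivial path runs from $q$ back to $q$; consequently neither of $\pi,\pi'$ can be a proper prefix of the other, as that would produce a non-trivial cycle at $q$. Therefore the longest common prefix $\varrho$ of $\pi$ and $\pi'$ is a proper prefix of both, and both suffixes are non-empty.

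The key structural point is that $\varrho$ ends in a state. Indeed, in a sequential path each $t$-step is forced: after a step $\transition{s}e_i$ the path must continue $\transition{t}t_1(e_i)$, so agreement on a transition $e_i=e_i'$ forces agreement on the following state $v_i=t_1(e_i)=t_1(e_i')=v_i'$. Hence the first disagreement occurs at an $s$-step leaving the common state $v=\finishPath{\varrho}$, i.e.\ the two continuing transitions $e\ne e'$ are different. If $c$ denotes the number of transitions of $\varrho$, then the two suffixes are sequential paths from $v$ to $q$, each with $N-c$ transitions and with distinct first transitions $e,e'$; together with the common prefix $\varrho$ they constitute exactly a homotopy pair with final state $q$, as required.

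The main obstacle is the passage to sequential paths combined with showing that the divergence happens at a state through genuinely different transitions: this is where non-repeating events (hence acyclicity) and the forced shape of the $t$-steps in a sequential path are essential, while Lemma~\ref{le:algo no diff numbers} is what guarantees the two branches have equal length $N-c$ and so match the displayed form of a homotopy pair.
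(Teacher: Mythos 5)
Your proof is correct and takes essentially the same route as the paper's: pass to sequential representatives via Lemmas~\ref{lem:CanonicalPath} and~\ref{lem:HomotopicPaths}, use Lemma~\ref{lem:STForSequntialPaths} together with Lemma~\ref{le:algo no diff numbers} to get equal lengths, and split the two paths at their longest common prefix. You even fill in a detail the paper leaves implicit (that the divergence must occur at a state, since $t$-steps after a transition in a sequential path are forced), while your acyclicity detour is redundant once equal length is established, as two sequential paths of equal length with one a prefix of the other would already coincide.
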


\begin{proof}%
  We have
  $|\{ \hintost( \pi)\mid \finishPath{\pi}= q\}|\ge |\{ \sigma\mid( q,
  \sigma)\in \rho_n\}|\ge 2$, hence at least two different rooted paths must
  lead to~$q$. These might share a common prefix $\pi$, which can also be the empty path, \ie~starting at the root.
According to Lemma~\ref{le:hda to sculp zero is enough} we look only at states, and because of Lemma~\ref{lem:CanonicalPath} we can look only at sequential paths, which according to Lemma~\ref{lem:STForSequntialPaths} the corresponding ST-configuration is formed of summing up their events, and since by Lemma~\ref{le:algo no diff numbers} these have the same number of events, the paths have the same length.
\end{proof}

Now if the homotopy pair is an interleaving
$\smash{\pi \transition{s}e^{a} \transition{t} v
  \transition{s}e^{b} \transition{t} q}$,
$\smash{\pi \transition{s}e^{c}}$
$\smash{\transition{t} v' \transition{s}e^{d} \transition{t}
  q}$, %
then we must repair by identifying
$\lambda(e^{a})$ with $\lambda(e^{d})$ and $\lambda(e^{c})$ with $\lambda(e^{b})$.  If it
is not, then there are several choices for identifying events, and
some of them may lead into situations like in Lemma~\ref{le:hda to
  sculpt rho not inj}.  Let $\tau$ be any permutation on
$\{ 1,\dotsc, n\}$ with 
$\tau( 1)\ne 1$ and $\tau( n)\ne n$, %
then we can identify $\lambda(e_i)$ with $\smash{\lambda(e'_{ \tau( i)})}$ 
for all $i= 1,\dotsc, n$. 
The restriction on the permutation is imposed by
the fact that we only identify transitions that can possibly be
concurrent, which is not the case for two transitions starting
from, or ending in, the same cell.

Let $\mathord{\sim_{ n+ 1}}\supsetneq \mathord{\sim_n}$ be the
equivalence relation thus generated which should still be antisymmetric (otherwise choose another permutation).
As this inclusion is proper, it
is clear that the described process either stops with a
Lemma~\ref{le:hda to sculpt rho not inj} situation, which cannot be
resolved without backtracking, or with a relation $\rho_N$ which
satisfies Theorem~\ref{th:sculptalgo}.

\begin{exa}\label{exa:backtrack}
 We give an example to illustrate why backtracking might be necessary
when applying the algorithm.  Figure~\ref{fig:backtrack} is a
variation of the example in Figure~\ref{fig:nosculpt} which, as the
labeling on the top right shows, \emph{can} be sculpted.  However, if
we \emph{start} our procedure by resolving the homotopy pair on the
left in a ``wrong'' way, see the bottom of the figure, then we get
into a contradiction in the top right corner and must backtrack.

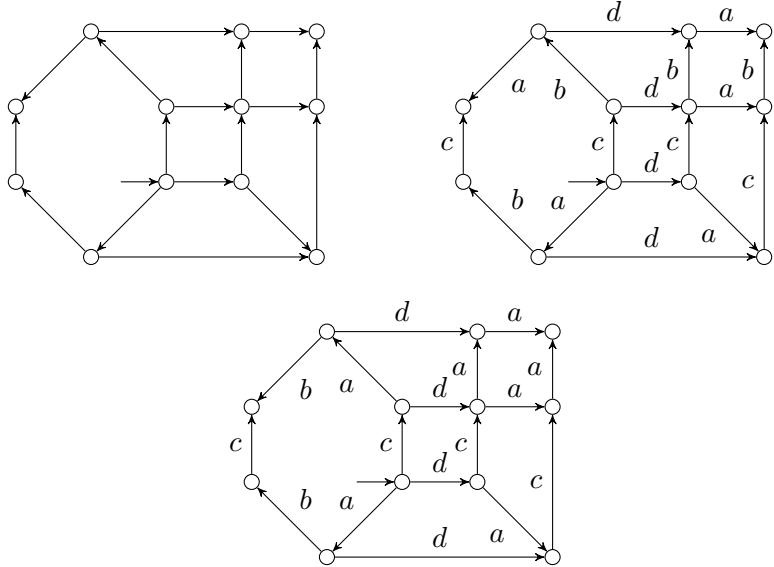
\begin{figure}[t]
  \centering
  \begin{tikzpicture}[>=stealth', x=1cm, y=1cm]
    \node[state] (1) at (1,3) {};
    \node[state] (2) at (3,3) {};
    \node[state] (3) at (4,3) {};
    \node[state] (4) at (0,2) {};
    \node[state] (5) at (2,2) {};
    \node[state] (6) at (3,2) {};
    \node[state] (7) at (4,2) {};
    \node[state] (8) at (0,1) {};
    \node[state, initial] (9) at (2,1) {};
    \node[state] (10) at (3,1) {};
    \node[state] (11) at (1,0) {};
    \node[state] (12) at (4,0) {};
    \path (1) edge (2);
    \path (5) edge (1);
    \path (6) edge (2);
    \path (2) edge (3);
    \path (7) edge (3);
    \path (1) edge (4);
    \path (5) edge (6);
    \path (6) edge (7);
    \path (8) edge (4);
    \path (11) edge (8);
    \path (9) edge (5);
    \path (9) edge (10);
    \path (10) edge (6);
    \path (9) edge (11);
    \path (11) edge (12);
    \path (12) edge (7);
    \path (10) edge (12);
  \end{tikzpicture}
  \hspace{3em}
  \begin{tikzpicture}[>=stealth', x=1cm, y=1cm]
    \node[state] (1) at (1,3) {};
    \node[state] (2) at (3,3) {};
    \node[state] (3) at (4,3) {};
    \node[state] (4) at (0,2) {};
    \node[state] (5) at (2,2) {};
    \node[state] (6) at (3,2) {};
    \node[state] (7) at (4,2) {};
    \node[state] (8) at (0,1) {};
    \node[state, initial] (9) at (2,1) {};
    \node[state] (10) at (3,1) {};
    \node[state] (11) at (1,0) {};
    \node[state] (12) at (4,0) {};
    \path (1) edge node {$d$} (2);
    \path (5) edge node {$b$} (1);
    \path (6) edge node {$b$} (2);
    \path (2) edge node {$a$} (3);
    \path (7) edge node {$b$} (3);
    \path (1) edge node {$a$} (4);
    \path (5) edge node {$d$} (6);
    \path (6) edge node {$a$} (7);
    \path (8) edge node {$c$} (4);
    \path (11) edge node[swap] {$b$} (8);
    \path (9) edge node {$c$} (5);
    \path (9) edge node {$d$} (10);
    \path (10) edge node {$c$} (6);
    \path (9) edge node[swap] {$a$} (11);
    \path (11) edge node {$d$} (12);
    \path (12) edge node {$c$} (7);
    \path (10) edge node[swap] {$a$} (12);
  \end{tikzpicture}

  \bigskip
  \begin{tikzpicture}[>=stealth', x=1cm, y=1cm]
    \node[state] (1) at (1,3) {};
    \node[state] (2) at (3,3) {};
    \node[state] (3) at (4,3) {};
    \node[state] (4) at (0,2) {};
    \node[state] (5) at (2,2) {};
    \node[state] (6) at (3,2) {};
    \node[state] (7) at (4,2) {};
    \node[state] (8) at (0,1) {};
    \node[state, initial] (9) at (2,1) {};
    \node[state] (10) at (3,1) {};
    \node[state] (11) at (1,0) {};
    \node[state] (12) at (4,0) {};
    \path (1) edge node {$d$} (2);
    \path (5) edge node {$a$} (1);
    \path (6) edge node {$a$} (2);
    \path (2) edge node {$a$} (3);
    \path (7) edge node {$a$} (3);
    \path (1) edge node {$b$} (4);
    \path (5) edge node {$d$} (6);
    \path (6) edge node {$a$} (7);
    \path (8) edge node {$c$} (4);
    \path (11) edge node[swap] {$b$} (8);
    \path (9) edge node {$c$} (5);
    \path (9) edge node {$d$} (10);
    \path (10) edge node {$c$} (6);
    \path (9) edge node[swap] {$a$} (11);
    \path (11) edge node {$d$} (12);
    \path (12) edge node {$c$} (7);
    \path (10) edge node[swap] {$a$} (12);
  \end{tikzpicture}
  \caption{%
    \label{fig:backtrack}
    A backtracking example of a sculpture where a homotopy pair is treated.}
\end{figure}

\end{exa}

\begin{rem}\label{rem:algorithm}
In conclusion, our final algorithm has the following steps:
\begin{enumerate}
\item Traverse the HDA using the \hintost, but because of Lemma~\ref{le:hda to sculp zero is enough} we can restrict to only states from $Q_{0}$, and because of Lemmas~\ref{lem:HomotopicPaths} and \ref{lem:HomotopicPathsType} we can look only at sequential paths. This means that applying \hintost\ is like traversing the graph formed of the $Q_{0}\cup Q_{1}$. This forms the $\rho_0$.
\item During the graph traversal, at each state check the Lemma~\ref{le:algo no diff numbers} in constant time.
The algorithm can stop here if the check does not succeed.
\item\label{rem:algorithmMinimalEquiv} Form the minimal equivalence relation on $\universalEvents{Q}$, called $\mathord\sim_1$ which produces the coarser labeling $\rho_1$. This is the same way as the definition of proper event identification starts, i.e., with an equivalence relation on $\universalEvents{Q}$. To build $\mathord\sim_1$ we need to also traverse all the concurrency 2-cells from $Q_{2}$.
\item Check in each state the Lemma~\ref{le:hda to sculpt rho not inj} in constant time.
The algorithm can stop here if the check for $\rho_1$ does not succeed.
\item\label{rem:algorithmHomotopy} Traverse one more time the graph formed of the $Q_{0}\cup Q_{1}$ to find any homotopy pair, as in Lemma~\ref{le:homtp pair}. 
\begin{enumerate}
\item For each homotopy pair add more equivalences to the previous $\rho_n$ resulting from choosing one of the permutations of the transitions of this pair, as explained before.
\item For all states that have their ST-labels changed by the new equated transitions, check again the Lemma~\ref{le:hda to sculpt rho not inj}.
If the check fails, then either backtrack and try another permutation, or the algorithm stops.
\item For each homotopy pair we may need to try out all the possible permutations, which are $(k-1)!$, with $k$ the length of the homotopy pair.
\end{enumerate}

\end{enumerate}
The complexity of this simple algorithm is mostly influenced by the backtracking that needs to be done for each homotopy pair in step (\ref{rem:algorithmHomotopy}).
Therefore, the complexity increases with the number of homotopy pairs that exist in the graph, their respective lengths, and the amount of relabeling which triggers checking of Lemma~\ref{le:hda to sculpt rho not inj}. Note that the less concurrency is in the HDA, the more homotopy pairs might exist, which at the same time reduces the amount of work done in step (\ref{rem:algorithmMinimalEquiv}), since this decreases with the decrease in amount of concurrency. The length of the homotopy pairs contributes the most, since this induces a factorial amount of backtracking. For the minimal homotopy pair of length 2 (the interleaving square) there is only one choice of permutation. Whereas, the worst case is for a 1-dimensional HDA consisting of a single homotopy pair of length $\num{Q_{1}}/2$.
Calculating precisely the complexity is left as future work, the same as finding more efficient algorithms (e.g., how to combine all into a single traversal).
\end{rem}

\section{Euclidean Cubical Complexes are Sculptures}
\label{se:euclid}

This section provides a connection between the combinatorial
intuition of sculptures and the geometric intuition of Euclidean
HDA.
It thus gives a concrete way of identifying precisely the events
that a grid imposes on any of its subsets. This is how several works
on deadlock detection model their studied systems, as ``grids with
holes'', which are geometric sculptures in our terminology.
We give below only strictly necessary definitions, and one is kindly pointed to \cite{Grandis09book, DBLP:books/sp/FajstrupGHMR16}
for background in directed topology.

\paragraph*{Directed topological spaces}

A directed topological space, or \emph{d-space}, is a pair
$( X, \po P X)$ consisting of a topological space $X$ and a set
$\po P X\subseteq X^I$ of \emph{directed paths} in $X$ which contains
all constant paths and is closed under concatenation, monotone
reparametrization, and subpath.

Prominent examples of d-spaces are the directed interval
$\po I=[ 0, 1]$ with the usual ordering and its cousins, the directed
$n$-cubes $\po I^n$ for $n\ge 0$.  Similarly, we have the directed
Euclidean spaces $\po \Real^n$, with the usual ordering, for $n\ge
0$.

Morphisms $f:( X, \po P X)\to( Y, \po P Y)$ of d-spaces are those
continuous functions that are also \emph{directed}, that is, satisfy
$f\circ \gamma\in \po P Y$ for all $\gamma\in \po P X$.  It can be
shown that for an arbitrary d-space $( X, \po P X)$,
$\po P X= X^{ \po I}$.

\paragraph*{Geometric realization}

The \emph{geometric realization} of a precubical set $Q$ is the
d-space
$\georel Q= \bigsqcup_{ n\ge 0} Q_n\times \po I^n / \mathord\sim$,
where the equivalence relation $\sim$ is generated by
\begin{align*}
  ( s_i q,( u_1,\dotsc, u_{ n- 1})) &\sim ( q,( u_1,\dotsc, u_{ i- 1},
  0, u_{ i+ 1},\dotsc, u_{ n- 1}))\,, \\
  ( t_i q,( u_1,\dotsc, u_{ n- 1})) &\sim ( q,( u_1,\dotsc, u_{i- 1},
  1, u_{ i+ 1},\dotsc, u_{ n- 1}))\,.
\end{align*}
(Technically, this requires us to define disjoint unions and quotients
of d-spaces, but there is nothing surprising about these definitions,
see~\cite{Uli05PhD}.)

Geometric realization is naturally extended to \emph{morphisms} of
precubical sets: if $f: Q\to R$ is a precubical morphism, then
$| f|:| Q|\to| R|$ is the directed map given by
$| f|( q,( u_1,\dotsc, u_n))=( f( q),( u_1,\dotsc, u_n))$.  Geometric
realization then becomes a functor from the category of precubical
sets to the category of d-spaces.

\paragraph*{Euclidean Precubical Sets}

Intuitively, a precubical set is Euclidean if its geometric
realization can be embedded into a hypercube lattice in some
$\smash{\po \Real^d}$.  We make this precise below.

\begin{defi}
  \label{de:grid}
  A non-selflinked precubical set $Q$ with $\dim Q= d< \infty$ is a
  \emph{grid} if there exist $M_1,\dotsc, M_d\in \Nat$ and a bijection
  $\Phi:\{ 1,\dotsc, M_1\}\times\dotsm\times\{ 1,\dotsc, M_d\}\to
  Q_d$, such that for all $k\in\{ 1,\dotsc, d\}$ and all
  $( i_1,\dotsc, i_d)\in\{ 1,\dotsc, M_1\}\times\dotsm\times\{
  1,\dotsc, M_{ k- 1}\}\times\{ 1,\dotsc, M_k- 1\}\times\{ 1,\dotsc,
  M_{ k+ 1}\}\times\dotsm\times\{ 1,\dotsc, M_d\}$,
  \begin{equation}
    \label{eq:grid}
    \qquad t_k \Phi( i_1,\dotsc, i_d)= s_k \Phi( i_1,\dotsc, i_{
      k- 1}, i_k+ 1, i_{ k+ 1},\dotsc, i_d)\,,
  \end{equation}
  and there are no other face relations between cubes in $Q$.
\end{defi}

\begin{figure}[tbp]
  \centering
  \begin{tikzpicture}[-, >=stealth']
    \path[->, gray] (-.2,0) edge (5.5,0);
    \path[->, gray] (0,-.2) edge (0,3.5);
    \fill[gray!35] (0,0) -- (4,0) -- (4,2) -- (0,2) -- (0,0);
    \foreach \x in {0,1,2,3,4} {
      \path (\x,0) edge (\x,2);
      \path[gray] (\x,-.1) edge (\x,.1);
      \node[font=\footnotesize] at (\x,-.3) {$\x$};
    }
    \foreach \y in {0,1,2} {
      \path (0,\y) edge (4,\y);
      \path[gray] (-.1,\y) edge (.1,\y);
      \node[font=\footnotesize] at (-.3,\y) {$\y$};
    }
  \end{tikzpicture}
  \caption{A two-dimensional grid.}
  \label{fi:grid}
\end{figure}

Hence a grid is a product\footnote{Technically, a \emph{tensor
    product}, see \cite{Uli05PhD}.} of \emph{long intervals}:
one-dimensional precubical sets with $1$-cells $1,\dotsc, M_j$ which
are connected such that the upper face of $M_i$ is the lower face of
$M_{ i+ 1}$ for all $i= 1,\dotsc, j- 1$.  Figure \ref{fi:grid} shows
an example of a two-dimensional grid with $M_1= 4$ and $M_2= 2$.  The
geometric realization of a grid is a \emph{subdivided cube}: it can be
embedded into $\smash{\po \Real^d}$ as the product of intervals
$[ 0, M_1]\times\dotsm\times[ 0, M_d]$.

\begin{defi}
  A precubical set $Q$ is \emph{Euclidean} if there exists a grid $G$
  and an embedding $Q\hookrightarrow G$.
\end{defi}

Intuitively, these are precisely the ``geometric sculptures'' referred
to in the introduction: subcomplexes of subdivided cubes.  The next
theorem shows that geometric sculptures and combinatorial
sculptures are the same.

\begin{thm}
  \label{th:scu-euc}
  A precubical set can be sculpted iff it is Euclidean.
\end{thm}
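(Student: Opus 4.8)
The plan is to prove both implications by passing through geometric realization and exploiting the basic fact that the bulk realizes to the full directed cube: $\georel{\bulk{d}}\isomorphic \po I^d$, with each cell of $\bulk{d}$ realizing to a face of this cube, i.e.\ to an elementary cube in $\po\Real^d$. The forward direction is then almost immediate, while the converse carries the real content and follows the geometric "unrolling" idea of~\cite{Ziemianski17}.

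For the forward direction (sculptable $\Rightarrow$ Euclidean), suppose $\embedMorphism:X\hookrightarrow\bulk{d}$ is a sculpture. Applying the geometric realization functor gives a directed map $\georel{\embedMorphism}:\georel X\to\georel{\bulk{d}}\isomorphic\po I^d$. Since $\embedMorphism$ is an injective precubical morphism, $\georel{\embedMorphism}$ is a d-embedding, hence a d-isometry onto its image. That image is the union of the faces $\georel{\embedMorphism(q)}$ over all cells $q\in X$, and each such face is an elementary cube of $\po\Real^d$. Thus the image is a finite union of elementary cubes, i.e.\ a Euclidean cubical complex, and $\georel X$ is d-isometric to it. Hence $X$ is Euclidean.

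For the converse (Euclidean $\Rightarrow$ sculptable) I would argue combinatorially. Assume $\georel X$ is d-isometric to a Euclidean cubical complex $C\subseteq\po\Real^n$; it suffices to exhibit a precubical embedding of $C$ into a bulk, since the d-isometry $\georel X\isomorphic\georel C$ lifts to a precubical isomorphism $X\isomorphic C$ (faithfulness of geometric realization on the non-selflinked precubical sets in play). Enclose $C$ in a box $\prod_{i=1}^n[0,N_i]$, set $d=\sum_{i=1}^n N_i$, and index the bulk coordinates by pairs $(i,k)$ with $1\le i\le n$, $1\le k\le N_i$, ordered lexicographically. Define $\embedMorphism$ on an elementary cube $[a,b]$ of $C$ by the staircase assignment
\[
 \embedMorphism([a,b])_{(i,k)}=
 \begin{cases}
  1 & \text{if } a_i\ge k,\\
  \executing & \text{if } a_i=k-1 \text{ and } b_i=k,\\
  0 & \text{if } b_i\le k-1.
 \end{cases}
\]
Because $b_i-a_i\in\{0,1\}$ for an elementary cube, each direction $i$ contributes at most one $\executing$, so the number of $\executing$ entries equals $\dim[a,b]$ and $\embedMorphism([a,b])$ is indeed a cell of $\bulk{d}$.

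It then remains to verify that $\embedMorphism$ is an injective precubical morphism, and this is where I expect the main obstacle to lie. Injectivity is clear: from the image one recovers $a_i$ as the number of indices $k$ with value $1$ in block $i$, and $b_i$ as $a_i$ plus the number of $\executing$ entries in that block. For the face maps, the key point is that the lexicographic order on the pairs $(i,k)$ lists the $\executing$-coordinates of $\embedMorphism([a,b])$ in exactly the order of the directions $\dirr[a,b]$ (each direction supplying at most one), so that collapsing the $k$-th direction of $[a,b]$ to its lower (resp.\ upper) end corresponds precisely to setting the $k$-th $\executing$-coordinate of the bulk cell to $0$ (resp.\ $1$); these are the bulk face maps $s_k,t_k$. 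The genuinely delicate steps are thus (i) matching the \emph{ordered} directions of a cube with the \emph{ordered} $\executing$-coordinates of its image so that the precubical identities~\eqref{eq:precubical} hold, and (ii) the bridging claim that the geometric d-isometry $\georel X\isomorphic C$ upgrades to a combinatorial isomorphism $X\isomorphic C$. Granting these, $\embedMorphism:C\hookrightarrow\bulk{d}$ is a sculpture and $X$ is sculptable.
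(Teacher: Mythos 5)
Your proof is correct and takes essentially the same route as the paper: your bounding box $\prod_i[0,N_i]$ plays exactly the role of the paper's intermediate \emph{grid} (Def.~\ref{de:grid}, Lemma~\ref{le:euc grid}), and your staircase assignment into $\bulk{d}$ with $d=\sum_i N_i$ is precisely the Chu-style block labeling used in the paper's proof of Lemma~\ref{le:grid sculpt}, with your face-map check corresponding to the paper's consistency verification against~\eqref{eq:grid}. The two steps you flag as delicate are treated at the same level of brevity by the paper itself: in particular, the upgrade of the d-isometry $\georel X\isomorphic C$ to a precubical isomorphism is asserted there without further argument (``Using the construction above, $X\cong\bigcup_i[a^i,b^i]$ as precubical sets'').
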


\begin{proof}%
  First off, any bulk is a grid, hence any sculpture can be embedded
  into a grid.
  For the reverse direction, it suffices to show that any grid is a
  sculpture.

  Consider a grid of dimension $d$ with $M_{1},\dots,M_{d}$ as the
  number of grid positions in any dimension.  We develop a naming
  scheme using Chu-style labels as in the canonical naming of bulks
  from Section~\ref{sec_sculptures}. We use the following list of
  events:
  $(
  e_{1}^{1},\dots,e_{1}^{M_{1}},e_{2}^{1},\dots,e_{2}^{M_{2}},\dots,$
  $e_{d}^{1},\dots,e_{d}^{M_{d}} )$ and to each event we give values
  from $\{0,\executing ,1\}$. The tuples have dimension
  $m=\sum_{1\leq i\leq d} M_{i}$.  Construct the bulk $\bulk{m}$ of
  dimension $m$ using the canonical naming starting with the m-tuple
  of the events ordered as above containing only $\executing$ values.
  Each $d$-cell of $X_{d}$ is identified by one of the grid cells
  (\ie~the bijection of the grid) as a $d$-tuple of indices
  $(i_{1},\dots,i_{d})$ to which we give an $m$-tuple label
  constructed as follows:
\[
\begin{cases}
    e_{k}^{i} = 1 & \forall i< i_{k} \,, \\
    e_{k}^{i_{k}} = \executing &  \\
    e_{k}^{i} = 0 & \forall i : i_{k} < i \leq M_{k}\,,
\end{cases}
\]
for $1\leq k\leq d$.

We then label all the faces of each $d$-cell with the canonical naming
starting from the above.  This face labeling is consistent with the
face equality restrictions \eqref{eq:grid} of the grid.  Indeed, take
two $d$-cells of the grid that have faces equated, \ie~pick two $d$-tuples
differing in only one index $(\dots,i_{k},\dots)$ and
$(\dots,i_{k}+1,\dots)$ which are named by the $m$-tuples
$(\dots,1,e_{k}^{i_{k}}=\executing,0,\dots)$ respectively
$(\dots,1,1,e_{k}^{i_{k}+1}=\executing,\dots)$ called $q_{d}^{i_{k}}$
respectively $q_{d}^{i_{k}+1}$.  The face maps are named as:
$t_k (q_{d}^{i_{k}})=(\dots,1,e_{k}^{i_{k}}=1,0,\dots)$ and
$s_k (q_{d}^{i_{k}+1})=(\dots,1,1,e_{k}^{i_{k}+1}=0,\dots)$ which are
the same, thus the equality \eqref{eq:grid} is respected.

Using the above naming, it is easy to construct an embedding from the grid $(M_{1},\dots,M_{d})$ into the bulk $\bulk{m}$: it maps each cell of the grid named by some m-tuple into the cell from the bulk that has the same name.
All the cells are uniquely named in the grid, and thus the mapping is correctly defined.
\end{proof}

\section{Conclusion}
\label{se:conc}

Using a precise definition of sculptures as higher-dimensional
automata (HDA), we have shown that sculptures are isomorphic to
regular ST-structures and also to regular Chu spaces.  This nicely
captures Pratt's event-state duality~\cite{Pratt92concur}.  We have
also shown that sculptures are isomorphic to Euclidean cubical
complexes, providing a link between geometric and combinatorial
approaches to concurrency.

We have made several claims in the introduction about HDA that can or
cannot be sculpted.  We sum these up in the next theorem; detailed
proofs are in Appendix~\ref{sec_proofs_conc}.

\begin{thm}
  \label{th:conc}\label{th_non_sculptures}
  \begin{enumerate}
  \item\label{propBrokenBoxGlabbeekExample} There are acyclic HDA
    which cannot be sculpted.
  \item There is an HDA which cannot be sculpted, but whose unfolding
    can be sculpted.
  \item There is an HDA which can be sculpted, but whose unfolding
    cannot be sculpted.
  \item There is an HDA which can be sculpted and whose unfolding can be sculpted.
  \item There is an HDA which cannot be sculpted and whose unfolding
    cannot be sculpted.
  \end{enumerate}
\end{thm}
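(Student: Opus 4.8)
The plan is to treat Theorem~\ref{th:conc} as five independent existence statements, each witnessed by one of the small HDA already displayed in the introduction, and to verify each witness by means of the decision criterion of Theorem~\ref{th:sculptalgo}: a HDA $Q$ with non-repeating events is sculptable precisely when it carries a \emph{proper event identification} in the sense of Definition~\ref{def_properEventIdent}. For the positive cases I would exhibit such an identification (equivalently, an explicit embedding into a bulk) and check the three conditions directly; for the negative cases I would run the repair procedure of Section~\ref{sec_decision} and show that the identifications forced by interleavings and homotopy pairs (Lemma~\ref{le:homtp pair}) unavoidably violate one of those conditions, landing in the irreparable situation of Lemma~\ref{le:hda to sculpt rho not inj}.

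First the easy items. For (4) I would take $Q=\bulk d$ itself: a bulk is a sculpture by definition, and since its realisation is a contractible cube, all rooted paths ending in a given cell are homotopic, so $\hintost(\bulk d)$ already assigns one ST-configuration per cell and the trivial identification $\sim_1$ is proper; hence the unfolding of $\bulk d$ is $\bulk d$ again and is likewise sculptable. For (1) I would use the one-dimensional acyclic HDA of Fig.~\ref{fig:nosculpt} (the broken box of Fig.~\ref{fig:boxes}, right, would serve equally well), which shows that even in dimension~$1$ there are acyclic non-sculptable HDA. For (2) I would take the two HDA of Fig.~\ref{fig:unfoldings}: each fails sculptability --- the first through a directed cycle, the second because a state is reached by rooted paths of different lengths, contradicting Lemma~\ref{le:algo no diff numbers} --- whereas their unfoldings are an infinite directed path and a finite tree, both manifestly embeddable into a (possibly infinite) bulk. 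Item (5) is witnessed by the speed game of Fig.~\ref{fig:speedAngelDemon}, which is acyclic and coincides with its own unfolding, so it suffices to show this single HDA is not sculptable.

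The genuinely delicate item is (3), together with the non-sculptability halves of (5) and of the broken box. For the sculptable half of (3) I would write down the embedding of Fahrenberg's matchbox (Fig.~\ref{fig:boxes}, left) into $\bulk3$ explicitly --- it is the subcomplex spanned by five faces of the $3$-cube --- and verify commutation with the face maps, equivalently producing the proper event identification that collapses its transitions into three event classes. The hard direction is proving that the \emph{unfolding} of the matchbox, the broken box, cannot be sculpted, and likewise the speed game. For these I would argue that the repair procedure has essentially no freedom: each interleaving of two transitions forces the unique cross-identification (the permutation constraint $\tau(1)\ne1$, $\tau(n)\ne n$ being vacuous in dimension one), so the chain $\sim_1\subsetneq\sim_2\subsetneq\dotsm$ of forced identifications is canonical, and one then follows $\rho_{\sim}$ until two distinct $0$-cells acquire equal ST-labels, contradicting Definition~\ref{def_properEventIdent}(\ref{def_properEventIdentInjective}) via Lemma~\ref{le:hda to sculpt rho not inj}.

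The main obstacle is precisely this last point: unlike sculptability, non-sculptability is not certified by a single witness identification but requires ruling out \emph{every} proper event identification, i.e.\ showing that no amount of backtracking in the search tree of Section~\ref{sec_decision} avoids the collapse. The crux will be to check that the forced identifications generated from the interleavings of the broken box (respectively the speed game) already identify two events that must remain distinct --- because, by Proposition~\ref{prp:ActiveEvents} and Lemma~\ref{lem:TrivialIdent}, their simultaneous presence in some reachable ST-configuration forces a collapse --- so that the conflict appears on the very first, unavoidable identification and no alternative branch can escape it.
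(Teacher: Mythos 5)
Your overall route coincides with the paper's: the same witnesses (Fig.~\ref{fig:boxes} for parts~(1) and~(3), the triangle of Fig.~\ref{fig:unfoldings} for part~(2), the speed game of Fig.~\ref{fig:speedAngelDemon} for part~(5)), certified through the machinery of Sec.~\ref{sec_decision} --- Lemma~\ref{le:algo no diff numbers} for the triangle, and a label collision \`a la Lemma~\ref{le:hda to sculpt rho not inj} for the remaining negative cases. Your choice of a bulk as the witness for part~(4) is a harmless variant of the paper's witness (the triangle's unfolding, which is its own unfolding).

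There are, however, two concrete corrections. First, your account of the crux --- non-sculptability of the broken box and of the speed game --- misattributes the mechanism. In these two examples the forced identifications do not come from interleavings (empty squares) at all: they come for free from the universal labeling $\universalEvents{Q}$, because chains of \emph{filled} squares sharing faces (front--top--back for the broken box, resp.\ the two squares sharing the upper $d$-transition in the speed game) already identify the relevant transitions by the very definition of $\eventEquivBulk$. Consequently the collision of $\hintost$-labels on two \emph{distinct} states (the two split corner states, resp.\ the two lower $d$-targets) occurs already at the trivial equivalence $\sim_1$, before any repair step; and since passing to a coarser equivalence can only preserve equality of quotient labels, condition \ref{def_properEventIdent}(\ref{def_properEventIdentInjective}) fails for \emph{every} candidate equivalence. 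This disposes of your worry about ``ruling out every proper event identification'' in one stroke: there is no search tree and no backtracking to rule out. The interleaving-forced, unique-permutation argument you describe is the mechanism of the one-dimensional example of Fig.~\ref{fig:nosculpt}, which the paper treats separately and which is not needed for any part of Theorem~\ref{th:conc}. Relatedly, Lemma~\ref{lem:TrivialIdent} is not the operative failure mode here: the identified transitions are \emph{parallel} (opposite sides of filled squares) and never co-occur in any reachable $S_\pi$, so no collapse within a configuration arises; the failure is injectivity across cells. Second, for part~(2), drop the two-state loop: its unfolding is infinite, and under the paper's definitions HDA are finite and bulks are $\bulk{d}$ for finite $d$, so ``a possibly infinite bulk'' is unavailable --- the triangle alone, failing by Lemma~\ref{le:algo no diff numbers} and unfolding to a finite sculptable tree, is the formal witness.
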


The HDA from Figs.~\ref{fig:boxes} (right)
and~\ref{fig:speedAngelDemon} are acyclic but cannot be sculpted.  It
is enough to apply the minimal equivalence of the decision algorithm
to obtain two cells with the same ST-label, \cf~Lemma~\ref{le:hda to
  sculpt rho not inj}.  This proves part~(1) of the theorem.

Both these examples are also their own unfoldings, which proves
part~(5).  Part~(2) is proven by the triangle in
Figure~\ref{fig:unfoldings}, which cannot be sculpted due to
Lemma~\ref{le:algo no diff numbers}.  For part~(4) we can use the
triangle's unfolding and the fact that this is its own unfolding.
Part~(3) is proven by Figure~\ref{fig:boxes}.
Finally, also the one-dimensional HDA from Figure~\ref{fig:nosculpt} cannot be sculpted.
There are several interleaving squares (Lemma~\ref{le:homtp pair}), so
the algorithm has to identify all transitions labeled $a$, which leads
to a contradiction {\`a}~la Lemma~\ref{le:hda to sculpt rho not inj}.

\paragraph*{Acknowledgements}

The authors are grateful to Lisbeth Fajstrup, Samuel Mimram and
Emm\-anuel Haucourt for multiple fruitful discussions on the subject
of this paper, and to Martin Steffen and Olaf Owe for help with an
early version of this paper.

\label{thisistheend}

\bibliographystyle{alpha}
\bibliography{mybib}

\newpage
\appendix

\section{Ordered precubical sets}\label{sec_ordered_pcs}

Fix a consistent precubical set $Q$ and an order $\eventless$ on the set $\universalEvents{Q}$ of universal labels of $Q$. For every $n>0$ and every $n$--cell $q\in Q_n$ let $\sigma(q):\{1,\dotsc,n\}\to\{1,\dotsc,n\}$ be the unique permutation such that
	\[
		\lambda_{\sigma(q)(1)}(q)\eventless\lambda_{\sigma(q)(2)}(q)\eventless\dots\eventless\lambda_{\sigma(q)(n)}(q).
	\]
Let $Q'$ be a precubical set that has the same cells as $Q$, \ie\ $Q'_n=Q_n$ for all $n\geq 0$, and face maps given by
	\[
		s'_i(q)=s_{\sigma(q)(i)}(q),\qquad t'_i(q)=t_{\sigma(q)(i)}(q).
	\]
It remains to check that the face maps $s'_i$ and $t'_i$ satisfy the precubical relations.

Define functions $\mathsf{d}_i:\{1,\dotsc,n-1\}\to \{1,\dotsc,n\}$:
\[
	\mathsf{d}_i(k)=\begin{cases}
		k & \text{for $k<i$},\\
		k+1 & \text{for $k\geq i$,}
	\end{cases}
\]
\begin{lem}\label{lem:AltPreCubRel}
	Let $Q_n$, $n\geq 0$ be a family of sets and for every $n$ let  $s_i,t_i:Q_n\to Q_{n-1}$, $i\in\{1,\dotsc,n\}$ be maps. The following conditions are equivalent:
	\begin{itemize} 
		\item
			The maps $s_i, t_i$ satisfy the precubical relations (\ie\ $Q_n$ with the maps $s_i, t_i$ form a precubical set);
		\item
			$\alpha_i\beta_j=\beta_k \alpha_l$ for $\alpha,\beta\in\{s,t\}$ and all integers $i,j,k,l$ such that $\{\mathsf{d}_j(i),j\}=\{\mathsf{d}_l(k),l\}$.
	\end{itemize}
\end{lem}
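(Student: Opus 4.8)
The plan is to read both conditions as two presentations of the same geometric fact: a composite of two face maps $X_n\to X_{n-2}$ fixes two distinct directions of the $n$-cube, one to an $s/t$-value recorded by the inner map and one by the outer map, and the relation asserts that the two orders in which a given pair of directions (with given $s/t$-choices) can be fixed agree. The identities \eqref{eq:precubical} single out the order $k<\ell$; the $\mathsf{d}$-condition is the order-free reformulation, which is what survives the later reindexing by the permutations $\sigma(q)$. First I would record the bookkeeping observation that makes the translation precise: for $\alpha,\beta\in\{s,t\}$, $j\in\{1,\dotsc,n\}$ and $i\in\{1,\dotsc,n-1\}$, the composite $\alpha_i\beta_j$ (inner $\beta_j$, outer $\alpha_i$) fixes direction $j$ to the $\beta$-end and direction $\mathsf{d}_j(i)$ to the $\alpha$-end. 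This is exactly the defining property of $\mathsf{d}_j$: after deleting coordinate $j$, the $i$-th surviving coordinate is the original coordinate $\mathsf{d}_j(i)$. Thus $\{\mathsf{d}_j(i),j\}$ is the unordered pair of directions fixed by $\alpha_i\beta_j$, and the positions of $\alpha,\beta$ record which direction receives which $s/t$-choice.

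The implication ``$\mathsf{d}$-condition $\Rightarrow$ \eqref{eq:precubical}'' is then immediate, since each precubical identity is a single instance of the hypothesis. Given $k<\ell$, the two sides $\alpha_k\beta_\ell$ and $\beta_{\ell-1}\alpha_k$ carry index data $(i,j)=(k,\ell)$ and $(k,l)=(\ell-1,k)$; using $k<\ell$ one computes $\{\mathsf{d}_\ell(k),\ell\}=\{k,\ell\}=\{\mathsf{d}_k(\ell-1),k\}$, both fixing $k$ to the $\alpha$-end and $\ell$ to the $\beta$-end. Hence the set equality holds and the hypothesis forces $\alpha_k\beta_\ell=\beta_{\ell-1}\alpha_k$.

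For the converse ``\eqref{eq:precubical} $\Rightarrow$ $\mathsf{d}$-condition'' --- the substantive direction --- I take composites $\alpha_i\beta_j$ and $\beta_k\alpha_l$ whose fixed pairs coincide, matching the $\alpha$-fixed direction to the $\alpha$-fixed direction and likewise for $\beta$ (this is what the placement of $\alpha,\beta$ in the two expressions encodes; when $\alpha=\beta$ the assertion is symmetric and trivial), so that $\mathsf{d}_j(i)=l$ and $j=\mathsf{d}_l(k)$. I split on the order of the two fixed directions. If $l<j$, the arithmetic of $\mathsf{d}$ forces $i=l$ and $k=j-1$, so the two composites are literally $\alpha_l\beta_j$ and $\beta_{j-1}\alpha_l$, which are the two sides of \eqref{eq:precubical} (indices $l<j$). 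If $j<l$, it forces $i=l-1$ and $k=j$, so the composites are $\alpha_{l-1}\beta_j$ and $\beta_j\alpha_l$, again the two sides of \eqref{eq:precubical} (now with $\beta$ as the outer type, indices $j<l$). In each case the equality is a single application of a standard identity after purely arithmetic reindexing.

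The main obstacle I expect is exactly this index chase: keeping straight which of $i,j,k,l$ ranges over $\{1,\dotsc,n\}$ versus $\{1,\dotsc,n-1\}$, and checking in each order case that the $\mathsf{d}$-arithmetic collapses the two composites onto the two sides of one relation \eqref{eq:precubical}. There is no conceptual difficulty once the bookkeeping observation is isolated up front and everything is phrased in terms of ``which direction is fixed to which end''; the risk is only that an unguided computation becomes opaque, which I would avoid by always tracking the pair of fixed directions together with their $s/t$-labels rather than the raw indices.
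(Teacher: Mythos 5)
Your proposal is correct and takes essentially the same route as the paper: the paper's one-sentence proof simply solves the index equation, listing the solutions $(i,j,k,l)=(i,j,i,j)$, $(i,j,j-1,i)$ and $(j-1,i,i,j)$ (the latter two for $i<j$), which are precisely your trivial diagonal case and the two outcomes of your case split recovering the two orientations of \eqref{eq:precubical}. If anything you are slightly more explicit than the paper in adopting the labeled matching $\mathsf{d}_j(i)=l$, $j=\mathsf{d}_l(k)$ --- the ordered-pair reading the paper itself uses when it later applies the lemma --- which is what rules out the degenerate solution $(i,j,i,j)$ with $\alpha\ne\beta$ that the literal unordered-set condition would otherwise admit.
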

\begin{proof}
	The latter condition is satisfied only when $(i,j,k,l)=(i,j,i,j)$ (for any $i,j$) or when $(i,j,k,l)=(i,j,j-1,i)\mbox{ or }(j-1,i,i,j)$ (for $i<j$).
\end{proof}

\begin{lem}\label{lem:B}
	For $q\in Q_n$, $\alpha\in\{s,t\}$ and $k\in\{1,\dotsc,n\}$
	\[
		\sigma(q)\circ \mathsf{d}_k=\mathsf{d}_{\sigma(q)(k)}\circ \sigma(\alpha_{\sigma(q)(k)}(q)).
	\]
\end{lem}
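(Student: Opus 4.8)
The plan is to verify the claimed identity pointwise, by showing that both sides, evaluated at any $p\in\{1,\dotsc,n-1\}$, name the same coordinate of $q$, and then to invoke consistency to conclude that a coordinate is recoverable from its label. Throughout I would write $j=\sigma(q)(k)$ and $q'=\alpha_{\sigma(q)(k)}(q)=\alpha_j(q)\in Q_{n-1}$, so that the statement becomes $\sigma(q)\circ\mathsf{d}_k=\mathsf{d}_j\circ\sigma(q')$, an equation of maps $\{1,\dotsc,n-1\}\to\{1,\dotsc,n\}$.

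First I would record how labels transform under a face map. By Lemma~\ref{le:multilabels}(1), $\lambda_m(q')=\lambda_{\mathsf{d}_j(m)}(q)$ for all $m\in\{1,\dotsc,n-1\}$; as $\mathsf{d}_j$ is a bijection onto $\{1,\dotsc,n\}\setminus\{j\}$, the labels of $q'$ form exactly the set $\{\lambda_i(q)\mid i\neq j\}$. Since $j=\sigma(q)(k)$, the removed label $\lambda_j(q)$ is precisely the $k$-th smallest of $\lambda_1(q),\dotsc,\lambda_n(q)$, so the increasingly sorted labels of $q'$ are the sorted labels of $q$ with the $k$-th entry deleted. Hence the $p$-th smallest label of $q'$ equals the $\mathsf{d}_k(p)$-th smallest label of $q$, deletion-and-re-ranking shifting ranks exactly by $\mathsf{d}_k$ ($p$ for $p<k$, $p+1$ for $p\geq k$).

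With this in hand I would compute the $\lambda$-value named by each side, viewed as a coordinate of $q$. On the left, $\lambda_{\sigma(q)(\mathsf{d}_k(p))}(q)$ is, by definition of $\sigma(q)$, the $\mathsf{d}_k(p)$-th smallest label of $q$. On the right, Lemma~\ref{le:multilabels}(1) applied with $m=\sigma(q')(p)$ gives $\lambda_{\mathsf{d}_j(\sigma(q')(p))}(q)=\lambda_{\sigma(q')(p)}(q')$, which by definition of $\sigma(q')$ is the $p$-th smallest label of $q'$, hence — by the previous step — again the $\mathsf{d}_k(p)$-th smallest label of $q$. Thus $\sigma(q)(\mathsf{d}_k(p))$ and $\mathsf{d}_j(\sigma(q')(p))$ are two coordinates of $q$ bearing equal labels.

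The final step, and the only one where a real hypothesis rather than bookkeeping enters, is to pass from ``same label'' to ``same coordinate''. This is exactly where consistency of $Q$ is used: by Lemma~\ref{le:multilabels}(3) the labels $\lambda_1(q),\dotsc,\lambda_n(q)$ are pairwise distinct, so $m\mapsto\lambda_m(q)$ is injective and the coordinate is uniquely determined by its label. Therefore $\sigma(q)(\mathsf{d}_k(p))=\mathsf{d}_j(\sigma(q')(p))$ for every $p$, which is the asserted equality. The main obstacle is purely notational — keeping the two index shifts $\mathsf{d}_k$ (acting on the sorted-rank domain) and $\mathsf{d}_j$ (acting on the coordinate domain) clearly separated — and I would guard against slips by writing out the rank-shift claim separately for $p<k$ and $p\geq k$.
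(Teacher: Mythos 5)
Your proof is correct and takes essentially the same route as the paper: both arguments transport labels along the face map via Lemma~\ref{le:multilabels}(1) and identify each side, after composing with $i\mapsto\lambda_i(q)$, as the increasing enumeration of the labels of $\alpha_{\sigma(q)(k)}(q)$, concluding by distinctness of the $\lambda_i(q)$. The only difference is presentational --- you verify the equality pointwise (both sides name the $\mathsf{d}_k(p)$-th smallest label of $q$) where the paper appeals to uniqueness of the increasing enumeration of the common image $\{1,\dotsc,n\}\setminus\{\sigma(q)(k)\}$, and you make explicit, via Lemma~\ref{le:multilabels}(3), the consistency/injectivity step the paper leaves implicit.
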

\begin{proof}
	Both maps have the same image $\{1,\dotsc,n\}\setminus\{\sigma(q)(k)\}$ so it is enough to show that they both are increasing. The compositions of both sides with $\lambda(q)$, which when seen as a function from $\{1\dots n\}\to \universalEvents{Q}$, it is increasing, are
	\[
		\lambda(q)\circ \sigma(q) \circ \mathsf{d}_k
	\]
	and
	\[
		\lambda(q)\circ\mathsf{d}_{\sigma(q)(k)}\circ \sigma(\alpha_{\sigma(q)(k)}(q))
		=\lambda(\alpha_{\sigma(q)(k)}(q))\circ \sigma(\alpha_{\sigma(q)(k)}(q)).
	\]
        \smallskip

	\noindent They are increasing since $\lambda(x)\circ \sigma(x)$ is increasing for all $x$.
	The equation above follows from Lemma~\ref{le:multilabels}.(1). 
\end{proof}

\begin{lem}
	The maps $s'_i$ and $t'_i$ satisfy the precubical relations.
\end{lem}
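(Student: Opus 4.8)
The plan is to deduce the primed precubical relations from the original ones through the index bookkeeping supplied by Lemma~\ref{lem:B}, using Lemma~\ref{lem:AltPreCubRel} as the bridge. First I would invoke Lemma~\ref{lem:AltPreCubRel}: rather than checking the face identities directly, it is enough to verify, for every $\alpha,\beta\in\{s,t\}$, every $q\in Q_n$, and every quadruple $i,j,k,l$ with $\{\mathsf{d}_j(i),j\}=\{\mathsf{d}_l(k),l\}$, the single skew-commutation $\alpha'_i\beta'_j(q)=\beta'_k\alpha'_l(q)$.

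Next I would unfold the primed maps on each side. Writing $\sigma=\sigma(q)$, the definition gives $\beta'_j(q)=\beta_{\sigma(j)}(q)$, and setting $q'=\beta_{\sigma(j)}(q)$ the outer step becomes $\alpha'_i(q')=\alpha_{\sigma(q')(i)}(q')$, so that $\alpha'_i\beta'_j(q)=\alpha_{\sigma(q')(i)}\beta_{\sigma(j)}(q)$ is an honest composition of original face maps. The crucial point is to determine which two original directions this composition deletes. Applying Lemma~\ref{lem:B} (with $\beta$ in the role of $\alpha$ and $k:=j$) and evaluating the resulting equality of maps at $i$ yields $\mathsf{d}_{\sigma(j)}(\sigma(q')(i))=\sigma(\mathsf{d}_j(i))$; hence the deleted pair is $\{\sigma(\mathsf{d}_j(i)),\sigma(j)\}=\sigma(\{\mathsf{d}_j(i),j\})$. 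The symmetric computation for $\beta'_k\alpha'_l(q)=\beta_{\sigma(q'')(k)}\alpha_{\sigma(l)}(q)$, with $q''=\alpha_{\sigma(l)}(q)$, shows its deleted pair is $\sigma(\{\mathsf{d}_l(k),l\})$.

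Finally I would close the argument by transporting the hypothesis across $\sigma$: since $\sigma$ is a bijection, $\{\mathsf{d}_j(i),j\}=\{\mathsf{d}_l(k),l\}$ forces $\sigma(\{\mathsf{d}_j(i),j\})=\sigma(\{\mathsf{d}_l(k),l\})$, so the two original compositions delete the same pair of directions, with each letter $\alpha,\beta$ attached to the same direction on both sides. Because $Q$ is a genuine precubical set, its original face maps satisfy the condition of Lemma~\ref{lem:AltPreCubRel}; applied to the relevant index quadruple this yields the equality of the two original compositions, and therefore $\alpha'_i\beta'_j(q)=\beta'_k\alpha'_l(q)$. The main obstacle I anticipate is purely the permutation bookkeeping in the middle step --- in particular making Lemma~\ref{lem:B} bear on the \emph{outer} face map, whose index $\sigma(q')(i)$ refers to the already-faced cell $q'$ rather than to $q$; once that identity is in hand, the remainder is a bijection-chasing formality.
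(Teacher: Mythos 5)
Your proof is correct and follows essentially the same route as the paper's: reduce to the criterion of Lemma~\ref{lem:AltPreCubRel}, unfold the primed face maps into original ones, and apply Lemma~\ref{lem:B} to identify the deleted directions as $\sigma(q)(\mathsf{d}_j(i)),\sigma(q)(j)$ and $\sigma(q)(\mathsf{d}_l(k)),\sigma(q)(l)$, then transport the hypothesis through the bijection $\sigma(q)$. Your explicit insistence that each letter $\alpha,\beta$ must land on the \emph{same} direction on both sides corresponds exactly to the ordered form $(\mathsf{d}_j(i),j)=(l,\mathsf{d}_l(k))$ that the paper itself uses in this proof (the precise reading of the set-equality condition in Lemma~\ref{lem:AltPreCubRel}, which for $\alpha\neq\beta$ would otherwise admit the false aligned case $(i,j,i,j)$).
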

\begin{proof}
	We will use the criterion in Lemma \ref{lem:AltPreCubRel}.
	Choose $i,j,k,l$ such that $(\mathsf{d}_j(i),j)=(l,\mathsf{d}_l(k))$. We have
	\[
		\alpha'_i(\beta'_j(q))
		=\alpha'_i(\beta_{\sigma(q)(j)}(q))
		=\alpha_{\sigma(\beta_{\sigma(q)(j)}(q))(i)}(\beta_{\sigma(q)(j)}(q))
	\]
	and
	\[
		\beta'_k(\alpha'_l(q))
		=\beta'_k(\alpha_{\sigma(q)(l)}(q))
		=\beta_{\sigma(\alpha_{\sigma(q)(l)}(q))(k)}(\beta_{\sigma(q)(l)}(q))		
	\]
	Since (Lemma \ref{lem:B})
	\[
		\mathsf{d}_{\sigma(q)(j)}(\sigma(\beta_{\sigma(q)(j)}(q))(i))
		=\sigma(q)(\mathsf{d}_j(i))
		=\sigma(q)(l),
	\]
	and
	\[
		\mathsf{d}_{\sigma(q)(l)}(\sigma(\alpha_{\sigma(q)(l)}(q))(k))
		=\sigma(q)(\mathsf{d}_l(k))
		=\sigma(q)(j),
	\]
	the conclusion follows.
\end{proof}

\section{Proofs for Sec.~\ref{se:conc}}\label{sec_proofs_conc}

\begin{proof}[Proof of Theorem~\ref{th_non_sculptures}]
The two first examples of the theorem are 2-dimensional HDAs which are also their own history unfoldings.

\begin{figure}[tbp]
  \begin{minipage}[b]{.5\linewidth}
  \centering
  \begin{tikzpicture}[>=stealth', x=3cm, y=2cm]
    \begin{scope}[xshift=20em]
      \coordinate (000') at (0,0);
      \coordinate (001') at (1,0);
      \coordinate (010') at (.3,.3);
      \coordinate (100') at (0,1);
      \coordinate (011a') at (1.5,.15);
      \coordinate (011b') at (1.3,.3);
      \coordinate (101') at (1,1);
      \coordinate (110') at (.3,1.3);
      \coordinate (111') at (1.3,1.3);
      \fill[gray!20] (000') -- (001') -- (011a') -- (111') -- (011b') --
      (010') -- (000');
      \fill[gray!35] (000') -- (010') -- (011b') -- (111') -- (101') --
      (100') -- (000');
      \fill[gray!45] (100') -- (101') -- (111') -- (110') -- (100');

      \node[state, initial] (000) at (000') {};
      \foreach \a in {001,010,100,011a,011b,101,110,111}
      \node[state] (\a) at (\a') {};

      \node at (011a.east) {\;\;\;\;$q_0^1$};
      \node at (011b.east) {\;\;\;\;$q_0^2$};

      \path (000) edge node[below] {$q_1^1$} (001);
      \path (000) edge[densely dashed] node[above] {$q_1^2$} (010);
      \path (000) edge (100);
      \path (001) edge node [below] {$q_1^4$} (011a);
      \path (001) edge (101);
      \path (010) edge[densely dashed] (110);
      \path (010) edge[densely dashed] node [above] {$q_1^3$} (011b);
      \path (100) edge (110);
      \path (100) edge (101);
      \path (011a) edge (111);
      \path (011b) edge (111);
      \path (101) edge (111);
      \path (110) edge (111);
    \end{scope}
  \end{tikzpicture}
  \caption{%
    \label{fig:broken2}
    The broken box example of non-sculpture with needed annotations.}
\end{minipage}
\hfill
\begin{minipage}[b]{.4\linewidth}
   \centering
  \begin{tikzpicture}[>=stealth', x=.9cm, y=.9cm]
    \node[state, initial] (9) at (2,1) {};
    \node[state] (5) at (2,2) {};
    \node[state] (10) at (3,1) {};
    \node[state] (6) at (3,2) {};
    \path (9) edge node[left] {$c^1$} (5);
    \path (5) edge node[below] {$b^2$} (6);
    \path (9) edge node[below] {$b^1$} (10);
    \path (10) edge node[left] {$c^2$} (6);

    \node[state] (1) at (1,3) {};
    \node[state] (2) at (3,3) {$q_{0}^{2}$};
    \node[state] (3) at (4,3) {};
    \node[state] (7) at (4,2) {$q_{0}^{1}$};
    \node[state] (11) at (1,0) {};
    \node[state] (12) at (4,0) {};

    \path (11) edge node[left] {$c^4$} (1);
    \path (1) edge node[above] {$b^3$} (2);
    \path (2) edge (3);
    \path (7) edge (3);
    \path (11) edge node[below] {$b^4$} (12);
    \path (12) edge node[right] {$c^3$} (7);

    \path (5) edge node[right] {\;$a^{2}$} (1);
    \path (6) edge node[left] {$a^{1}$} (2);
    \path (6) edge node[below] {$a^{5}$} (7);
    \path (9) edge node[right] {\;$a^{3}$} (11);
    \path (10) edge node[left] {$a^{4}$\;} (12);

    \path (2) edge[-, densely dashed] (7);
  \end{tikzpicture}
  \caption{%
    \label{fig:nosculpt2}
    A one-dimensional acyclic HDA which cannot be sculpted.}
\end{minipage}
\end{figure}

To show that the broken box cannot be sculpted (refer to Figure~\ref{fig:broken2} for annotations) we apply the labeling strategy described in Section~\ref{sec_decision}.
First we apply the unfolding procedure \hintost\ and for the two problematic corner states $q_{0}^{1}$ and $q_{0}^{2}$ we obtain the following ST-configurations $\hintost(\pi_{1})=(\{q_{1}^{1},q_{1}^{4}\},\{q_{1}^{1},q_{1}^{4}\})$ respectively $\hintost(\pi_{2})=(\{q_{1}^{2},q_{1}^{3}\},\{q_{1}^{2},q_{1}^{3}\})$, where $\pi_{1}$ is the lower rooted path ending in $q_{0}^{1}$ and $\pi_{2}$ is the other lower path ending in $q_{0}^{2}$.
    
The second step is to apply the minimal equivalence \eventEquivHDAs, since this is required for any HDA. Applying \eventEquivHDAs\ on our example equates $q_{1}^{1}\eventEquivHDAs q_{1}^{3}$ because of the three squares: front, top, back, which share horizontal faces. (Transitivity of the equivalence was applied.)
The same argument equates $q_{1}^{2}\eventEquivHDAs q_{1}^{4}$, this time going through the squares left-side, top, right-side.

We now see that through $\rho_{\eventEquivHDAs}$ we have labeled both $q_{0}^{1}$ and $q_{0}^{2}$ with the same label $(\{[q_{1}^{2}],[q_{1}^{3}]\},\{[q_{1}^{2}],[q_{1}^{3}]\})$, made of equivalence classes.
However, for a sculpture we cannot have two cells labeled the same.

Showing that the example of Fig.~\ref{fig:speedAngelDemon} is similar and is enough to look at the transitions labeled with $d$. After applying the minimal equivalence the first two lower states where the lower $d$-transitions (call these $q_{1}^{1}$ and $q_{1}^{2}$) end are labeled with $([q_{1}^{1}],[q_{1}^{1}])$ and $([q_{1}^{2}],[q_{1}^{2}])$. But these are equated by the minimal equivalence due to the two squares that share the upper $d$-transition.

The example from Fig.~\ref{fig:nosculpt} is a one-dimensional acyclic HDA that cannot be sculpted (refer to Figure~\ref{fig:nosculpt2} for annotations used in this argument), which also shows that no two-dimensional structure is needed
for things to turn problematic: already in dimension~$1$ there are acyclic HDA which cannot be sculpted.
Our algorithm detects this without using the minimal equivalence \eventEquivHDAs, because this is not applicable for this example.
However, there are several homotopy pairs of length 2, i.e., called interleaving squares.
Each interleaving square forces the equating of their parallel transitions. In this example, the horisontal transitions of the inner interleaving square, as well as the two ones (upper and lower) connected to it equate the four transition cells that we named $b^{1}\sim b^{2}\sim b^{3}\sim b^{4}$. Similarly, we must equate the vertical transitions of the inner interleaving square, and the two (left and right) connected to it, making $c^{1}\sim c^{2}\sim c^{3}\sim c^{4}$.
Now the four outer interleaving squares that we already treated the $b$ and $c$ transitions have in common the parallel transitions labeled by $a^{1}\sim a^{2}\sim a^{3}\sim a^{4}\sim a^{5}$. 
This necessary equivalence 
makes the two states connected with a dashed line to be identified because they now receive the same ST-configuration as label $(\{[a^{1}],[b^{1}],[c^{1}]\},\{[a^{1}],[b^{1}],[c^{1}]\})$, which cannot be. Moreover, there is no backtracking possible because for the interleaving squares there is only one possible way to equate their transitions; unlike for longer homotopy pairs where we can try several possible equating alternatives, as it was the case in Example~\ref{exa:backtrack} with Figure~\ref{fig:backtrack}.
\end{proof}

\end{document}